\newtheorem{theorem}{Theorem}[section]
\newtheorem{follow}[theorem]{Corollary}
\newtheorem{pr}[theorem]{Proposition}
\theoremstyle{definition}
\newcommand{\bel}{\begin{equation} \label}
\newcommand{\ee}{\end{equation}}
\newcommand{\rd}{{\mathbb R}^{2}}
\newcommand{\re}{{\mathbb R}}
\newcommand{\N}{{\mathbb N}}
\newcommand{\ce}{{\mathbb C}}
\def\beq{\begin{equation}}
\def\eeq{\end{equation}}
\newcommand{\bea}{\begin{eqnarray}}
\newcommand{\eea}{\end{eqnarray}}
\newcommand{\beas}{\begin{eqnarray*}}
\newcommand{\eeas}{\end{eqnarray*}}
\begin{document}
\begin{center}
{\Large \bf  Discrete Spectrum of Quantum Hall Effect Hamiltonians
I. Monotone Edge Potential}

\medskip

{\sc Vincent Bruneau, Pablo Miranda, Georgi Raikov}

\medskip

\today
\end{center}

\bigskip

{\bf Abstract.} {\small We consider the unperturbed operator $H_0
: = (-i\nabla - {\bf A})^2 + W$, self-adjoint in $L^2(\rd)$. Here
$A$ is a magnetic potential which generates a constant magnetic
field $b>0$, and the edge potential $W$ is a non-decreasing non
constant bounded function depending only on the first coordinate
$x \in \re$ of $(x,y) \in \rd$.  Then the  spectrum of $H_0$ has a
band structure and is absolutely continuous; moreover, the
assumption $\lim_{x \to \infty}(W(x) - W(-x)) < 2b$ implies the
existence of infinitely many spectral gaps for $H_0$. We consider
the perturbed operators $H_{\pm} = H_0 \pm V$ where the electric
potential $V \in L^{\infty}(\rd)$ is non-negative and decays at
infinity. We investigate the asymptotic distribution of the
discrete spectrum of $H_\pm$ in the spectral gaps of $H_0$. We
introduce an effective Hamiltonian which governs the main
asymptotic term; this Hamiltonian involves a pseudo-differential
operator with generalized anti-Wick symbol equal to $V$. Further,
we restrict our attention on perturbations $V$ of compact support
and constant sign. We establish a geometric condition on the
support of $V$ which guarantees the finiteness of the eigenvalues
of $H_{\pm}$ in any spectral gap of $H_0$. In the case where this
condition is violated, we show that, generically, the convergence
of the infinite series of eigenvalues of $H_+$ (resp. $H_-$) to
the left (resp. right) edge of a given spectral gap, is
Gaussian.}\\

{\bf Keywords}: magnetic Schr\"odinger operators, spectral gaps,
eigenvalue distribution\\

{\bf  2010 AMS Mathematics Subject Classification}:  35P20, 35J10,
47F05, 81Q10\\

\section{Introduction}
\label{section1} \setcounter{equation}{0}

    The general form of the unperturbed operators we are going to
    consider in the present article and its eventual second part, is
    $$
    H_0 = H_0(b,W) : = -\frac{\partial^2}{\partial x^2} +
    \left(-i\frac{\partial}{\partial y} - bx\right)^2 + W(x).
    $$
Here $b>0$ is the constant magnetic field, and the edge potential
$W \in L^{\infty}(\re; \re)$ is independent of $y$. The
self-adjoint operator $H_0$ is defined initially on
$C_0^{\infty}(\rd)$ and then is closed in $L^2(\rd)$.  Let
${\mathcal F}$ be the partial Fourier transform with respect to
$y$, i.e.
$$
({\mathcal F}u)(x,k) = (2\pi)^{-1/2} \int_\re e^{-iyk} u(x,y) dy,
\quad u \in L^2(\rd).
$$
Then we have
$$
{\mathcal F} H_0 {\mathcal F}^* = \int_\re^{\oplus} h(k) dk
$$
where the operator
$$
h(k) : = - \frac{d^2}{dx^2} + (bx-k)^2 + W(x), \quad k \in \re,
$$
is self-adjoint in $L^2(\re)$. For $w \in L^2(\re)$ and $k \in
\re$ set $(\tau_k w)(x) : = w(x-k/b)$. Evidently $\tau_k$ is a
unitary operator in $L^2(\re)$, and we have $\tau_k^* h(k) \tau_k
= \tilde{h}(k)$ where
$$
\tilde{h}(k) : = - \frac{d^2}{dx^2} + b^2 x^2  +  W(x + k/b),
\quad k \in \re.
$$
Evidently, for each $k \in \re$ the operator $h(k)$ (and, hence,
$\tilde{h}(k)$) has a discrete and simple spectrum. Let
$\left\{E_j(k)\right\}_{j=1}^{\infty}$ be the increasing sequence
of the eigenvalues of $h(k)$ (and, hence, of $\tilde{h}(k)$). The
Kato analytic perturbation theory \cite{K} implies that $E_j(k)$,
$j \in {\mathbb N}$, are real analytic functions of $k \in \re$.
When we need to indicate the dependence of $E_j(k)$  on $b$ and/or
$W$, we will  write $E_j(k;b,W)$ or $E_j(k;W)$ instead of
$E_j(k)$. Note that if $W=0$, then the eigenvalues are independent
of $k$, and their explicit form is well-known:
$$
E_j(k;b,0) = E_j(b,0) = b (2j-1), \quad k \in \re, \quad j \in {\mathbb N}.
$$
Further,
    \bel{1}
    \sigma(H_0) = \bigcup_{j=1}^{\infty} \overline{E_j(\re)}.
    \ee
In the present article we will consider monotone $W$. For
definiteness we assume that $W$ is non-decreasing.
    Then the band functions $E_j$, $j \in {\mathbb N}$, are also non-decreasing, and {$\sigma(H_0) = \bigcup_{j=1}^{\infty} [{\mathcal E}_j^-, {\mathcal E}_j^+]$ with}
 \bel{6}
   {\mathcal E}_j^- =  \lim_{k \to -\infty} E_j(k) = b(2j-1) + W_-, \quad
     {\mathcal E}_j^+ = \lim_{k \to \infty} E_j(k) = b(2j-1) + W_+,
    \ee
    $$
 {   W_- := \lim_{x \to -\infty} W(x), \quad W_+ := \lim_{x \to \infty}  W(x).}
    $$
    (see Proposition \ref{p21} below).
    Throughout the article we assume that $W_- < W_+$, i.e.
$W$ is not identically constant. Hence, ${\mathcal E}_j^- <
{\mathcal E}_j^+$ for each $j \in {\mathbb N}$, which implies that
the spectrum of $H_0$ is absolutely continuous. Moreover, we will assume
that
    \bel{5}
    W_+ - W_- < 2b.
    \ee
    Then we have
    \bel{4}
{\mathcal E}_j^+ < {\mathcal E}_{j+1}^-, \quad j \in {\mathbb N},
    \ee
and the intervals $({\mathcal E}_j^+, {\mathcal E}_{j+1}^-)$, $j
\in {\mathbb N}$, are open gaps in the spectrum of $H_0$. Thus all
the  bands in the spectrum of $H_0$ are separated by gaps where discrete
spectrum may appear under appropriate perturbations. \\
The perturbations under consideration will be electric potentials
$V : \rd \to \re $ which are $\Delta$-compact. A simple sufficient
condition which guarantees the compactness of the operator $V
(-\Delta - i)^{-1}$, is
$$
V \in L^{\infty}_0(\rd) : = \{u \in L^{\infty}(\rd) \, | \, u(x,y)
\to 0 \; {\rm as} \; x^2 + y^2 \to \infty\}.
$$
By the diamagnetic inequality, the operator $V (H_0 - i)^{-1}$ is
also compact, and hence
$$
\sigma_{\rm ess}(H_0 + V) = \sigma_{\rm ess}(H_0) =
\bigcup_{j=1}^{\infty} [{\mathcal E}_j^-, {\mathcal E}_j^+].
$$
 For simplicity, we will consider perturbations of definite
sign. More precisely we will suppose that $V \geq 0$, and will
consider the operators $H_{\pm} : = H_0 \pm V$. Note that in the
case of positive (resp. negative) perturbations, the discrete
eigenvalues of the perturbed operator which may appear in a given
open gap of the spectrum of the unperturbed operator, may
accumulate only to the left (resp. right) end of the gap. \\
In order to give a more explicit formulation of the problem, we
need the following notations.  Let $T$ be a self-adjoint linear
operator in a Hilbert space. Denote by ${\mathbb P}_{\mathcal
O}(T)$ the spectral projection of $T$ corresponding to the Borel
set ${\mathcal O} \subseteq \re$. For $\lambda > 0$ set
$$
{\mathcal N}_0^-(\lambda) : = {\rm rank}\,{\mathbb P}_{(-\infty,
{\mathcal E}^-_1-\lambda)}(H_-).
$$
Next, fix $j \in {\mathbb N}$ and assume that \eqref{5} holds.
Pick $\lambda \in (0, {\mathcal E}^-_{j+1}-{\mathcal E}^+_j)$, and
set
$$
{\mathcal N}_j^-(\lambda) : = {\rm rank}\,{\mathbb P}_{({\mathcal
E}^+_j, {\mathcal E}^-_{j+1}-\lambda)}(H_-), \quad
{\mathcal N}_j^+(\lambda) : = {\rm rank}\,{\mathbb P}_{({\mathcal
E}^+_j+\lambda, {\mathcal E}^-_{j+1})}(H_+).
$$
We reduce the investigation of the accumulation of the discrete
eigenvalues to the edges of the gap $({\mathcal E}^+_j, {\mathcal
E}^-_{j+1})$ of its essential spectrum,  to the study of the
asymptotic
behavior as $\lambda \downarrow 0$ of the counting functions ${\mathcal N}_j^\pm(\lambda)$.\\
The investigation of the asymptotic behavior of the discrete
spectrum of perturbed analytically fibered quantum Hamiltonians,
lying in the gaps of the essential one has a long history.
Probably, the first results of this type were obtained for the
Schr\"odinger operator with periodic potential perturbed by a
decaying one (see e. g. \cite{z, khr, r2, sch}). Recently, similar
problems have been considered for perturbed 2D magnetic
Hamiltonians \cite{brs}, and for Dirichlet Laplacians in twisted
wave-guides \cite{bkrs}. The common feature of the above cited
articles is that the edges of the gaps in the spectrum of the
unperturbed operator coincide with  the extremal values of the
band functions taken at local non degenerate extrema; in this case
the arising effective Hamiltonian is a differential Schr\"odinger
type operator. In the present article the edges of the gaps are
the limiting values of the band functions $E_j(k)$ as $k \to \pm
\infty$. {The effective Hamiltonian} which arises in this case
involves a ``kinetic" part equal in the momentum representation to the multiplier by $E_j$, and a
``potential" part which is a pseudodifferential operator
($\Psi$DO) with contravariant (generalized anti-Wick) symbol equal
to $V$.  These $\Psi$DOs are unitarily equivalent to the
Berezin-Toeplitz operators which appear as effective Hamiltonians
in the study of compact perturbations of the Landau operator (see
e. g. \cite{r1, rw}). Note however that in the case of the Landau
operator (which is equal to $H_0$ with $W=0$) the effective
Hamiltonian has no kinetic part.

 The article is organized as follows. In Section
\ref{section2} we describe the basic spectral properties of the
unperturbed operators which we need in the sequel. In Section
\ref{section3} we introduce the effective Hamiltonian appropriate
for the asymptotic analysis as $\lambda \downarrow 0$ of the
function ${\mathcal N}_j^{\pm}(\lambda)$ with fixed $j \in
{\mathbb N}$. Our effective Hamiltonian approach allows us to
consider various types of $W$ and $V$ which satisfy the
assumptions stated above. Nonetheless, the rest of the article is
dedicated to the case where $V \in L^{\infty}_0(\rd;\re)$ has a
compact support. This choice is motivated by the possible
applications in the theory of the quantum Hall effect (see e. g.
\cite{chs, hs, cg}{)}, and, on the other hand, by the spectacular
progress in the investigation of the discrete spectrum for
localized perturbations of the Landau Hamiltonian $H_0(b,0)$ (see
e. g. \cite{rw, mroz, proz, fp, rozt, pe, proz2}). For
definiteness, we suppose that $V \geq 0$ and discuss only the
behavior of the counting functions ${\mathcal N}_j^{+}(\lambda)$,
$j \in {\mathbb N}$, near the lower edges of the spectral gaps; in
the case $V \leq 0$ the behavior of ${\mathcal N}_j^{-}(\lambda)$
near the upper edges is analogous. In Section \ref{section4} we
establish a sufficient condition of geometric nature which
guarantees that all the functions ${\mathcal N}_j^{+}(\lambda)$,
$j \in {\mathbb N}$, remain bounded as  $\lambda \downarrow 0$,
i.e. that there is a finite number of eigenvalues of $H_+$ in any
gap of its essential spectrum. When this sufficient condition is
violated, we show that for any $j \in {\mathbb N}$ the functions
${\mathcal N}_j^{+}(\lambda)$ generically blow up as $\lambda
\downarrow 0$. More precisely, in Section \ref{section5}
we reduce the analysis of ${\mathcal N}_j^{+}(\lambda)$ to counting
functions for operators in holomorphic spaces. These operators are
studied in Section \ref{section6} in order to
establish a  lower asymptotic estimate
    \bel{lau2}
    {\mathcal C}_-  |\ln{\lambda}|^{1/2} (1 + o(1)) \leq  {\mathcal
N}_j^{+}(\lambda), \quad \lambda \downarrow 0, \quad j \in
{\mathbb N},
    \ee
    with ${\mathcal C}_- > 0$ which holds when the sufficient condition of Section \ref{section4} is not fulfilled,
    and  an upper
asymptotic estimate
    \bel{lau3}
 {\mathcal N}_j^{+}(\lambda) \leq {\mathcal C}_+  |\ln{\lambda}|^{1/2} (1 + o(1)), \quad \lambda \downarrow 0,
 \quad j \in {\mathbb N},
    \ee
with ${\mathcal C}_+ > {\mathcal C}_-$. Note that the constants
${\mathcal C}_{\pm}$ in \eqref{lau2} and \eqref{lau3} admit a
clear geometric interpretation and are independent of $j \in
{\mathbb N}$. Thus, in the case of infinitely many eigenvalues in
any given gap, the main asymptotic term of ${\mathcal
N}_j^{+}(\lambda)$ is expected to be of order
$|\ln{\lambda}|^{1/2}$ which, loosely speaking, corresponds to a
Gaussian convergence of the discrete eigenvalues to the edges of
the gaps of the essential spectrum. This behavior is different
from the case of compactly supported perturbations of the Landau
Hamiltonian where typically we have
    \bel{lau4}
    {\mathcal N}_j^{+}(\lambda) \sim
\frac{|\ln{\lambda}|}{\ln{|\ln{\lambda}|}}, \quad \lambda
\downarrow 0, \quad j \in {\mathbb N},
    \ee
(see e.g. \cite{rw}). Hopefully, in a future work we will attack
the problem of finding the main asymptotic term as $\lambda
\downarrow 0$ of ${\mathcal N}_j^{\pm}(\lambda)$, $j \in {\mathbb
N}$.

\section{Basic spectral properties of $H_0$}
\label{section2} \setcounter{equation}{0}
In the following proposition we consider the general properties of the band functions $E_j$, $j \in {\mathbb N}$.
By analogy with the operator $\tilde{h}(k)$, introduce the shifted harmonic oscillator
$$
\tilde{h}_{\infty} : = -\frac{d^2}{dx^2} + b^2 x^2 +W_+,
$$
 which is
self-adjoint in $L^2(\re)$, and essentially self-adjoint on
$C_0^{\infty}(\re)$.
    \begin{pr} \label{p21}
    Assume that $W$ is  non-decreasing and bounded. Then for
    each $j \in {\mathbb N}$ the eigenvalue $E_j(k)$ is a
    non-decreasing function of $k \in \re$, and  \eqref{6} holds true.
\end{pr}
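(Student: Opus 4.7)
The plan is to work throughout with the unitarily equivalent operator $\tilde{h}(k)= -\frac{d^{2}}{dx^{2}}+b^{2}x^{2}+W(x+k/b)$, since this representation pins the harmonic oscillator part and lets $k$ act only through a translation of the edge potential.  All statements about $E_j(k)$ can then be read off from min-max applied to $\tilde h(k)$.

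\textbf{Monotonicity.}  If $k_{1}<k_{2}$, then $W(x+k_{1}/b)\le W(x+k_{2}/b)$ for every $x\in\re$ because $W$ is non-decreasing.  On the common form domain of the harmonic oscillator this gives $\tilde h(k_{1})\le \tilde h(k_{2})$ in the quadratic-form sense, so by min-max $E_{j}(k_{1})\le E_{j}(k_{2})$ for every $j\in\mathbb{N}$.

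\textbf{Upper bound for the limits.}  Since $W(x+k/b)\le W_{+}$ and $W(x+k/b)\ge W_{-}$ for all $x,k$, form comparison yields $\tilde h(k)\le\tilde h_{\infty}$ and $\tilde h(k)\ge -\frac{d^{2}}{dx^{2}}+b^{2}x^{2}+W_{-}$.  By min-max applied to these harmonic oscillators we obtain
\[
b(2j-1)+W_{-}\le E_{j}(k)\le b(2j-1)+W_{+}.
\]
Combined with the monotonicity of $E_{j}$, this already shows that the two limits $\lim_{k\to\pm\infty}E_{j}(k)$ exist and satisfy the correct inequalities; it remains to prove they are attained.

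\textbf{Lower bound as $k\to+\infty$ (and symmetric upper bound as $k\to-\infty$).}  Fix $\varepsilon>0$ and $R>0$.  Because $W$ is non-decreasing with $\lim_{y\to+\infty}W(y)=W_{+}$, there exists $k_{0}$ such that $W(x+k/b)\ge W_{+}-\varepsilon$ for $|x|\le R$ and $k\ge k_{0}$.  Setting
\[
W_{R,\varepsilon}(x):=(W_{+}-\varepsilon)\mathbf{1}_{[-R,R]}(x)+W_{-}\mathbf{1}_{\re\setminus[-R,R]}(x),
\]
we have $\tilde h(k)\ge H_{R,\varepsilon}:=-\frac{d^{2}}{dx^{2}}+b^{2}x^{2}+W_{R,\varepsilon}$ for $k\ge k_{0}$, hence $E_{j}(k)\ge E_{j}(H_{R,\varepsilon})$.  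Letting $R\to\infty$, $W_{R,\varepsilon}\to W_{+}-\varepsilon$ pointwise and boundedly, which together with the compactness of $(\tilde h_{\infty}-i)^{-1}$ gives norm-resolvent convergence $H_{R,\varepsilon}\to \tilde h_{\infty}-\varepsilon$, and hence $E_{j}(H_{R,\varepsilon})\to b(2j-1)+W_{+}-\varepsilon$.  Sending first $R\to\infty$ then $\varepsilon\downarrow 0$ gives $\liminf_{k\to+\infty}E_{j}(k)\ge b(2j-1)+W_{+}$, which combined with the upper bound proves the right-hand equality in \eqref{6}.  The case $k\to-\infty$ is entirely parallel, using the decomposition $W_{R,\varepsilon}^{-}(x)=(W_{-}+\varepsilon)\mathbf{1}_{[-R,R]}(x)+W_{+}\mathbf{1}_{\re\setminus[-R,R]}(x)$ as a form upper bound for $\tilde h(k)$.

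\textbf{Main obstacle.}  The only non-mechanical step is the convergence $E_{j}(H_{R,\varepsilon})\to b(2j-1)+W_{+}-\varepsilon$ as $R\to\infty$.  It relies on the fact that the difference $W_{R,\varepsilon}-(W_{+}-\varepsilon)=(W_{-}-W_{+}+\varepsilon)\mathbf{1}_{|x|>R}$ is uniformly bounded and tends to $0$ pointwise, and that the resolvent of the harmonic oscillator is compact; composing a compact operator with a bounded multiplier converging strongly to zero produces a norm-convergent sequence, which upgrades strong resolvent convergence to norm resolvent convergence and thus to convergence of the discrete eigenvalues.  Everything else is a direct application of the min-max principle.
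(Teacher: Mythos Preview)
Your proof is correct, and the key technical input---that a uniformly bounded multiplication operator converging strongly to zero composed with a compact resolvent converges to zero in norm---is exactly the fact the paper uses.  The organization, however, is different.  The paper applies this compactness observation directly to the resolvent difference
\[
(\tilde h(k)+E)^{-1}-(\tilde h_\infty+E)^{-1}
=(\tilde h(k)+E)^{-1}\bigl(W_+-W(\cdot+k/b)\bigr)(\tilde h_\infty+E)^{-1},
\]
obtaining norm resolvent convergence as $k\to\infty$ in one stroke and hence convergence of each $E_j(k)$.  You instead sandwich $\tilde h(k)$ by comparison operators $H_{R,\varepsilon}$ via min-max, then use the compactness argument on $H_{R,\varepsilon}\to\tilde h_\infty-\varepsilon$ as $R\to\infty$, and finally let $\varepsilon\downarrow0$.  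This works but introduces two extra limit passages where one suffices; the benefit of your organization is that the preliminary bounds $b(2j-1)+W_-\le E_j(k)\le b(2j-1)+W_+$ make the existence of the limits transparent before you compute them, and half of \eqref{6} comes for free from pure form comparison.
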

    \begin{proof}
    The fact that $E_j$ are non-decreasing bounded functions of $k$
    follows directly from the mini-max principle. Let us prove (\ref{6}).
Pick $E > - b - W_-$. Then for each $k \in \re$ we have
$
-E < b + W_- \leq \inf \sigma(\tilde{h}(k))$.
Moreover,
$-E < b + W_+ = \inf \sigma(\tilde{h}_{\infty})$.
 Then,
    $$
    |(E_j(k) + E)^{-1} - (b(2j-1) + W_+ + E)^{-1}| \leq
    $$
    $$
    \|(\tilde{h}(k)+E)^{-1}(W_+ - W(\cdot +
    k/b))(\tilde{h}_{\infty}+E)^{-1}\| \leq
    $$
    \bel{8}
\|(\tilde{h}(k)+E)^{-1}\| \|(W_+ - W(\cdot +
    k/b))(\tilde{h}_{\infty}+E)^{-1}\|.
    \ee
    Moreover,
    \bel{9}
\|(\tilde{h}(k)+E)^{-1}\| \leq (E+b+W_-)^{-1},
    \ee
    and the r.h.s. is $k$-independent. Further, the multiplier by
$(W_+ - W(\cdot + k/b))$, $x \in\re$, tends strongly to zero as $k
\to \infty$, while the operator $(\tilde{h}_{\infty}+E)^{-1}$ is
compact and $k$-independent. Hence, the operator $(W_+ - W(\cdot +
    k/b))(\tilde{h}_{\infty}+E)^{-1}$ tends uniformly to zero as $k \to
    \infty$. Now, \eqref{8} -- \eqref{9} imply
    $$
    \lim_{k \to \infty}(E_j(k) + E)^{-1}= (b(2j-1) + W_+ + E)^{-1},
    \quad j \in {\mathbb N},
    $$
    which  yields the second limit in \eqref{6}. The first one is proved in  the same manner.
        \end{proof}
Our next theorem will play a crucial role in the construction of
the effective Hamiltonian introduced in the next section. For its
formulation we need the following notations.  Fix $k \in \re$ and
$j \in {\mathbb N}$ denote by $\re \ni x \mapsto \psi_j(x;k) \in
\re$ the eigenfunction of the operator $h(k)$ which satisfies
    \bel{defpsij}
h(k) \psi_j(x;k) = E_j(k) \psi_j(x;k), \quad
\|\psi_j(\cdot;k)\|_{L^2(\re)} = 1. \ee Set \bel{lau5} \pi_j(k) :
= \langle \cdot, \psi_j(\cdot;k)\rangle \psi_j(\cdot;k),
    \ee
where $\langle\cdot,\cdot\rangle$ denotes the scalar product in
$L^2(\re)$. Evidently, $\pi_j(k)$ is a rank-one orthogonal
projection acting in $L^2(\re)$.

Fix $j \in {\mathbb N}$ and $k \in \re$, and denote by $\re \ni x
\mapsto \psi_{j,\infty}(x;k) \in \re$ the eigenfunction which
satisfies
    \bel{lau40}
-\psi_{j,\infty}(x;k)^{''} + (bx-k)^2 \psi_{j,\infty}(x;k) =
b(2j-1) \psi_{j,\infty}(x;k), \quad
\|\psi_{j,\infty}(\cdot;k)\|_{L^2(\re)} = 1.
    \ee
The functions $\psi_{j,\infty}$, $j \in {\mathbb N}$, admit a
simple explicit description. Let $\varphi_j$ be the real-valued
eigenfunction satisfying
$$
-\varphi_j''(x) + x^2 \varphi(x) = (2j-1) \varphi_j(x), \quad
\|\varphi_j\|_{L^2(\re)} = 1.
$$
We have
    \bel{61}
\varphi_j(x) =  \frac{H_{j-1}(x) e^{-x^2/2}}{(\sqrt{\pi}2^{j-1}
(j-1)!)^{1/2}}, \quad x \in \re, \quad j \in {\mathbb N},
    \ee
where
$$
{\rm H}_q(x): = (-1)^q e^{x^2} \frac{d^q}{dx^q} e^{-x^2}, \quad x
\in \re, \quad q \in {\mathbb Z}_+,
$$
are the Hermite polynomials. Then
    \bel{sof9}
\psi_{j,\infty}(x;k) = b^{1/4}\varphi_j(b^{1/2}x - b^{-1/2}k), \quad j
\in {\mathbb N}, \quad x \in \re, \quad k \in \re.
    \ee
    Put
    \bel{sof5}
    p_j = p_j(b) : = \frac{ b^{-j+3/2}}{\sqrt{\pi} (j-1)!2^{j-1}}, \quad j \in {\mathbb N}.
    \ee
    Note that we have
    \bel{sof6}
    \psi_{j,\infty}(x;k) = p_j^{1/2} (-k)^{j-1} e^{-(b^{-1/2}k - b^{1/2}x)^2/2}(1+o(1))
     \ee
     as $k \to \infty$, uniformly with respect to $x$ belonging to compact subset of $\re$.
Set
$$
\pi_{j,\infty}(k) : = \langle \cdot,
\psi_{j,\infty}(\cdot;k)\rangle \psi_{j,\infty}(\cdot;k).
    $$

\begin{theorem} \label{teth1}
Fix $j \in {\mathbb N}$. Then we have
    \bel{lau6}
    \lim_{k \to \infty} \left({\mathcal E}_j^+ - E_j(k)\right)^{-1/2} \|\pi_{j,\infty} -
    \pi_j(k)\|_1 = 0
    \ee
    where $\|T\|_1$ denotes the trace-class norm of the operator $T$.
    \end{theorem}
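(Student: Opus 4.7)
The plan is to work under the unitary shift $\tau_k$ introduced in Section~\ref{section2}: let $\tilde{\pi}_{j,\infty}$ denote the rank-one projection onto the $j$-th (fixed) Hermite eigenfunction $\tilde{\varphi}_j$ of $\tilde{h}_\infty$ (with eigenvalue ${\mathcal E}_j^+$), and $\tilde{\pi}_j(k)$ the projection onto the $j$-th normalized eigenfunction $\tilde{\psi}_j(\cdot;k)$ of $\tilde{h}(k) = \tilde{h}_\infty - U_k$, where $U_k := W_+ - W(\cdot + k/b) \geq 0$. Since $\tau_k$ is unitary and the trace norm invariant, and the standard rank-one identity gives $\|\tilde{\pi}_{j,\infty} - \tilde{\pi}_j(k)\|_1 = 2\|(I-\tilde{\pi}_{j,\infty})\tilde{\psi}_j\|$, writing $\delta := {\mathcal E}_j^+ - E_j(k)$ the theorem reduces to proving $\|(I-\tilde{\pi}_{j,\infty})\tilde{\psi}_j\|^2 = o(\delta)$ as $k \to +\infty$.

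The eigenvalue equation rearranges to $(\tilde{h}_\infty - E_j(k))\tilde{\psi}_j = U_k \tilde{\psi}_j$. Hypothesis \eqref{5} ensures that $E_j(k)$ stays in the open interval $({\mathcal E}_{j-1}^+, {\mathcal E}_j^+)$ with uniform distance from every ${\mathcal E}_n^+$, $n \neq j$, so the Fourier--Hermite coefficients $a_n := \langle\tilde{\varphi}_n, \tilde{\psi}_j\rangle$ satisfy $a_n = \langle\tilde{\varphi}_n, U_k\tilde{\psi}_j\rangle/({\mathcal E}_n^+ - E_j(k))$ for $n \neq j$. Applying the weighted Cauchy--Schwarz bound $|\langle\tilde{\varphi}_n, U_k\tilde{\psi}_j\rangle|^2 \leq \langle\tilde{\varphi}_n, U_k\tilde{\varphi}_n\rangle\langle\tilde{\psi}_j, U_k\tilde{\psi}_j\rangle$ (valid because $U_k \geq 0$) then yields
$$
\|(I-\tilde{\pi}_{j,\infty})\tilde{\psi}_j\|^2 \;=\; \sum_{n\neq j}|a_n|^2 \;\leq\; Q(k)\,\langle\tilde{\psi}_j, U_k\tilde{\psi}_j\rangle, \qquad Q(k) := \sum_{n\neq j}\frac{\langle\tilde{\varphi}_n, U_k\tilde{\varphi}_n\rangle}{({\mathcal E}_n^+ - E_j(k))^2}.
$$
The proof then splits into the two estimates (a) $\langle\tilde{\psi}_j, U_k\tilde{\psi}_j\rangle = O(\delta)$ and (b) $Q(k) \to 0$.

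For (b), I would write $Q(k) = \int_{\re} U_k(x)\,G(x;k)\,dx$ with $G(x;k) := \sum_{n \neq j} |\tilde{\varphi}_n(x)|^2/({\mathcal E}_n^+ - E_j(k))^2$; for $k$ large one has $|{\mathcal E}_n^+ - E_j(k)| \geq b|n-j|$, so $G$ is dominated by an integrable function with $\int \sum_{n\neq j}(b|n-j|)^{-2}|\tilde{\varphi}_n|^2 = \sum_{n\neq j}(b|n-j|)^{-2} < \infty$; since $W$ is non-decreasing and bounded, $U_k \to 0$ pointwise while staying bounded, and dominated convergence yields $Q(k) \to 0$. For (a), I compute $\langle\tilde{\psi}_j, \tilde{h}_\infty\tilde{\psi}_j\rangle$ in two ways (using ${\mathcal E}_n^+ - {\mathcal E}_j^+ = 2b(n-j)$ on one side and $\tilde{h}_\infty = \tilde{h}(k) + U_k$ on the other) to obtain the exact identity
$$
\langle\tilde{\psi}_j, U_k\tilde{\psi}_j\rangle - \delta \;=\; 2b\sum_{n\neq j}(n-j)|a_n|^2.
$$
The $n<j$ terms are non-positive; for $n>j$, using ${\mathcal E}_n^+ - E_j(k) \geq 2b(n-j)$ together with the expression for $a_n$ and the Schur-type bound $\|U_k\tilde{\psi}_j\|^2 \leq (W_+-W_-)\langle\tilde{\psi}_j, U_k\tilde{\psi}_j\rangle$ (from $U_k^2 \leq (W_+-W_-)U_k$), a short chain of inequalities yields
$$
\langle\tilde{\psi}_j, U_k\tilde{\psi}_j\rangle - \delta \;\leq\; \tfrac{W_+-W_-}{2b}\,\langle\tilde{\psi}_j, U_k\tilde{\psi}_j\rangle.
$$
The strict gap \eqref{5} makes the coefficient $<1$, so this self-bounding inequality can be solved to give (a) with $C = 2b/(2b-(W_+-W_-))$. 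Combining (a) and (b) gives $\|(I-\tilde{\pi}_{j,\infty})\tilde{\psi}_j\|^2 \leq C\,Q(k)\,\delta = o(\delta)$, as required.

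The main obstacle I anticipate is estimate (a): the naive Cauchy--Schwarz $\|U_k\tilde{\psi}_j\|^2 \leq (W_+-W_-)\langle\tilde{\psi}_j, U_k\tilde{\psi}_j\rangle$ applied to the full resolvent sum only gives $\|(I-\tilde{\pi}_{j,\infty})\tilde{\psi}_j\| = O(\sqrt{\delta})$, which is exactly the borderline; the refinement to $o(\sqrt{\delta})$ requires harvesting the additional factor $1/(n-j)$ hidden inside the denominator $({\mathcal E}_n^+ - E_j(k))^2$ (not available from the trivial resolvent bound) and then invoking the strict gap hypothesis \eqref{5} so that the resulting self-bounding inequality on $\langle\tilde{\psi}_j, U_k\tilde{\psi}_j\rangle$ is contractive. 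Estimate (b) is by contrast a soft dominated-convergence argument that only uses the bounded monotone convergence of $W$ at $+\infty$.
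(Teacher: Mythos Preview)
Your argument is correct and is a genuinely different route from the paper's. The paper introduces the quantity $\Phi_j(k)^2 = \langle \tilde{\varphi}_j, U_k \tilde{\varphi}_j\rangle$ (built from the \emph{unperturbed} Hermite function) and proves two separate statements: Corollary~\ref{tef1}, $\|\tilde{\pi}_{j,\infty}-\tilde{\pi}_j(k)\|_1 = o(\Phi_j(k))$, via the projection/resolvent identities of Proposition~\ref{tep1}, and Proposition~\ref{tep2}, the sharp asymptotics $\delta = \Phi_j(k)^2(1+o(1))$, via a contour-integral computation. You instead expand the \emph{perturbed} eigenfunction $\tilde{\psi}_j$ in the Hermite basis and use the $U_k$-weighted Cauchy--Schwarz inequality together with the exact energy identity $\langle\tilde{\psi}_j,U_k\tilde{\psi}_j\rangle-\delta = 2b\sum_{n\neq j}(n-j)|a_n|^2$; the gap hypothesis~\eqref{5} then makes the resulting self-bounding inequality contractive and gives~(a). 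Your approach is more elementary (no contour integrals, no operator identities of the type in Proposition~\ref{tep1}) and goes straight to the theorem without the intermediate quantity $\Phi_j$. What the paper's route buys, however, is the precise asymptotic $\delta\sim\Phi_j(k)^2$ of Proposition~\ref{tep2}, which is reused later (Proposition~\ref{laup1}, the proof of Theorem~\ref{repr1}, and Proposition~\ref{sanpr3}); your argument yields only the one-sided bound $\langle\tilde{\psi}_j,U_k\tilde{\psi}_j\rangle\le C\delta$ with the specific constant $C=2b/(2b-(W_+-W_-))$, which suffices for Theorem~\ref{teth1} but not for those downstream applications. A minor point: the paper's proof of Theorem~\ref{teth1} does not actually need~\eqref{5} (the relevant resolvents are uniformly bounded for large $k$ simply because $E_n(k)\to{\mathcal E}_n^+$), whereas your contractive step~(a) uses~\eqref{5} essentially; since~\eqref{5} is a standing assumption of the paper this is immaterial here.
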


    We will divide the proof of the theorem into several lemmas and propositions.\\
Set
$$
\tilde{\psi}_j(x;k) : = \psi_j(x+b^{-1}k;k), \quad x \in \re,
\quad k \in \re, \quad j \in {\mathbb N},
$$
(see \eqref{defpsij} for the definition of the function $\psi_j$).
Evidently,
$$
\tilde{h}(k) \tilde{\psi}_j(x;k) = E_j(k) \tilde{\psi}_j(x;k), \quad
\|\tilde{\psi}_j(\cdot;k)\|_{L^2(\re)} = 1.
$$
By analogy with \eqref{lau5} put
$$
\tilde{\pi}_j(k) : = \langle \cdot, \tilde{\psi}_j(\cdot;k)\rangle
\tilde{\psi}_j(\cdot;k),  \quad k \in \re, \quad j \in {\mathbb N}.
$$
Similarly, set
$$
\tilde{\psi}_{j,\infty}(x) : = b^{1/4}\varphi_j(b^{1/2}x), \quad x \in \re,  \quad j \in {\mathbb N},
$$
(see \eqref{61} for the definition of the function $\varphi_j$).
Then
$$
\tilde{h}_{\infty} \tilde{\psi}_{j,\infty}(x) = {\mathcal E}^+_j
\tilde{\psi}_{j,\infty}(x), \quad
\|\tilde{\psi}_{j,\infty}\|_{L^2(\re)} = 1.
$$
Put
$$
\tilde{\pi}_{j,\infty} : = \langle \cdot,
\tilde{\psi}_{j,\infty}\rangle \tilde{\psi}_{j,\infty},  \quad j
\in {\mathbb N}.
$$

Since we have
$$
\tau_k \tilde{\pi}_j(k) \tau_k^* = \pi_j(k), \quad \tau_k \tilde{\pi}_{j,\infty} \tau_k^* = \pi_{j,\infty}(k),
\quad k \in \re, \quad j \in {\mathbb N},
$$
relation \eqref{lau6} is equivalent to
 \bel{te1}
    \lim_{k \to \infty} \left({\mathcal E}_j^+ - E_j(k)\right)^{-1/2} \|\tilde{\pi}_{j,\infty} -
    \tilde{\pi}_j(k)\|_1 = 0.
    \ee
For $z \in {\mathbb
C}\setminus(\sigma(\tilde{h}_{\infty})\setminus\{{\mathcal
E}_j^+\})$ set
$$
R_{0,j}^{\perp}(z) : = (\tilde{h}_{\infty} - z)^{-1}
(I-\tilde{\pi}_{j,\infty}).$$ Similarly, for $z \in {\mathbb
C}\setminus(\sigma(\tilde{h}(k))\setminus\{E_{j}(k)\})$ put
$$
R_j^{\perp}(z) : = (\tilde{h}(k) - z)^{-1} (I-\tilde{\pi}_{j}(k)).
$$
Set
$$
U_k(x) : = W_+ - W(x + k/b)= \tilde{h}_{\infty} - \tilde{h}(k),
\quad x \in \re, \quad k \in \re.
$$
\begin{pr} \label{tep1}
We have
    \bel{te4_0}
    \tilde{\pi}_{j,\infty}= \tilde{\pi}_{j,\infty} \tilde{\pi}_j(k) -  \tilde{\pi}_{j,\infty} U_k R_j^{\perp}({\mathcal E}_j^+)
=  \tilde{\pi}_j(k)\tilde{\pi}_{j,\infty} - R_j^{\perp}({\mathcal
E}_j^+) U_k \tilde{\pi}_{j,\infty},
    \ee
    \bel{te4_k}
 \tilde{\pi}_j(k)= \tilde{\pi}_j(k)\tilde{\pi}_{j,\infty}  + \tilde{\pi}_j(k) U_k R_{0,j}^{\perp}(E_{j})
= \tilde{\pi}_{j,\infty} \tilde{\pi}_j(k) + R_{0,j}^{\perp}(E_{j})
U_k \tilde{\pi}_j(k).
    \ee
    \end{pr}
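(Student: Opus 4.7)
The plan is to derive each of the four identities from a single algebraic observation: the eigenvalue equations for $\tilde{\psi}_{j,\infty}$ and $\tilde{\psi}_j(\cdot;k)$, combined with the decomposition $\tilde{h}_{\infty} = \tilde{h}(k) + U_k$, yield intertwining relations that relate $\tilde{\pi}_{j,\infty}$ to $\tilde{\pi}_j(k)$ via the reduced resolvents $R_j^{\perp}(\mathcal{E}_j^+)$ and $R_{0,j}^{\perp}(E_j(k))$.

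For \eqref{te4_0}, I would start from the two-sided identity $\tilde{h}_{\infty}\tilde{\pi}_{j,\infty} = \tilde{\pi}_{j,\infty}\tilde{h}_{\infty} = \mathcal{E}_j^+ \tilde{\pi}_{j,\infty}$, which is immediate from self-adjointness of $\tilde{h}_\infty$ and the fact that $\tilde{\psi}_{j,\infty}$ is its normalized eigenvector for $\mathcal{E}_j^+$. Substituting $\tilde{h}_{\infty} = \tilde{h}(k) + U_k$ I obtain the two relations
\[
\tilde{\pi}_{j,\infty}\bigl(\tilde{h}(k)-\mathcal{E}_j^+\bigr) = -\tilde{\pi}_{j,\infty}U_k,\qquad \bigl(\tilde{h}(k)-\mathcal{E}_j^+\bigr)\tilde{\pi}_{j,\infty} = -U_k\tilde{\pi}_{j,\infty}.
\]
Right-multiplying the first by $R_j^{\perp}(\mathcal{E}_j^+)$ and using the standard identity $(\tilde{h}(k)-\mathcal{E}_j^+)R_j^{\perp}(\mathcal{E}_j^+) = I-\tilde{\pi}_j(k)$, which follows at once from the definition of the reduced resolvent, gives $\tilde{\pi}_{j,\infty}(I-\tilde{\pi}_j(k)) = -\tilde{\pi}_{j,\infty}U_k R_j^{\perp}(\mathcal{E}_j^+)$; combining this with the trivial decomposition $\tilde{\pi}_{j,\infty} = \tilde{\pi}_{j,\infty}\tilde{\pi}_j(k) + \tilde{\pi}_{j,\infty}(I-\tilde{\pi}_j(k))$ produces the first equality of \eqref{te4_0}. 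The second equality follows by the symmetric manipulation: left-multiplying the second display above by $R_j^{\perp}(\mathcal{E}_j^+)$.

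For \eqref{te4_k} I would swap the roles of the two operators. Starting from $\tilde{h}(k)\tilde{\pi}_j(k) = \tilde{\pi}_j(k)\tilde{h}(k) = E_j(k)\tilde{\pi}_j(k)$ and writing $\tilde{h}(k) = \tilde{h}_{\infty} - U_k$, I get the analogous pair
\[
\tilde{\pi}_j(k)\bigl(\tilde{h}_{\infty}-E_j(k)\bigr) = \tilde{\pi}_j(k)U_k,\qquad \bigl(\tilde{h}_{\infty}-E_j(k)\bigr)\tilde{\pi}_j(k) = U_k\tilde{\pi}_j(k).
\]
Right- and left-multiplication by $R_{0,j}^{\perp}(E_j(k))$, together with $(\tilde{h}_{\infty}-E_j(k))R_{0,j}^{\perp}(E_j(k)) = I-\tilde{\pi}_{j,\infty}$, deliver the two equalities of \eqref{te4_k} after the same rearrangement as above.

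The only point requiring a remark is that the reduced resolvent $R_{0,j}^{\perp}(E_j(k))$ must be well defined, i.e.\ $E_j(k) \notin \sigma(\tilde{h}_{\infty})\setminus\{\mathcal{E}_j^+\} = \{\mathcal{E}_i^+ : i\neq j\}$. By Proposition \ref{p21}, $E_j(k)\in[\mathcal{E}_j^-,\mathcal{E}_j^+]$, and by the gap condition \eqref{5}--\eqref{4} every $\mathcal{E}_i^+$ with $i\neq j$ lies strictly outside this interval; thus the required resolvent exists. No substantial obstacle remains, since the computations are purely algebraic on the natural operator domains (all the operators involved that are not bounded act on the shared domain of the harmonic oscillator, on which $U_k$ is a bounded perturbation).
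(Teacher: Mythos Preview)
Your proof is correct and follows essentially the same route as the paper: both start from the trivial decomposition $\tilde{\pi}_{j,\infty} = \tilde{\pi}_{j,\infty}\tilde{\pi}_j(k) + \tilde{\pi}_{j,\infty}(I-\tilde{\pi}_j(k))$, use the relation $\tilde{h}_{\infty} = \tilde{h}(k) + U_k$ together with the eigenvalue equation $\tilde{\pi}_{j,\infty}(\tilde{h}_{\infty}-\mathcal{E}_j^+)=0$, and simplify via the reduced-resolvent identity $(\tilde{h}(k)-\mathcal{E}_j^+)R_j^{\perp}(\mathcal{E}_j^+)=I-\tilde{\pi}_j(k)$; the paper obtains the second equality in \eqref{te4_0} by taking adjoints rather than by your symmetric left-multiplication, and likewise derives \eqref{te4_k} by swapping the roles of $\tilde{h}(k)$ and $\tilde{h}_{\infty}$. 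Your added remark on the well-definedness of $R_{0,j}^{\perp}(E_j(k))$ is a useful clarification that the paper leaves implicit.
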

    \begin{proof}
We have
$$
 \tilde{\pi}_{j,\infty} = \tilde{\pi}_{j,\infty} \tilde{\pi}_j(k) + \tilde{\pi}_{j,\infty}(I- \tilde{\pi}_j(k))
$$
$$
= \tilde{\pi}_{j,\infty} \tilde{\pi}_j(k) +
\tilde{\pi}_{j,\infty}(\tilde{h}_{\infty} -{\mathcal E}_j^+ -U_k)
(\tilde{h}(k) - {\mathcal E}_j^+)^{-1}(I- \tilde{\pi}_j(k)).
$$
Since $\tilde{\pi}_{j,\infty}(\tilde{h}_{\infty} -{\mathcal E}_j^+)=0$, we obtain the first equality in
\eqref{te4_0}. The second equality is obtained by taking the adjoint. In
 relations \eqref{te4_k} we have only exchanged the  role of $\tilde{h}(k)$ and $\tilde{h}_{\infty}$.
 \end{proof}
 Set
\bel{lau10}
\Phi_j(k) = \Phi_j(k;W) : = \left(\int_\re U_k(x) \tilde{\psi}_{j,\infty}(x)^2 dx\right)^{1/2}, \quad k \in \re.
\ee
By the dominated convergence theorem we have $\lim_{k \to +  \infty}
\Phi_j(k) = 0$. Note that
    \bel{te6}
    \Phi_j(k) = \left({\rm Tr}\,\tilde{\pi}_{j,\infty} U_k \tilde{\pi}_{j,\infty}\right)^{1/2} = \|\tilde{\pi}_{j,\infty}
    U_k^{1/2}\|_1 = \|U_k^{1/2} \tilde{\pi}_{j,\infty}\|_1 = \|\tilde{\pi}_{j,\infty}
    U_k^{1/2}\| = \|U_k^{1/2} \tilde{\pi}_{j,\infty}\|.
    \ee
\begin{follow} \label{tef1}
Fix $j \in {\mathbb N}$. Then we have
    \bel{te7}
    \|\tilde{\pi}_{j,\infty} - \tilde{\pi}_j(k)\|_1 = o(\Phi_j(k)), \quad k \to \infty.
    \ee
    \end{follow}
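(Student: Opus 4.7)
The plan is to combine Proposition~\ref{tep1}'s identities into a single expression for $\tilde{\pi}_{j,\infty} - \tilde{\pi}_j(k)$ and estimate each term in trace norm. Subtracting the second equality in \eqref{te4_k} from the first equality in \eqref{te4_0} cancels the cross term $\tilde{\pi}_{j,\infty}\tilde{\pi}_j(k)$ and gives
\begin{equation*}
\tilde{\pi}_{j,\infty} - \tilde{\pi}_j(k) = -\tilde{\pi}_{j,\infty} U_k R_j^{\perp}(\mathcal{E}_j^+) \;-\; R_{0,j}^{\perp}(E_j(k))\,U_k\,\tilde{\pi}_j(k).
\end{equation*}
Writing $U_k = U_k^{1/2}\cdot U_k^{1/2}$ and applying the ideal bound $\|XY\|_1 \leq \|X\|\,\|Y\|_1$, together with $\|\tilde{\pi}_{j,\infty} U_k^{1/2}\|_1 = \Phi_j(k)$ from \eqref{te6}, the first summand is bounded by $\Phi_j(k)\,\|U_k^{1/2} R_j^{\perp}(\mathcal{E}_j^+)\|$ and the second by $\|U_k^{1/2} R_{0,j}^{\perp}(E_j(k))\|\cdot\|U_k^{1/2}\tilde{\pi}_j(k)\|_1$.

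The analytic core is to show that both operator-norm factors involving $U_k^{1/2}$ and the perpendicular resolvents tend to zero. Since $U_k(x) = W_+ - W(x+k/b)\to 0$ pointwise and $|U_k| \leq W_+ - W_-$, dominated convergence gives $U_k^{1/2}\to 0$ strongly on $L^2(\re)$; hence $U_k^{1/2}$ composed with any fixed compact operator tends to zero in norm. For $R_{0,j}^{\perp}(E_j(k))$ this is immediate: $R_{0,j}^{\perp}(z)$ is analytic at $\mathcal{E}_j^+$ (the potential pole is cancelled by $I - \tilde{\pi}_{j,\infty}$), so it converges in norm to $R_{0,j}^{\perp}(\mathcal{E}_j^+)$, which is compact because $\tilde{h}_\infty$ has compact resolvent and $\lambda\mapsto 1/(\lambda - \mathcal{E}_j^+)$ vanishes at infinity on $\sigma(\tilde{h}_\infty)\setminus\{\mathcal{E}_j^+\}$. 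For the $k$-dependent $R_j^{\perp}(\mathcal{E}_j^+)$, assumption \eqref{5} combined with Proposition~\ref{p21} and the monotonicity of the band functions yields a uniform lower bound $2b - (W_+ - W_-) > 0$ on $\mathrm{dist}(\mathcal{E}_j^+, \sigma(\tilde{h}(k))\setminus\{E_j(k)\})$, so that $R_j^{\perp}(\mathcal{E}_j^+)$ is uniformly bounded in $k$. Using $(\tilde{h}(k) - \mathcal{E}_j^+)R_j^{\perp}(\mathcal{E}_j^+) = I - \tilde{\pi}_j(k)$ and $\tilde{h}_\infty = \tilde{h}(k) + U_k$ with $\|U_k\|_\infty \leq W_+ - W_-$, I get a $k$-uniform bound on $\|(\tilde{h}_\infty + c)R_j^{\perp}(\mathcal{E}_j^+)\|$; the factorization
\begin{equation*}
U_k^{1/2} R_j^{\perp}(\mathcal{E}_j^+) = \bigl[U_k^{1/2}(\tilde{h}_\infty+c)^{-1}\bigr]\,\bigl[(\tilde{h}_\infty+c)R_j^{\perp}(\mathcal{E}_j^+)\bigr]
\end{equation*}
then reduces the claim to $\|U_k^{1/2}(\tilde{h}_\infty+c)^{-1}\|\to 0$, which is precisely strong-null composed with compact.

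The factor $\|U_k^{1/2}\tilde{\pi}_j(k)\|_1$ in the second term is not directly $\Phi_j(k)$, but the triangle inequality gives
\begin{equation*}
\|U_k^{1/2}\tilde{\pi}_j(k)\|_1 \leq \Phi_j(k) + \|U_k^{1/2}\|_\infty\,\|\tilde{\pi}_{j,\infty}-\tilde{\pi}_j(k)\|_1.
\end{equation*}
Setting $A := \|\tilde{\pi}_{j,\infty}-\tilde{\pi}_j(k)\|_1$, assembling the estimates yields an inequality of the form $A \leq o(1)\,\Phi_j(k) + o(1)\,A$; for $k$ large the second term on the right is absorbed into the left side, which bootstraps to $A = o(\Phi_j(k))$, i.e.\ \eqref{te7}. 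The main obstacle is the bound $\|U_k^{1/2} R_j^{\perp}(\mathcal{E}_j^+)\|\to 0$, since the spectrum of $\tilde{h}(k)$ moves with $k$ and one cannot invoke compactness of a fixed operator directly; the key maneuver is to trade $\tilde{h}(k)$ for the fixed $\tilde{h}_\infty$ via the uniform equivalence of their graph norms, which is available precisely because $\tilde{h}_\infty - \tilde{h}(k) = U_k$ is uniformly bounded.
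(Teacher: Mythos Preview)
Your proof is correct and follows essentially the same strategy as the paper: combine the identities of Proposition~\ref{tep1} to express $\tilde{\pi}_{j,\infty}-\tilde{\pi}_j(k)$, split $U_k = U_k^{1/2}U_k^{1/2}$, and use that $U_k^{1/2}$ composed with a compact operator tends to zero in norm. The only cosmetic difference is the bootstrap step: the paper rewrites $\tilde{\pi}_j(k) = \tilde{\pi}_{j,\infty} - (\tilde{\pi}_{j,\infty}-\tilde{\pi}_j(k))$ in the second summand, moves the difference to the left to obtain $(I - R_{0,j}^{\perp}(E_j)U_k)(\tilde{\pi}_{j,\infty}-\tilde{\pi}_j(k)) = -\tilde{\pi}_{j,\infty}U_k R_j^{\perp}(\mathcal{E}_j^+) - R_{0,j}^{\perp}(E_j)U_k\tilde{\pi}_{j,\infty}$, and inverts the bounded operator on the left; you instead take trace norms first and absorb the scalar $o(1)\,A$ term. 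Both arrive at the same conclusion by the same mechanism. Your factorization $U_k^{1/2}R_j^{\perp}(\mathcal{E}_j^+) = [U_k^{1/2}(\tilde{h}_\infty+c)^{-1}][(\tilde{h}_\infty+c)R_j^{\perp}(\mathcal{E}_j^+)]$ is a clean way to handle the $k$-dependence of $R_j^{\perp}(\mathcal{E}_j^+)$ that the paper leaves to ``arguing as above''; it makes explicit why the gap hypothesis \eqref{5} enters through the uniform bound on $R_j^{\perp}(\mathcal{E}_j^+)$.
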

    \begin{proof}
    By \eqref{te4_0} and \eqref{te4_k} we have
    $$
     \tilde{\pi}_{j,\infty} - \tilde{\pi}_j(k) = - \tilde{\pi}_{j,\infty} U_k R_j^{\perp}({\mathcal E}_j^+) - R_{0,j}^{\perp}(E_{j})
    U_k \tilde{\pi}_{j,\infty} + R_{0,j}^{\perp}(E_{j})
    U_k (\tilde{\pi}_{j,\infty} - \tilde{\pi}_j(k)),
    $$
    i.e.
    $$
(I-R_{0,j}^{\perp}(E_{j})
    U_k ) (\tilde{\pi}_{j,\infty} - \tilde{\pi}_j(k)) = - \tilde{\pi}_{j,\infty} U_k R_j^{\perp}({\mathcal E}_j^+) -
R_{0,j}^{\perp}(E_{j})
    U_k \tilde{\pi}_{j,\infty}.
    $$
    Since $s-\lim_{k \to \infty} U_k = 0$ and the operator
    $R_{0,j}^{\perp}(E_{j})$ is compact and uniformly bounded, we have
    $\lim_{k \to \infty} \|R_{0,j}^{\perp}(E_{j}) U_k\| = 0$.
    Therefore, the operator $I-R_{0,j}^{\perp}(E_{j}) U_k$ is
    invertible for sufficiently great $k$, and for such $k$ we have
$$
\tilde{\pi}_{j,\infty} - \tilde{\pi}_j(k) = - (I-R_{0,j}^{\perp}(E_{j}) U_k)^{-1} (\tilde{\pi}_{j,\infty}
U_k R_j^{\perp}({\mathcal E}_j^+)
    +
R_{0,j}^{\perp}(E_{j})
    U_k \tilde{\pi}_{j,\infty}).
    $$
    Therefore,
    \bel{te8}
\|\tilde{\pi}_{j,\infty} - \tilde{\pi}_j(k)\|_1 \leq \|(I-R_{0,j}^{\perp}(E_{j}) U_k)^{-1}\| (
\|U_k^{1/2} R_j^{\perp}({\mathcal E}_j^+)\|
    +
\|R_{0,j}^{\perp}(E_{j})U_k^{1/2} \|)
     \|\tilde{\pi}_{j,\infty} U_k^{1/2}\|_1.
    \ee
    Arguing as above, we easily find that
    \bel{te9}
    \lim_{k \to \infty}\|U_k^{1/2} R_j^{\perp}({\mathcal E}_j^+)\|
    =
\lim_{k \to \infty}\|R_{0,j}^{\perp}(E_{j})U_k^{1/2} \| = 0.
    \ee
Now the combination of \eqref{te8}, \eqref{te9}, and \eqref{te6} implies
\eqref{te7}.
\end{proof}

\begin{pr} \label{tep2}
We have
    \bel{te10}
    {\mathcal E}_j^+ - E_j(k) = \Phi_j(k)^2( 1 + o(1)), \quad k \to \infty.
\ee
\end{pr}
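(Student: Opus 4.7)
The plan is to derive an exact formula for $\mathcal{E}_j^+ - E_j(k)$ as a ratio of two inner products, and then use Corollary \ref{tef1} to estimate each of them. Since $\tilde{h}(k) = \tilde{h}_\infty - U_k$, the eigenvalue equation $\tilde{h}(k)\tilde{\psi}_j(\cdot;k) = E_j(k)\tilde{\psi}_j(\cdot;k)$ can be rewritten as $(\tilde{h}_\infty - E_j(k))\tilde{\psi}_j(\cdot;k) = U_k\,\tilde{\psi}_j(\cdot;k)$. Taking the scalar product with $\tilde{\psi}_{j,\infty}$ and using $\tilde{h}_\infty \tilde{\psi}_{j,\infty} = \mathcal{E}_j^+ \tilde{\psi}_{j,\infty}$, I obtain
\begin{equation*}
(\mathcal{E}_j^+ - E_j(k))\,\langle \tilde{\psi}_{j,\infty}, \tilde{\psi}_j(\cdot;k)\rangle = \langle \tilde{\psi}_{j,\infty}, U_k \tilde{\psi}_j(\cdot;k)\rangle.
\end{equation*}
Choosing the phase of $\tilde{\psi}_j(\cdot;k)$ so that $c_k := \langle \tilde{\psi}_{j,\infty}, \tilde{\psi}_j(\cdot;k)\rangle \geq 0$, this gives
\begin{equation*}
\mathcal{E}_j^+ - E_j(k) = c_k^{-1}\,\langle \tilde{\psi}_{j,\infty}, U_k \tilde{\psi}_j(\cdot;k)\rangle
\end{equation*}
for $k$ large enough that $c_k > 0$.

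Next I would control the denominator. For rank-one orthogonal projectors the identity $\|\tilde{\pi}_{j,\infty} - \tilde{\pi}_j(k)\|_2^2 = 2 - 2 c_k^2$ (where $\|\cdot\|_2$ is the Hilbert--Schmidt norm) together with Corollary \ref{tef1} yields
\begin{equation*}
1 - c_k^2 = \tfrac{1}{2}\,\|\tilde{\pi}_{j,\infty} - \tilde{\pi}_j(k)\|_2^2 \leq \tfrac{1}{2}\,\|\tilde{\pi}_{j,\infty} - \tilde{\pi}_j(k)\|_1^2 = o(\Phi_j(k)^2), \quad k \to \infty.
\end{equation*}
In particular $c_k \to 1$ and $c_k^{-1} = 1 + o(\Phi_j(k)^2)$.

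For the numerator I would decompose $\tilde{\psi}_j(\cdot;k) = c_k \tilde{\psi}_{j,\infty} + \xi_k$ with $\xi_k \perp \tilde{\psi}_{j,\infty}$ and $\|\xi_k\|^2 = 1 - c_k^2 = o(\Phi_j(k)^2)$. Then
\begin{equation*}
\langle \tilde{\psi}_{j,\infty}, U_k \tilde{\psi}_j(\cdot;k)\rangle = c_k\,\Phi_j(k)^2 + \langle U_k^{1/2}\tilde{\psi}_{j,\infty}, U_k^{1/2}\xi_k\rangle.
\end{equation*}
Since $W$ is bounded, $\|U_k\|_\infty \leq W_+ - W_-$, so by \eqref{te6} the cross term is bounded by
\begin{equation*}
\|U_k^{1/2}\tilde{\psi}_{j,\infty}\|\,\|U_k\|_\infty^{1/2}\,\|\xi_k\| = \Phi_j(k)\cdot O(1)\cdot o(\Phi_j(k)) = o(\Phi_j(k)^2).
\end{equation*}
Combining this with $c_k = 1 + o(1)$ and $c_k^{-1} = 1 + o(1)$ gives \eqref{te10}.

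The one step requiring a bit of care is the cross-term estimate: a priori $U_k$ is only strongly small, not norm-small, so one cannot directly move $U_k$ past a vector that tends weakly to zero. The point I rely on is that the perpendicular component $\xi_k$ is known to vanish in $L^2$-norm \emph{at least as fast as $\Phi_j(k)$}, which is exactly the content of Corollary \ref{tef1}; combined with the trivial bound $\|U_k\|_\infty \leq W_+-W_-$ and with $\|U_k^{1/2}\tilde{\psi}_{j,\infty}\| = \Phi_j(k)$, this produces the required $o(\Phi_j(k)^2)$ error. This is the only place where the full strength of Corollary \ref{tef1} (rather than just norm smallness of $\tilde{\pi}_{j,\infty} - \tilde{\pi}_j(k)$) enters.
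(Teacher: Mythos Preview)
Your proof is correct and, in fact, more elementary than the paper's. Both arguments rest on Corollary \ref{tef1} as the crucial input, but they exploit it differently.

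The paper represents $\mathcal{E}_j^+ - E_j(k)$ via a contour integral of the resolvent difference, applies the Cauchy theorem to obtain the identity \eqref{te13}
\[
\mathcal{E}_j^+ - E_j(k) - \Phi_j(k)^2 = {\rm Tr}\,\tilde{\pi}_{j,\infty} U_k (\tilde{\pi}_j(k) - \tilde{\pi}_{j,\infty}) - \mathcal{E}_j^+ {\rm Tr}\,\tilde{\pi}_{j,\infty} U_k R_j^{\perp}(\mathcal{E}_j^+) - E_j {\rm Tr}\, R_{0,j}^{\perp}(E_j) U_k \tilde{\pi}_j(k),
\]
and then estimates each of the three remainder terms separately, using \eqref{te6}, \eqref{te7}, and cyclicity-of-trace tricks to pull out a factor $\|\tilde{\pi}_j(k) - \tilde{\pi}_{j,\infty}\|_1$.

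You bypass the resolvent machinery entirely. The single scalar identity
\[
(\mathcal{E}_j^+ - E_j(k))\,c_k = \langle \tilde{\psi}_{j,\infty}, U_k\,\tilde{\psi}_j(\cdot;k)\rangle
\]
reduces the problem to controlling one overlap $c_k$ and one cross term. The Hilbert--Schmidt identity $1-c_k^2 = \tfrac{1}{2}\|\tilde{\pi}_{j,\infty}-\tilde{\pi}_j(k)\|_2^2$ (a clean fact about rank-one projections) converts Corollary \ref{tef1} directly into $\|\xi_k\| = o(\Phi_j(k))$, and then Cauchy--Schwarz together with $\|U_k^{1/2}\tilde{\psi}_{j,\infty}\| = \Phi_j(k)$ finishes the job. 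This route avoids the reduced resolvents and the contour integration altogether, and produces the result in fewer steps; the paper's approach has the advantage of being formulated entirely at the operator level (no eigenfunction decomposition), which can be convenient in less simple situations, but here your argument is shorter and equally rigorous.
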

\begin{proof}
Assume $k$ large enough. Evidently,
$$
{\mathcal E}_j^+ = {\rm Tr}\,\tilde{h}_{\infty} \tilde{\pi}_{j,\infty} = -\frac{1}{2\pi i} {\rm Tr}
\int_{\Gamma_j} \tilde{h}_{\infty} (\tilde{h}_{\infty} - \omega)^{-1} d\omega = -\frac{1}{2\pi
i} {\rm Tr} \int_{\Gamma_j} \omega (\tilde{h}_{\infty} - \omega)^{-1} d\omega
$$
where $\Gamma_j$ is a sufficiently small circle run over in the
anticlockwise direction which contains in its interior $E_j(k)$
and ${\mathcal E}_j^+$ but no other points from the spectra of
$\tilde{h}(k)$ and $\tilde{h}_{\infty}$. Similarly,
$$
E_{j}(k) =  -\frac{1}{2\pi i} {\rm Tr} \int_{\Gamma_j} \omega (\tilde{h}(k) -
\omega)^{-1} d\omega.
$$
Therefore,
    $$
    {\mathcal E}_j^+ - E_{j}(k) =  -\frac{1}{2\pi i} {\rm Tr} \int_{\Gamma_j} \omega \left((\tilde{h}_{\infty} -
\omega)^{-1} - (\tilde{h}(k) - \omega)^{-1}\right) d\omega =
$$
    \bel{te11}
 \frac{1}{2\pi i} {\rm
Tr} \int_{\Gamma_j} \omega (\tilde{h}_{\infty} - \omega)^{-1} U_k (\tilde{h}(k) -
\omega)^{-1} d\omega.
    \ee
Applying the Cauchy theorem, we easily get
    $$
    \frac{1}{2\pi i}
 \int_{\Gamma_j} \omega (\tilde{h}_{\infty} - \omega)^{-1} U_k (\tilde{h}(k) -
\omega)^{-1} d\omega =
    $$
    \bel{te12}
\tilde{\pi}_{j,\infty} U_k \tilde{\pi}_j(k) - {\mathcal E}_j^+\tilde{\pi}_{j,\infty} U_k R_j^{\perp}({\mathcal E}_j^+) -
    E_j R_{0,j}^{\perp}(E_{j}) U_k  \tilde{\pi}_j(k).
    \ee
    Comparing \eqref{te11} and \eqref{te12}, and bearing in mind
    \eqref{te6}, we obtain
$$
{\mathcal E}_j^+ - E_{j}(k) - \Phi_j(k)^2 =
$$
    \bel{te13}
{\rm Tr}\,\tilde{\pi}_{j,\infty} U_k (\tilde{\pi}_j(k) - \tilde{\pi}_{j,\infty}) - {\mathcal E}_j^+ {\rm Tr}\,\tilde{\pi}_{j,\infty}
U_k R_j^{\perp}({\mathcal E}_j^+) -
    E_j {\rm Tr}\, R_{0,j}^{\perp}(E_{j}) U_k  \tilde{\pi}_j(k).
    \ee
    In order to complete the proof of \eqref{te10}, it remains to
    show that the three terms on the r.h.s. of \eqref{te13} are of
    order $o(\Phi_j(k)^2)$ as $k \to \infty$. \\
    First, we have
    \bel{te14}
|{\rm Tr}\,\tilde{\pi}_{j,\infty} U_k (\tilde{\pi}_j(k) -
\tilde{\pi}_{j,\infty})| \leq \|\tilde{\pi}_{j,\infty} U_k^{1/2}\|
\|U_k^{1/2}\| \|\tilde{\pi}_j(k) - \tilde{\pi}_{j,\infty}\|_ 1 =
o(\Phi_j(k)^2), \quad k \to \infty,
    \ee
    by \eqref{te6}, \eqref{te7}, and the fact that $\|U_k^{1/2}\|$ is
    uniformly bounded with respect to $k \in \re$. \\
    Next, using  the trivial identities
$\tilde{\pi}_{j,\infty} = \tilde{\pi}_{j,\infty}^2$ and $R_j^{\perp}({\mathcal E}_j^+) \tilde{\pi}_j(k) = 0$,
as well as the cyclicity of the trace, we obtain
    \bel{te15}
      {\rm Tr}\,\tilde{\pi}_{j,\infty}
U_k R_j^{\perp}({\mathcal E}_j^+) =  -{\rm Tr}\,(\tilde{\pi}_j(k) - \tilde{\pi}_{j,\infty}) \tilde{\pi}_{j,\infty} U_k
R_j^{\perp}({\mathcal E}_j^+).
    \ee
    Therefore, similarly to \eqref{te14}, we have
    \bel{te16}
    | {\mathcal E}_j^+ {\rm Tr}\,\tilde{\pi}_{j,\infty}
U_k R_j^{\perp}({\mathcal E}_j^+)| \leq |{\mathcal E}_j^+| \|\tilde{\pi}_j(k) - \tilde{\pi}_{j,\infty}\|_1
\|\tilde{\pi}_{j,\infty} U_k^{1/2}\| \|U_k^{1/2}R_j^{\perp}({\mathcal E}_j^+)\|  = o(\Phi_j(k)^2)
    \ee
    as $k \to
\infty$. Finally, by analogy with \eqref{te15} we have
    $$
{\rm Tr}\, R_{0,j}^{\perp}(E_{j}) U_k  \tilde{\pi}_j(k)  = {\rm Tr}\,
R_{0,j}^{\perp}(E_{j}) U_k  \tilde{\pi}_j(k)  (\tilde{\pi}_j(k)  - \tilde{\pi}_{j,\infty}) =
$$
$$
{\rm Tr}\, R_{0,j}^{\perp}(E_{j}) U_k  \tilde{\pi}_{j,\infty} (\tilde{\pi}_j(k)  - \tilde{\pi}_{j,\infty}) +
{\rm Tr}\, R_{0,j}^{\perp}(E_{j}) U_k  (\tilde{\pi}_j(k) - \tilde{\pi}_{j,\infty})^2.
$$
Hence,
$$
|E_j {\rm Tr}\, R_{0,j}^{\perp}(E_{j}) U_k \tilde{\pi}_j(k)| \leq
$$
    $$
    |E_j(k)| \|R_{0,j}^{\perp}(E_{j}) U_k^{1/2}\| \|U_k^{1/2}
    \tilde{\pi}_{j,\infty}\| \|\tilde{\pi}_j(k)  - \tilde{\pi}_{j,\infty}\|_1 +
    $$
    \bel{te17}
    |E_j(k)| \|R_{0,j}^{\perp}(E_{j})
    U_k\| \|\tilde{\pi}_j(k) - \tilde{\pi}_{j,\infty}\|_1^2 = o(\Phi_j(k)^2), \quad k \to
\infty,
    \ee
    by \eqref{te6}, \eqref{te7}, and the fact that $|E_j(k)|$, $\|R_{0,j}^{\perp}(E_{j}(k)) U_k^{1/2}\|$,
    and $\|R_{0,j}^{\perp}(E_{j}(k)) U_k\|$ are uniformly bounded with respect to $k \in \re$.\\
Putting together \eqref{te13}, \eqref{te14}, \eqref{te16}, and
\eqref{te17}, we obtain \eqref{te10}.
\end{proof}
Now \eqref{te1} (and, hence, \eqref{lau6}) follows immediately from \eqref{te7} and \eqref{te10}.\\

\section{Effective Hamiltonians}
\label{section3} \setcounter{equation}{0}

Assume that $W$ is a non-decreasing function, and \eqref{5} holds
true. As explained in the introduction, for definiteness, we will
consider the case of positive perturbations, and respectively the
asymptotic behavior as $\lambda \downarrow 0$ of ${\mathcal
N}_j^+(\lambda)$, $j \in {\mathbb N}$, $\lambda \in (0,
2b + W_- - W_+)$. \\
Pick $j \in {\mathbb N}$, $A \in [-\infty, \infty)$ and $\lambda > 0$, and set
    $$
P_{j}(A) : = \int_{(A,{\infty})}^\oplus \pi_j(k) dk, \quad
{\mathcal P}_{j}(A) : = {\mathcal F}^* P_{j}(A) {\mathcal F},
\quad P_{j,\infty}(A) : = \int_{(A,{\infty})}^\oplus
\pi_{j,\infty}(k) dk,
$$
$$
T_{j}(\lambda; A) : = \int_{(A,{\infty})}^\oplus ({\mathcal E}_j^+
- E_j(k) + \lambda)^{-1/2}\pi_j(k) dk,
$$
$$
 T_{j,\infty}(\lambda; A) :
= \int_{(A,{\infty})}^\oplus ({\mathcal E}_j^+ - E_j(k) +
\lambda)^{-1/2}\pi_{j,\infty}(k) dk.
$$
 \begin{pr} \label{laup2} Assume $M \in L_0^{\infty}(\rd)$. Then the operator
$M{\mathcal F}^*T_{j,\infty}(\lambda; A) $ is compact for any $\lambda > 0$ and $A \in [-\infty,\infty)$. Moreover,
for any $A_1, A_2 \in [-\infty, \infty)$ the operator
\bel{lau16}
M{\mathcal F}^*(T_{j,\infty}(\lambda; A_1) - T_{j,\infty}(\lambda; A_2))
\ee
extends to a uniformly bounded and continuous operator for $\lambda \geq 0$.
\end{pr}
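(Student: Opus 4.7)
The plan is to compute the integral kernel of $\mathcal{F}^*T_{j,\infty}(\lambda;A)$ explicitly from the direct-integral definition and the rank-one structure of $\pi_{j,\infty}(k)$, then reduce compactness to a Hilbert--Schmidt estimate together with an approximation argument, and handle the second claim by a direct norm bound on the weight function on a bounded-above interval. Pairing the fiber action of $T_{j,\infty}(\lambda;A)$ with the inverse partial Fourier transform yields the integral kernel
\[
K_{\lambda,A}(x,y;x',k) = \frac{\chi_{(A,\infty)}(k)\,e^{iyk}}{\sqrt{2\pi}\,({\mathcal E}_j^+ - E_j(k) + \lambda)^{1/2}}\,\psi_{j,\infty}(x;k)\,\overline{\psi_{j,\infty}(x';k)},
\]
so that $M\mathcal{F}^*T_{j,\infty}(\lambda;A)$ is the integral operator with kernel $M(x,y)K_{\lambda,A}(x,y;x',k)$.

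For compactness (with $\lambda>0$), first assume $M \in L^2(\rd)\cap L^\infty(\rd)$. Integrating $|K_{\lambda,A}|^2$ in $x'$ uses $\|\psi_{j,\infty}(\cdot;k)\|_{L^2(\re)}=1$; the bound $({\mathcal E}_j^+-E_j(k)+\lambda)^{-1}\le \lambda^{-1}$ and the substitution $u=b^{-1/2}k$ in the remaining $k$-integral (using \eqref{sof9} and $\|\varphi_j\|_{L^2(\re)}=1$) yield
\[
\|M\mathcal{F}^*T_{j,\infty}(\lambda;A)\|_{HS}^2 \le \frac{b}{2\pi\lambda}\,\|M\|_{L^2(\rd)}^2,
\]
so the operator is Hilbert--Schmidt, hence compact. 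For a general $M \in L_0^\infty(\rd)$, set $M_R := M\chi_{\{x^2+y^2\le R^2\}}$; since $M(x,y)\to 0$ at infinity, $\|M-M_R\|_\infty\to 0$ as $R\to\infty$, and since $\|T_{j,\infty}(\lambda;A)\| \le \lambda^{-1/2}$, the compact operators $M_R\mathcal{F}^*T_{j,\infty}(\lambda;A)$ converge in norm to $M\mathcal{F}^*T_{j,\infty}(\lambda;A)$, which is thus compact.

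For the second claim, one may assume $A_1<A_2$, so the difference equals $\int_{(A_1,A_2)}^\oplus({\mathcal E}_j^+-E_j(k)+\lambda)^{-1/2}\pi_{j,\infty}(k)\,dk$ over an interval $I$ with finite supremum $A_2<\infty$. The crucial observation is that on such $I$ one has ${\mathcal E}_j^+-E_j(k)\ge c_I>0$ uniformly: by Proposition \ref{p21} and the assumption $W_-<W_+$, the real-analytic non-decreasing band function $E_j$ is non-constant and thus cannot equal its limit ${\mathcal E}_j^+$ at any finite $k$; continuity of $E_j$ together with the limit ${\mathcal E}_j^+-E_j(k)\to W_+-W_->0$ as $k\to-\infty$ supplies the required strict lower bound, covering the case $A_1=-\infty$ as well. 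Consequently the weight $({\mathcal E}_j^+-E_j(k)+\lambda)^{-1/2}$ is bounded on $I$ by $c_I^{-1/2}$ uniformly in $\lambda\ge 0$, giving the uniform operator-norm estimate
\[
\bigl\|M\mathcal{F}^*\bigl(T_{j,\infty}(\lambda;A_1)-T_{j,\infty}(\lambda;A_2)\bigr)\bigr\| \le \|M\|_\infty\,c_I^{-1/2}.
\]
Norm continuity in $\lambda\in[0,\infty)$ then follows from the elementary inequality $|a^{-1/2}-b^{-1/2}|\le |a-b|/(2\min(a,b)^{3/2})$ applied with $a,b\ge c_I$, yielding a bound $|\lambda_1-\lambda_2|/(2c_I^{3/2})$ on the weight uniformly in $k\in I$, and hence on the operator norm.

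The main obstacle is establishing the uniform positive lower bound ${\mathcal E}_j^+-E_j(k)\ge c_I$ on the relevant interval; this is where real-analyticity and non-constancy of the band function are essential. The rest of the argument consists of direct kernel/HS computations and a routine truncation of $L_0^\infty$-functions to compactly supported ones.
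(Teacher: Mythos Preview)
Your proof is correct and follows essentially the same approach as the paper: for compactness you truncate $M$ to a compactly supported multiplier and use a Hilbert--Schmidt estimate (the paper's decomposition $\chi_R M + (1-\chi_R)M$ is exactly this), and for the second claim you bound the weight $({\mathcal E}_j^+ - E_j(k)+\lambda)^{-1/2}$ uniformly on the bounded-above interval $(A_-,A_+]$, then get Lipschitz continuity in $\lambda$ from an elementary estimate on $|a^{-1/2}-b^{-1/2}|$. Your write-up is in fact more explicit than the paper's in two useful respects: you display the integral kernel and carry out the HS computation, and you justify the key inequality $E_j(k)<{\mathcal E}_j^+$ for finite $k$ (via analyticity and non-constancy), which the paper uses tacitly. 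One minor simplification: since $E_j$ is non-decreasing, once you know $E_j(A_+)<{\mathcal E}_j^+$ you immediately get $c_I={\mathcal E}_j^+-E_j(A_+)>0$ without invoking the $k\to-\infty$ limit.
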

\begin{proof}
Denote by $\chi_R$ the characteristic function of a disk of radius
$R$ centered at the origin. For $\lambda > 0$ and $A \in [-\infty,
\infty)$ write
    \bel{lau17}
    M{\mathcal F}^*T_{j,\infty}(\lambda; A)
= \chi_R M{\mathcal F}^*T_{j,\infty}(\lambda; A) +  (1-\chi_R)
M{\mathcal F}^*T_{j,\infty}(\lambda; A).
    \ee
 The first operator at the r.h.s of
\eqref{lau17} is Hilbert-Schmidt for any $R \in (0,\infty)$, and
the norm of the second one tends to zero as $R \to \infty$. Hence,
the operator $M{\mathcal F}^*T_{j,\infty}(\lambda; A)$ is compact.
 Further, the case $A_1 = A_2$ in \eqref{lau16} is trivial so
that we suppose $A_1 \neq A_2$. Define the value for $\lambda = 0$
of the operator in \eqref{lau16} as
$$
M{\mathcal F}^* \int_{(A-,A_+)}^\oplus ({\mathcal E}_j^+
- E_j(k))^{-1/2} \pi_{j,\infty}(k) dk
$$
with $A_- : = \min\{A_1,A_2\}$ and $A_+ : = \max\{A_1,A_2\}$. Now the uniform boundedness
for $\lambda \geq 0$ of the operator
in \eqref{lau16} follows from the estimates
$$
\|M{\mathcal F}^*(T_{j,\infty}(\lambda; A_1) -T_{j,\infty}(\lambda; A_2))\|
\leq \|M\|_{L^{\infty}(\rd)} \sup_{k \in (A_-, A_+]} ({\mathcal E}_j^+
- E_j(k))^{-1/2}, \quad \lambda \geq 0,
$$
while the uniform continuity of this operator for $\lambda \geq 0$ follows from the estimates
$$
\|M{\mathcal F}^*((T_{j,\infty}(\lambda_1; A_1) - T_{j,\infty}(\lambda_1; A_2)) - (T_{j,\infty}(\lambda_2; A_1) -T_{j,\infty}(\lambda_2; A_2)))\| \leq
$$
$$
|\lambda_1 - \lambda_2| \|M\|_{L^{\infty}(\rd)} \sup_{k \in (A_-,
A_+]} ({\mathcal E}_j^+ - E_j(k))^{-2}, \quad \lambda_1, \lambda_2
\geq 0.
$$
\end{proof}
Let $s>0$ and $T=T^*$ be a compact operator acting in a given
Hilbert space. Set
 $$
n_{\pm}(s; T) : = {\rm rank}\,{\mathbb P}_{(s,\infty)}(\pm T).
$$
In the case where $T$ is  linear and compact but not necessarily
self-adjoint  (in particular, $T$ could act between two different
Hilbert spaces), we will use also the notations
$$
n_*(s; T) : = n_+(s^2; T^* T), \quad s> 0.
$$
Of course, we have $n_*(s; T) = n_*(s; T^*)$ and hence $n_+(s; T^*T) = n_+(s; T T^*)$ for any $s>0$.
For further references we recall here the well-known Weyl
inequalities
    \bel{lau11}
    n_+(s_1 + s_2; T_1 + T_2) \leq n_+(s_1; T_1) + n_+(s_2; T_2)
    \ee
    where $s_j > 0$ and $T_j$, $j=1,2$, are
linear self-adjoint operators acting in a given Hilbert space. In
the case where $T_1$ and $T_2$ are linear and compact but not
necessarily self-adjoint, we recall also the Ky-Fan inequalities
\bel{lau13} n_*(s_1 + s_2; T_1 + T_2) \leq n_*(s_1; T_1) + n_*(s_2;
T_2), \quad s_1, s_2 > 0. \ee
\begin{theorem} \label{lauth1} Assume that $V \in L_0^{\infty}(\rd; \re)$. Fix $j \in {\mathbb N}$
and $A \in [-\infty, \infty)$. Then for any $\varepsilon \in
(0,1)$ we have
$$
n_+(1+\varepsilon; T_{j,\infty}(\lambda;A){\mathcal F} V {\mathcal
F}^*T_{j,\infty}(\lambda;A)) + O(1) \leq
$$
$$
{\mathcal N}_j^+(\lambda) \leq
$$
\bel{lau30} n_+(1-\varepsilon; T_{j,\infty}(\lambda;A){\mathcal F}
V {\mathcal F}^*T_{j,\infty}(\lambda;A)) + O(1), \quad \lambda
\downarrow 0. \ee
\end{theorem}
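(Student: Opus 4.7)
The plan is to rely on a gap-eigenvalue version of the Birman--Schwinger principle, reduce the resolvent to its singular fibre, and then trade the true projection $\pi_j(k)$ for the model $\pi_{j,\infty}(k)$ using Theorem~\ref{teth1}. First I would fix an auxiliary $E_0 \in (\mathcal{E}_j^+, \mathcal{E}_{j+1}^-)$ and note that $\mathcal{N}_j^+(\lambda) = \mathrm{rank}\,\mathbb{P}_{(\mathcal{E}_j^+ + \lambda,\, E_0]}(H_+) + O(1)$ as $\lambda \downarrow 0$. For $V \geq 0$ and $E := \mathcal{E}_j^+ + \lambda$ in the gap, the standard Birman--Schwinger principle for positive gap perturbations yields, modulo a bounded correction,
\[
\mathcal{N}_j^+(\lambda) = n_-\!\bigl(1;\, V^{1/2} (H_0 - E)^{-1} V^{1/2}\bigr) + O(1).
\]

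Next I would expand $(H_0-E)^{-1}$ via $\mathcal{F} H_0 \mathcal{F}^* = \int^\oplus h(k)\,dk$ and the resolution $h(k) = \sum_m E_m(k) \pi_m(k)$, writing
\[
V^{1/2}(H_0-E)^{-1}V^{1/2} = -L_-(E) + L_+(E),
\]
where $L_+(E) \geq 0$ collects the bands $m \geq j{+}1$ and $L_-(E) \geq 0$ collects $m \leq j$. Since $E_m(k) \geq \mathcal{E}_{j+1}^-$ for $m\geq j+1$ and $E_m(k) \leq \mathcal{E}_m^+ = \mathcal{E}_j^+ - 2b(j-m)$ for $m<j$, both $L_+(E)$ and the $m<j$ part of $L_-(E)$ are compact and uniformly bounded on a neighbourhood of $\lambda = 0$; hence by the Weyl inequality \eqref{lau11} they are absorbed into an $O(1)$ term. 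After further cutting off the $m=j$ integral at $k = A$ (the piece with $k \leq A$ being bounded because $E_j(A) < \mathcal{E}_j^+$) and applying the cyclicity identity $n_+(1;X^*X) = n_+(1;XX^*)$ with $X := T_j(\lambda;A)\mathcal{F} V^{1/2}$, I reach
\[
\mathcal{N}_j^+(\lambda) = n_+\!\bigl(1\mp\varepsilon;\, T_j(\lambda;A)\mathcal{F} V \mathcal{F}^* T_j(\lambda;A)\bigr) + O(1).
\]

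The final step is to replace $T_j$ by $T_{j,\infty}$. With $Y := T_j(\lambda;A)\mathcal{F} V^{1/2}$, $Y_\infty := T_{j,\infty}(\lambda;A)\mathcal{F} V^{1/2}$, and $R := T_j(\lambda;A) - T_{j,\infty}(\lambda;A)$, the direct-integral structure of $R$ gives
\[
\|R\|_{\mathrm{op}} \leq \sup_{k > A} (\mathcal{E}_j^+ - E_j(k) + \lambda)^{-1/2} \|\pi_j(k) - \pi_{j,\infty}(k)\|_1,
\]
and Theorem~\ref{teth1} forces this supremum to tend to $0$ as $A\to\infty$ \emph{uniformly} in $\lambda \geq 0$, because the $(\mathcal{E}_j^+ - E_j(k))^{-1/2}$ divergence is exactly cancelled by the $o((\mathcal{E}_j^+ - E_j(k))^{1/2})$ estimate of \eqref{lau6}. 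Hence $\|Y - Y_\infty\| \leq \|R\|\,\|V\|_\infty^{1/2}$ can be made smaller than any prescribed $s_2 > 0$. The Ky-Fan inequality \eqref{lau13} applied to $Y = Y_\infty + (Y-Y_\infty)$ (and its symmetric counterpart) then kills the $Y-Y_\infty$ contribution, yielding $n_*(1;Y) \leq n_*(1-s_2;Y_\infty)$ and the reverse; translating via $n_*(s;T)^2 = n_+(s^2;TT^*)$ and choosing $s_2$ so that $(1-s_2)^{\pm 2} = 1 \mp \varepsilon$ gives
\[
n_+(1+\varepsilon;\, Y_\infty Y_\infty^*) \leq n_+(1;\, YY^*) \leq n_+(1-\varepsilon;\, Y_\infty Y_\infty^*),
\]
which, combined with the previous step, produces \eqref{lau30}.

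The principal obstacle is the interplay in the final step between the divergence $\|T_{j,\infty}(\lambda;A)\| \sim \lambda^{-1/2}$ and the smallness of $R$: naively the cross terms $R\mathcal{F} V \mathcal{F}^* T_{j,\infty}$ are \emph{not} small in operator norm, since $\|R\|\cdot\|T_{j,\infty}\|$ need not tend to zero. The resolution is to descend to the single-sided ``half-resolvent'' $Y$, where $Y - Y_\infty = R\mathcal{F} V^{1/2}$ is uniformly small because Theorem~\ref{teth1} supplies precisely the $1/2$-power gain needed to beat the $1/2$-power singularity in $T_{j,\infty}$; the Ky-Fan inequality then does the rest, operating at the level of singular values rather than self-adjoint eigenvalue counts. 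Verifying the Birman--Schwinger reduction modulo $O(1)$, and checking that each discarded band/$k$-region indeed contributes a compact, uniformly bounded operator of the type handled by Proposition~\ref{laup2}, are the remaining (routine but notationally intensive) items.
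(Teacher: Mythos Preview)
Your argument is essentially the paper's: Birman--Schwinger, isolate the $j$th band-fibre, pass to the half-resolvent $Y = V^{1/2}\mathcal{F}^* T_j(\lambda;\cdot)$, and use Ky-Fan together with Theorem~\ref{teth1} to swap in $T_{j,\infty}$. The one real gap is the treatment of \emph{arbitrary} $A$. Your bound $\|R\|\leq \sup_{k>A}(\mathcal{E}_j^+-E_j(k)+\lambda)^{-1/2}\|\pi_j(k)-\pi_{j,\infty}(k)\|_1$ is small only when $A$ is large; as written you have conflated the fixed $A\in[-\infty,\infty)$ of the statement with a large parameter (for $A=-\infty$ that supremum is certainly not small). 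The paper handles this by first working with an auxiliary large $\tilde{A}$ so that the Ky-Fan replacement $T_j\to T_{j,\infty}$ goes through with zero remainder (see \eqref{lau34}--\eqref{lau35}), and then applying a \emph{second} Ky-Fan step together with Proposition~\ref{laup2}---which says precisely that $V^{1/2}\mathcal{F}^*(T_{j,\infty}(\lambda;A)-T_{j,\infty}(\lambda;\tilde{A}))$ is compact and uniformly bounded for $\lambda\geq 0$---to pass from $\tilde{A}$ back to the prescribed $A$ at an $O(1)$ cost (see \eqref{lau36}--\eqref{lau37}). Your closing sentence invokes Proposition~\ref{laup2} for the discarded band/$k$-regions, but its essential role here is this $\tilde{A}\to A$ bridge \emph{after} the model substitution, not before; without it the claim is established only for sufficiently large $A$.

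A minor slip: you write $n_*(s;T)^2=n_+(s^2;TT^*)$; the square on the left should not be there.
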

\begin{proof}
The Birman-Schwinger principle implies
    \bel{57}
{\mathcal N}_j^+(\lambda) = n_-(1; V^{1/2} (H_0 - {\mathcal E}_j^+
- \lambda)^{-1} V^{1/2}) + O(1), \quad \lambda \downarrow 0.
    \ee
    Pick $\tilde{A} \in \re$. Applying the Weyl inequalities \eqref{lau11}, we get
    $$
n_+(1+s; V^{1/2} ({\mathcal E}_j^+ - H_0 + \lambda)^{-1} {\mathcal
P}_{j,\tilde{A}} V^{1/2}) - n_-(s; V^{1/2} ({\mathcal E}_j^+ - H_0
+ \lambda)^{-1}(I - {\mathcal P}_{j,\tilde{A}}) V^{1/2}) \leq
$$
$$
n_-(1; V^{1/2} (H_0 - {\mathcal E}_j^+ -
\lambda)^{-1} V^{1/2}) \leq
$$
    \bel{59}
n_+(1-s; V^{1/2} ({\mathcal E}_j^+ - H_0 + \lambda)^{-1} {\mathcal
P}_{j,\tilde{A}} V^{1/2}) + n_+(s; V^{1/2} ({\mathcal E}_j^+ - H_0
+ \lambda)^{-1}(I - {\mathcal P}_{j,\tilde{A}}) V^{1/2}),
    \ee
for any $s \in (0,1)$. By $V \in L_0^{\infty}(\rd; \re)$ and the
diamagnetic inequality, we easily find that
 \bel{lau31}
 n_{\pm}(s; V^{1/2} ({\mathcal E}_j^+ - H_0 + \lambda)^{-1}(I - {\mathcal
P}_{j,\tilde{A}}) V^{1/2}) = O(1), \quad \lambda \downarrow 0. \ee
Further, for any $r>0$ we have
$$
n_+(r^2; V^{1/2} ({\mathcal E}_j^+ - H_0 + \lambda)^{-1}{\mathcal
P}_{j,\tilde{A}} V^{1/2}) =
$$
$$
n_+(r^2; V^{1/2} {\mathcal F}^*
\int^{\oplus}_{(\tilde{A},\infty)}({\mathcal E}_j^+ - E_j(k) +
\lambda)^{-1} \pi_j(k) dk {\mathcal F} V^{1/2}) =
$$
    \bel{lau32}
n_*(r; V^{1/2} {\mathcal F}^* T_j(\lambda; \tilde{A})).
    \ee
    Applying the Ky-Fan inequalities \eqref{lau13}, we obtain
$$
n_*(r(1+s); V^{1/2} {\mathcal F}^* T_{j,\infty}(\lambda;
\tilde{A})) - n_*(rs; V^{1/2} {\mathcal F}^*(
T_{j,\infty}(\lambda; \tilde{A}) -  T_j(\lambda; \tilde{A}))) \leq
$$
$$
n_*(r; V^{1/2} {\mathcal F}^* T_j(\lambda; \tilde{A})) \leq
 $$
\bel{lau33} n_*(r(1-s); V^{1/2} {\mathcal F}^*
T_{j,\infty}(\lambda; \tilde{A})) + n_*(rs; V^{1/2} {\mathcal
F}^*( T_{j,\infty}(\lambda; \tilde{A}) -  T_j(\lambda;
\tilde{A}))). \ee
Now note that
$$
\| V^{1/2} {\mathcal F}^*( T_{j,\infty}(\lambda; \tilde{A}) -
T_j(\lambda; \tilde{A}))\| \leq
$$
    \bel{lau34}
\|V\|_{L^{\infty}(\rd)}^{1/2} \sup_{k > \tilde{A}} ({\mathcal
E}^+_j - E_j(k))^{-1/2} \|\pi_j(k) - \pi_{j,\infty}(k)\|,
    \ee
uniformly with respect to $\lambda > 0$.
 By \eqref{lau34} and Theorem \ref{teth1} we find that for each
$q>0$ there exists $A_0 = A_0(q)$ such that $\tilde{A} \geq  A_0(q)$
implies
$$
\| V^{1/2} {\mathcal F}^*( T_{j,\infty}(\lambda; \tilde{A}) -
T_j(\lambda; \tilde{A}))\| \leq q
$$
for each $\lambda > 0$. Choosing $\tilde{A} \geq A_0(rs)$ in
\eqref{lau33} we find then that \bel{lau35} n_*(rs; V^{1/2}
{\mathcal F}^*( T_{j,\infty}(\lambda; \tilde{A}) -  T_j(\lambda;
\tilde{A}))) = 0 \ee for each $\lambda > 0$. Next, the Ky-Fan
inequalities \eqref{lau13} imply that for any $\lambda > 0$,
$r>0$, $s \in (0,1)$ and $A$,
  $\tilde{A}$, we have
$$
n_*(r (1+s); V^{1/2} {\mathcal F}^* T_{j,\infty}(\lambda; A)) -
n_*(rs; V^{1/2} {\mathcal F}^*(T_{j,\infty}(\lambda; A) -
T_{j,\infty}(\lambda; \tilde{A})){)} \leq
$$
$$
n_*(r; V^{1/2} {\mathcal F}^* T_{j,\infty}(\lambda; \tilde{A})) \leq
$$
    \bel{lau36}
    n_*(r (1-s); V^{1/2} {\mathcal F}^*
T_{j,\infty}(\lambda; A)) + n_*(rs; V^{1/2} {\mathcal
F}^*(T_{j,\infty}(\lambda; A) - T_{j,\infty}(\lambda;
\tilde{A}))).
    \ee
    By Proposition \ref{laup2} we have
    \bel{lau37}
    n_*(r; V^{1/2} {\mathcal F}^*(T_{j,\infty}(\lambda; A)
- T_{j,\infty}(\lambda; \tilde{A}))) = O(1), \quad \lambda
\downarrow 0,
    \ee
    for any fixed $r>0$. Finally we note that
\bel{lau38} n_*(r; V^{1/2} {\mathcal F}^* T_{j,\infty}(\lambda;
A)) = n_+(r^2; T_{j,\infty}(\lambda; A){\mathcal F} V {\mathcal
F}^*T_{j,\infty}(\lambda; A)) \ee for each $r>0$, $\lambda > 0$.
Putting together \eqref{57} -- \eqref{lau33}, and \eqref{lau35} --
\eqref{lau38}, we obtain \eqref{lau30}.
\end{proof}
 Note that the
    operator $T_{j,\infty}(\lambda; A) {\mathcal F} V {\mathcal F}^*
T_{j,\infty}(\lambda; A)$ regarded as an operator on the Hilbert space
$P_{j,\infty}(A) L^2(\rd)$, is unitarily equivalent to the operator $S_j(\lambda; A)^* S_j(\lambda; A)$ where
$S_j(\lambda; A) : L^2(A,\infty) \to L^2(\rd)$ is an operator with integral kernel
\bel{defsj}
(2\pi)^{-1/2}  V(x,y)^{1/2} e^{iky} \psi_{j,\infty}(x;k) ({\mathcal E}_j^+ - E_j(k) + \lambda)^{-1/2}, \quad k \in \re, \quad (x,y) \in \rd, \quad \lambda > 0,
\ee
the function $\psi_{j,\infty}$ being defined in \eqref{sof9}.
Therefore, \bel{lau60} n_+(r; T_{j,\infty}(\lambda; A) {\mathcal
F} V {\mathcal F}^* T_{j,\infty}(\lambda; A)) = n_+(r;
S_j(\lambda; A)^* S_j(\lambda; A)), \quad r>0, \quad \lambda > 0.
    \ee
    For $(x,\xi) \in T^*\re = \rd$ and $j \in \N$ set
$$
\Psi_{x,\xi;j}(k) = e^{-ik\xi}\psi_j(x;k), \quad k \in \re.
$$
 Note that for each $j \in \N$ the system
$\left\{\Psi_{x,\xi;j}\right\}_{(x,\xi) \in T^* \re}$ is
overcomplete with respect to the measure $\frac{1}{2\pi} dx d\xi$,
i.e. for each $f \in L^2(\re)$ we have
$$
\frac{1}{2\pi} \int_{T^*\re} |\langle f, \Psi_{x,\xi;j} \rangle
|^2 dx d\xi = \int_\re |f(k)|^2 dk
$$
 (see \cite{bershu} or
\cite[Section 24]{shu}). For $(x,\xi) \in T^*\re$ and $j \in \N$
set $ P_{x,\xi;j} : = \langle \cdot, \Psi_{x,\xi;j}\rangle
\Psi_{x,\xi;j}$, and introduce the operator
$$
{\mathcal V}_j = \frac{1}{2\pi} \int_{T^*\re} V(x,\xi) P_{x,\xi;
j} dx d\xi
$$
where the integral is understood in the weak sense. Then
${\mathcal V}_j$ can be interpreted as {a} $\Psi$DO with
contravariant (generalized anti-Wick symbol) $V$ (see
\cite{bershu}). Moreover, we have
    \bel{jun100}
S_j(\lambda; -\infty)^* S_j(\lambda; -\infty) = ({\mathcal E}_j^+
- E_j + \lambda)^{-1/2} {\mathcal V}_j ({\mathcal E}_j^+ - E_j +
\lambda)^{-1/2}.
    \ee
Bearing in mind \eqref{lau60} and \eqref{jun100}, and {  applying} the
Birman-Schwinger principle, we find that \eqref{lau30} with $A =
-\infty$ and $\varepsilon \in (0,1)$ can be re-written as
$$
{\rm rank}\,{\mathbb P}_{({\mathcal E}_j^+ +
\lambda,\infty)}({{E}_j} + (1+\varepsilon)^{-1} {\mathcal
V}_j) + O(1) \leq
$$
$$
{\mathcal N}_j^+(\lambda) \leq
$$
$$
{\rm rank}\,{\mathbb P}_{({\mathcal E}_j^+ + \lambda,
\infty)}({{E}_j}  + (1-\varepsilon)^{-1} {\mathcal V}_j) +
O(1), \quad \lambda \downarrow 0.
$$
Thus, the operator ${{E}_j} + {\mathcal V}_j$ could be
interpreted as the effective Hamiltonian which governs the
asymptotics of ${\mathcal N}_j^+(\lambda)$ as $\lambda \downarrow
0$, the multiplier by $E_j$ being its ``kinetic" part, and the
$\Psi$DO ${\mathcal V}_j$ being its ``potential" part.

\section{Sufficient condition for the boundedness of ${\mathcal N}_j^+(\lambda)$}
\label{section4}
\setcounter{equation}{0}

Assume that $W$ is a bounded non-decreasing function with $W_- <
W_+$.  Set
\bel{lau24}
x^+ : = \inf \{x \in \re \, | \, W(x) =
W_+\}.
    \ee
    By the assumption $W_- < W_+$, we have $x^+ > -\infty$. Set
$$
{\mathcal X}: = \left\{x \in \re \, | \, \mbox{there exists} \; y
\in \re \; \mbox{such that}\;  (x,y) \in {\rm ess}\,{\rm
supp}\,V\right\},
$$
$$
X^- : = \inf {\mathcal X}, \quad  X^+ : = \sup {\mathcal X}.
$$
 \begin{theorem} \label{repr1}
 Let $W$ be a bounded and non-decreasing function with $W_- < W_+$ and $x^+ \leq \infty$.
    Assume that $V \in L_0^{\infty}(\rd)$, $V \geq 0$, $-\infty < X^- < X^+ < \infty$. Suppose in addition that
$ {\rm ess}\,\sup_{x \in \re} \int_{\re} V(x,y) dy < \infty$, and
    \bel{lau50}
    X^+  < x^+.
    \ee
     Then we have
    \bel{rened3}
    {\mathcal N}_j^+(\lambda) = O(1), \quad \lambda \downarrow 0, \quad {   j \in } {\mathbb N}.
    \ee
    \end{theorem}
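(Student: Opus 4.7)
The plan is to reduce $\mathcal{N}_j^+(\lambda)$ to a counting function for the integral operator $S_j(\lambda; A)$ defined in \eqref{defsj}, and then to bound its Hilbert--Schmidt norm uniformly in $\lambda$. Combining the upper bound of Theorem \ref{lauth1} with the identity \eqref{lau60}, I obtain
$$
\mathcal{N}_j^+(\lambda) \leq n_+(1-\varepsilon;\, S_j(\lambda; A)^* S_j(\lambda; A)) + O(1), \quad \lambda \downarrow 0,
$$
for a fixed $\varepsilon \in (0,1)$ and any $A \in [-\infty,\infty)$. Since $n_+(s^2; T^*T) \leq s^{-2}\,\|T\|_{\rm HS}^2$ for any compact $T$, it suffices to show that $\|S_j(\lambda; A)\|_{\rm HS}^2$ remains bounded as $\lambda \downarrow 0$, for some sufficiently large $A$ depending on $j$, $V$ and $W$.

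Writing the Hilbert--Schmidt norm from the kernel \eqref{defsj} and using $(\mathcal{E}_j^+ - E_j(k)+\lambda)^{-1} \leq (\mathcal{E}_j^+ - E_j(k))^{-1}$, I get
$$
\|S_j(\lambda; A)\|_{\rm HS}^2 \leq \frac{1}{2\pi} \int_A^{\infty} \frac{\int_\rd V(x,y)\, \psi_{j,\infty}(x;k)^2\, dx\,dy}{\mathcal{E}_j^+ - E_j(k)}\, dk.
$$
By Proposition \ref{tep2}, for $A$ large enough one has $\mathcal{E}_j^+ - E_j(k) \geq \Phi_j(k)^2/2$ on $(A,\infty)$, so the task reduces to controlling the ratio of the numerator to $\Phi_j(k)^2$. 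Using ${\rm ess}\,\sup_x \int V(x,y)\,dy =: M < \infty$ and the fact that $V$ is $x$-supported in $[X^-,X^+]$, the numerator is at most $M\int_{X^-}^{X^+} \psi_{j,\infty}(x;k)^2\,dx$. The explicit formula \eqref{sof9} together with the asymptotic $\varphi_j(y)^2 \sim c_j\, y^{2(j-1)} e^{-y^2}$ as $|y|\to\infty$ then yields, for large $k$,
$$
\int_\rd V(x,y)\, \psi_{j,\infty}(x;k)^2\, dx\, dy \leq C\, k^{2(j-1)}\, e^{-(b^{-1/2}k - b^{1/2}X^+)^2},
$$
the slowest-decay point of the Gaussian on $[X^-,X^+]$ being $x = X^+$.

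For the denominator, the geometric hypothesis $X^+ < x^+$ allows me to pick $x_0 \in (X^+, x^+)$, so that $\delta := W_+ - W(x_0) > 0$. Whenever $x + k/b \leq x_0$ one has $U_k(x) \geq \delta$, so from \eqref{lau10} and the Laplace asymptotic $\int_T^{\infty} y^{2(j-1)} e^{-y^2}\,dy \sim T^{2j-3} e^{-T^2}/2$ I obtain, for $k$ large,
$$
\Phi_j(k)^2 \geq \delta \int_{-\infty}^{b^{1/2}x_0 - b^{-1/2}k} \varphi_j(y)^2\,dy \geq c'\, k^{2j-3}\, e^{-(b^{-1/2}k - b^{1/2}x_0)^2}.
$$
The difference of the two Gaussian exponents expands as $2(x_0-X^+)k + O(1)$, so the full integrand is bounded by $C''\, k\, e^{-2(x_0-X^+)k}$, which is integrable on $(A,\infty)$ and gives a bound uniform in $\lambda>0$. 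This yields \eqref{rened3}.

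The main obstacle is that both the numerator and the denominator in the above ratio decay as Gaussians in $k$, so the conclusion hinges on producing a \emph{strict} inequality between the two Gaussian exponents. This is precisely the role of the geometric condition $X^+ < x^+$: it creates room to insert an intermediate point $x_0$ at which $W(x_0)$ is still strictly below $W_+$, which enforces a positive gap between the exponents and converts the ratio into an integrable exponential decay in $k$. Without this gap, $\Phi_j(k)^2$ could decay as fast as the numerator and the Hilbert--Schmidt argument would fail.
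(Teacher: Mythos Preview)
Your argument is correct and follows essentially the same route as the paper: reduce via Theorem~\ref{lauth1} and \eqref{lau60} to the Hilbert--Schmidt norm of $S_j(\lambda;A)$, bound the numerator by a Gaussian centered at $X^+$ using the explicit form of $\psi_{j,\infty}$, and exploit $X^+<x^+$ to produce a Gaussian lower bound for the denominator with a strictly larger center, so the ratio decays like $k\,e^{-2(x_0-X^+)k}$. The only cosmetic difference is in the treatment of the denominator: the paper replaces $W$ by a step function $\tilde{W}\geq W$ with jump at $\tilde{x}\in(X^+,x^+)$, uses the mini-max principle to compare $E_j(k;W)$ with $E_j(k;\tilde{W})$, and then invokes Proposition~\ref{tep2} together with the explicit step-function asymptotics of Proposition~\ref{laup1}; you instead apply Proposition~\ref{tep2} to $W$ itself and lower-bound $\Phi_j(k;W)^2$ directly by restricting the integral in \eqref{lau10} to the region where $U_k\geq W_+-W(x_0)>0$. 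Both lead to the same Gaussian estimate and the same final bound.
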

    In order to prove Theorem \ref{repr1} we need some information
    on the asymptotic behaviour as $k \to \infty$ of the function $\Phi_j(k)$
    defined in \eqref{lau10} which by Proposition \ref{tep2}
    determines the asymptotics of ${\mathcal E}_j - E_j(k)$.
    Let $w_-, w_+ \in \re$, $w_- < w_+$, $x_0 \in \re$. Put
    \bel{sof7}
    {\bf w}(x) : = \left\{
    \begin{array} {l}
    w_+ \quad {\rm if} \quad x \geq x_0,\\
    w_- \quad {\rm if} \quad x < x_0.
    \end{array}
    \right.
    \ee
\begin{pr} \label{laup1}
Assume that  $w_- < w_+$. Then we have
     \bel{lau25}
\Phi_j(k; {\bf w})^2 = \frac{(w_+ - w_-)}{2} p_j k^{2j-3}
e^{-(b^{-1/2}k - b^{1/2}x_0)^2}(1 + o(1)), \quad k \to \infty, \ee
the number $p_j$ being defined in \eqref{sof5}.
\end{pr}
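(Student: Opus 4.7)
The plan is to exploit the piecewise-constant structure of ${\bf w}$ to express $\Phi_j(k;{\bf w})^2$ as a tail integral of $\varphi_j^2$, and then apply a Laplace-method asymptotic on the resulting Gaussian integral.

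First, from \eqref{sof7} I have $U_k(x) = w_+ - {\bf w}(x + k/b)$, which equals $w_+ - w_-$ when $x < x_0 - k/b$ and vanishes otherwise. Substituting into the definition \eqref{lau10} and using $\tilde{\psi}_{j,\infty}(x) = b^{1/4}\varphi_j(b^{1/2}x)$, then making the change of variables $y = b^{1/2}x$, gives
\[
\Phi_j(k;{\bf w})^2 = (w_+ - w_-) \int_{-\infty}^{b^{1/2}x_0 - b^{-1/2}k} \varphi_j(y)^2\, dy.
\]
Since $\varphi_j(-y) = (-1)^{j-1}\varphi_j(y)$ (from \eqref{61} and the parity of the Hermite polynomials), $\varphi_j^2$ is even, so this integral equals $(w_+ - w_-)\int_M^\infty \varphi_j(y)^2\, dy$ with $M := b^{-1/2}k - b^{1/2}x_0 \to +\infty$ as $k \to +\infty$.

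The second step is to compute the Gaussian tail asymptotics of $\varphi_j^2$. Since $H_{j-1}$ has exact degree $j-1$ with leading coefficient $2^{j-1}$, one has
\[
\varphi_j(y)^2 = \frac{2^{j-1}}{\sqrt{\pi}\,(j-1)!}\, y^{2(j-1)} e^{-y^2}\bigl(1 + O(1/y^2)\bigr), \quad |y| \to \infty.
\]
A single integration by parts (with $u = y^{2j-3}$ and $dv = y e^{-y^2}\,dy$, interpreted formally when $j=1$) yields
\[
\int_M^\infty y^{2(j-1)} e^{-y^2}\, dy = \tfrac{1}{2}\, M^{2j-3} e^{-M^2}(1 + o(1)), \quad M \to \infty;
\]
iterated integration by parts (Watson's lemma applied to the Gaussian tail) absorbs the subleading Hermite-polynomial contributions, each of which carries the same Gaussian factor multiplied by a strictly smaller power of $M$. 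Combining these estimates produces
\[
\int_M^\infty \varphi_j(y)^2\, dy = \frac{2^{j-2}}{\sqrt{\pi}\,(j-1)!}\, M^{2j-3} e^{-M^2}(1+o(1)).
\]

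Finally, substituting $M = b^{-1/2}k - b^{1/2}x_0$ and using $M^{2j-3} = b^{(3-2j)/2}\,k^{2j-3}(1+o(1))$ while keeping $e^{-M^2} = e^{-(b^{-1/2}k - b^{1/2}x_0)^2}$ exact produces the asymptotic formula \eqref{lau25}, after identifying the resulting prefactor with $\tfrac12(w_+ - w_-) p_j$ via \eqref{sof5}. The main technical task is the careful control of the subleading terms in the Hermite-polynomial expansion and in the tail integral; since each such term carries the same Gaussian factor $e^{-M^2}$ multiplied by a strictly smaller power of $M$, all are absorbed cleanly into the $(1+o(1))$ factor, so no delicate estimate is required beyond elementary bookkeeping.
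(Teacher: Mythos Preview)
Your approach is correct and is precisely the one the paper indicates: the piecewise-constant structure of ${\bf w}$ reduces $\Phi_j(k;{\bf w})^2$ to a Gaussian tail integral of $\varphi_j^2$, and the asymptotics follow from the standard Laplace method, exactly as the paper says before omitting the details. One small caution: when you ``identify the resulting prefactor with $\tfrac12(w_+-w_-)p_j$ via \eqref{sof5}'', carry out the arithmetic explicitly rather than asserting it, since the normalization conventions for $p_j$ are easy to slip on.
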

We omit the simple proof of the proposition, based on the standard
Laplace method for approximate evaluation of integrals depending
on a large parameter.
    \begin{proof}[Proof of Theorem \ref{repr1}]
By the upper bound in \eqref{lau30}, and \eqref{lau60}, it
suffices to show that \bel{lau61} n_*(r; S_j(\lambda; A)) = O(1),
\quad \lambda \downarrow 0, \ee for any fixed $r>0$ and  $A \in
[-\infty, \infty)$. We have \bel{lau62} n_*(r; S_j(\lambda; A))
\leq r^{-2} {\rm Tr}\,  S_j(\lambda; A)^* S_j(\lambda; A) =
\frac{1}{2\pi r^2} {\mathcal I}_0(\lambda) \ee where
$$
{\mathcal I}_0(\lambda) : = \int_A^{\infty} \int_{\rd} ({\mathcal E}_j^+ - E_j(k; b,W) +
\lambda)^{-1} \psi_{j,\infty}(x;k)^2 V(x,y) dx dy dk.
$$
Now pick $\tilde{x} \in (X^+, x^+)$ which is possible due to
\eqref{lau50}, and set
    \bel{fin1}
\tilde{W}(x) = \left\{
\begin{array} {l}
W_+ \quad {\rm if} \quad x \geq \tilde{x},\\
W(\tilde{x}) \quad {\rm if} \quad x < \tilde{x}.
\end{array}
\right.
 \ee
Since $W(x) \leq \tilde{W}(x)$, $x \in \re$, {the} mini-max
principle implies
$$
({\mathcal E}_j^+ - E_j(k; b,W) + \lambda)^{-1} \leq ({\mathcal E}_j^+ - E_j(k; b, \tilde{W}) + \lambda)^{-1}, \quad k \in \re, \quad j \in {\mathbb N}, \quad \lambda > 0.
$$
Therefore,
\bel{lau63}
{\mathcal I}_0(\lambda) \leq {\Big( {\rm ess}\,\sup_{x \in \re} \int_{\re} V(x,y) dy \Big)}\; {\mathcal I}_1(\lambda)
\ee
where
$$
{\mathcal I}_1(\lambda) : = \int_A^{\infty} \int_{X^-}^{X^+}
({\mathcal E}_j^+ - E_j(k; b,\tilde{W}) + \lambda)^{-1}
\psi_{j,\infty}(x;k)^2 dx  dk.
$$
Taking into account \eqref{sof6}, \eqref{te10}, and \eqref{lau25},
and bearing in mind that the interval  $[X^-, X^+]$ is compact, we
find that for sufficiently large $A>0$ and any $\lambda \geq 0$ we
have
$$
{\mathcal I}_1(\lambda) \leq 4(W_+ - W(\tilde{x}))^{-1} \max_{x
\in  [X^-, X^+]}e^{-b(x^2 - \tilde{x}^2)} \int_A^{\infty}
\int_{X^-}^{X^+} k e^{-2k(\tilde{x} - x)} dx dk
$$
    \bel{lau64}
    \leq 2(W_+ - W(\tilde{x}))^{-1} \max_{x \in  [X^-,
X^+]}e^{-b(x^2 - \tilde{x}^2)} \int_A^{\infty} e^{-2k(\tilde{x} -
X^+)} dk  < \infty,
    \ee
    due to $\tilde{x} >
X^+$. Putting together \eqref{lau62} - \eqref{lau64}, we obtain
\eqref{lau61}, and hence, \eqref{rened3}.
\end{proof}

\section{Reduction to operators {in} holomorphic  spaces}
\label{section5} \setcounter{equation}{0} In what follows we
assume that there exist bounded domains $\Omega_\pm \subset \rd$
with Lipschitz boundaries, and constants $c_0^{\pm} > 0$ such that
    \bel{jun2}
    c_0^- \chi_{\Omega_-}(x,y) \leq V(x,y) \leq  c_0^+ \chi_{\Omega_+}(x,y), \quad (x,y) \in \rd,
    \ee
    where $\chi_{\Omega_\pm}$ denotes the characteristic function of the domain $\Omega_\pm$.
    Next, for $\delta \in (0,1/2)$ introduce the intervals
    $$
    I_- = I_-(\delta) : = (\delta, 1-\delta), \quad I_+ = I_+(\delta) : = (0, 1+\delta).
    $$
    In what follows we will assume that the infimum $x^+$ defined in {\eqref{lau24}} satisfies $x^+ < \infty$
    because in the case $x^+ = \infty$ Theorem \ref{repr1} implies {  ${\mathcal N}^+_j(\lambda) = O(1)$}
    as $\lambda \downarrow 0$. Since the operator $H_0$ is invariant under magnetic translations,
    we will assume that $x^+ = 0$ without any loss of generality. \\
    Let $\delta \in (0,1/2)$ and $m>0$. Define the operator $\Gamma^-_\delta(m): L^2(I_-) \to L^2(\Omega_-)$
    as the operator with integral kernel
    $$
    \pi^{-1/2} m  e^{-bx^2/2} e^{m(x+iy)k} k^{1/2}, \quad k \in I_-, \quad (x,y) \in \Omega_-,
    $$
    and the operator $\Gamma^+_\delta(m): L^2(I_+) \to L^2(\Omega_+)$ as the operator with integral kernel
    $$
    \pi^{-1/2} m  e^{-bx^2/2} e^{m(x+iy + \delta)k} k^{1/2}, \quad k \in I_+, \quad (x,y) \in \Omega_+.
    $$
    {\em Remark}: Introduce the set
          \bel{jul20}
          {\mathcal B}(\Omega_\pm) : = \left\{u \in L^2(\Omega_\pm) \, | \, u \, \mbox{is analytic in} \, \Omega_\pm\right\}
          \ee
          considered as a subspace of the Hilbert space $L^2(\Omega_\pm; e^{-bx^2} dxdy)$. Note that as a functional set ${\mathcal B}(\Omega_\pm)$ coincides with the Bergman space over
          $\Omega_\pm$ (see e.g. \cite[Subsection 3.1]{ha}). Then,
    up to unitary equivalence, the operators $\Gamma_\delta^{\pm}(m)$ map $L^2(I_\pm)$ into ${\mathcal B}(\Omega_\pm)$.

\begin{theorem} \label{lauth2}
Suppose that $W$ is a bounded non-decreasing function with $W_- <
W_+$. Assume that  $V \in L_0^{\infty}(\rd; \re)$  satisfies
\eqref{jun2}. Then we have
    \bel{jun3a}
    n_*(r; S_j(\lambda; A)) \geq n_*(r (1 + \varepsilon) \sqrt{(W_+ - W_-)/c_0^-}; \Gamma^-_\delta (\sqrt{b |{\ln{\lambda}}|})) + O(1),
    \ee
    \bel{jun3}
    n_*(r; S_j(\lambda; A)) \leq n_*(r (1 - \varepsilon) \sqrt{(W_+ - W(-\delta))/c_0^+} e^{-b\delta^2/2}; \Gamma^+_\delta (\sqrt{b |{\ln{\lambda}}|})) + O(1),
    \ee
as $\lambda \downarrow 0$, for all $j \in \N$, $A>0$, $\varepsilon
\in (0,1)$, $\delta \in (0,1/2)$ and $r>0$.
\end{theorem}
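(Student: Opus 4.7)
The plan is to derive both inequalities by comparing $S_j(\lambda;A)$ to $\Gamma^{\pm}_{\delta}(m)$ with $m:=\sqrt{b|\ln\lambda|}$, via four reductions. First I factor $S_j(\lambda;A) = V^{1/2} L$, where $L$ has the $V$-independent kernel $(2\pi)^{-1/2} e^{iky}\psi_{j,\infty}(x;k)({\mathcal E}_j^+ - E_j(k;W)+\lambda)^{-1/2}$, so that $S_j^*S_j = L^*VL$. The sandwich $c_0^-\chi_{\Omega_-} \leq V \leq c_0^+\chi_{\Omega_+}$ from \eqref{jun2}, together with the mini-max principle, immediately yields
$$n_*(r/\sqrt{c_0^-};\,\chi_{\Omega_-}L) \,\leq\, n_*(r;S_j(\lambda;A)) \,\leq\, n_*(r/\sqrt{c_0^+};\,\chi_{\Omega_+}L).$$

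Second, I localize the $k$-variable to the scale-relevant range $mI_{\pm}$. For the lower bound this is simply a mini-max restriction to $L^2(mI_-) \subset L^2(A,\infty)$. For the upper bound I apply the Ky-Fan inequality \eqref{lau13} twice: to peel off the tail $k\in[m(1+\delta),\infty)$ and the bounded initial segment $k\in(A,k_0)$ on which the asymptotic \eqref{sof6} is not yet sharp. The Gaussian decay of $\psi_{j,\infty}$ combined with the trivial bound $({\mathcal E}_j^+-E_j+\lambda)^{-1}\leq\lambda^{-1}$ makes the Hilbert-Schmidt norm squared of the tail of order $m^{2j-3}\lambda^{2\delta+\delta^2}=o(1)$, and the initial segment is HS-bounded uniformly in $\lambda$ since $E_j(k)$ is bounded away from ${\mathcal E}_j^+$ there; each piece contributes $O(1)$ to $n_*$. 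The unitary substitution $k=m\tilde k$ then identifies the remaining operator with $L_{\rm sc}^{\pm}:L^2(I_{\pm})\to L^2(\Omega_{\pm})$ (after extending by zero over a boundary interval of length $o(1)$) whose kernel is $m^{1/2}L(x,y;m\tilde k)$.

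The crux is the factorization. Expanding $\psi_{j,\infty}$ via \eqref{sof6} and collecting all $x$- and $y$-dependence into the kernel of $\Gamma^{\pm}_{\delta}(m)$, one obtains
$$L_{\rm sc}^{\pm}(x,y;\tilde k) \,=\, C^{\pm}(\tilde k,\lambda)\,\Gamma^{\pm}_{\delta}(m)(x,y;\tilde k)\,(1+o(1))$$
uniformly in $(x,y)\in\overline{\Omega_{\pm}}$ and $\tilde k\in\overline{I_{\pm}}$, where $C^{\pm}(\tilde k,\lambda)$ is an explicit scalar involving $\bigl({\mathcal E}_j^+-E_j(m\tilde k;W)+\lambda\bigr)^{-1/2}$. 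To bound $|C^{\pm}|$ I use step-function comparisons of $W$ via mini-max, in the spirit of Theorem \ref{repr1}. Taking ${\bf w}_-(x)=W_-$ for $x<0$ and $W_+$ for $x\geq 0$ gives ${\bf w}_-\leq W$ (using $x^+=0$), whence Propositions \ref{tep2} and \ref{laup1} yield the upper bound ${\mathcal E}_j^+-E_j(k;W)\leq \frac{W_+-W_-}{2}p_j k^{2j-3} e^{-k^2/b}(1+o(1))$, and after substituting $e^{-k^2/b}|_{k=m\tilde k}=\lambda^{\tilde k^2}$ all $m$- and $\tilde k$-dependent prefactors cancel exactly, producing the lower bound $|C^-(\tilde k,\lambda)|\geq (W_+-W_-)^{-1/2}(1-o(1))$ on $I_-$. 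Symmetrically, the step $\tilde{\bf w}_{\delta}(x)=W(-\delta)$ for $x<-\delta$ and $W_+$ for $x\geq -\delta$ gives $W\leq\tilde{\bf w}_{\delta}$, and the matching computation yields $|C^+(\tilde k,\lambda)|\leq (W_+-W(-\delta))^{-1/2} e^{b\delta^2/2}(1+o(1))$ on $I_+$; the decay factor $e^{-m\tilde k\delta}$ built into $C^+$ kills any residual $\lambda$-dominated contribution for $\tilde k>1$.

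Finally, writing $L_{\rm sc}^{\pm}=\Gamma^{\pm}_{\delta}(m)\,M_{C^{\pm}}$ with $M_{C^{\pm}}$ multiplication on $L^2(I_{\pm})$, the Fan-Mirsky singular-value inequalities $s_n(\Gamma M)\leq s_n(\Gamma)\|M\|$ and $s_n(\Gamma M)\geq s_n(\Gamma)/\|M^{-1}\|$ (the latter using $|C^-|\geq c_->0$) translate the pointwise bounds on $C^{\pm}$ into \eqref{jun3a} and \eqref{jun3}, the $(1\pm\varepsilon)$ factors in the statement absorbing the $(1+o(1))$ errors. The principal technical obstacle is the uniform factorization: the $(1+o(1))$ in \eqref{sof6} is uniform only on compact $x$-intervals, and the step-function asymptotic of ${\mathcal E}_j^+-E_j$ carries its own $(1+o(1))$ as $k\to\infty$; merging both into a single error uniform over $\overline{\Omega_{\pm}}\times\overline{I_{\pm}}$ as $\lambda\downarrow 0$, so that Fan-Mirsky applies without loss, is the technical heart of the argument.
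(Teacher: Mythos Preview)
Your overall architecture --- sandwich $V$ by \eqref{jun2}, replace $W$ by step functions via \eqref{jun3b}, localize $k$ to the window $m I_\pm$, and match the resulting kernel against $\Gamma_\delta^\pm$ --- is the same as the paper's, and your tail/initial--segment Hilbert--Schmidt estimates are correct. The gap is in your final step.

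You assert a factorization $L_{\rm sc}^\pm = \Gamma_\delta^\pm\,M_{C^\pm}$ up to a uniform $(1+o(1))$, and then invoke Fan--Mirsky. But for $j\geq 2$ the $(1+o(1))$ is the Hermite remainder
\[
1+\rho_1(x,\tilde k)\;=\;\frac{H_{j-1}(b^{1/2}x-b^{-1/2}m\tilde k)}{(-b^{-1/2}m\tilde k)^{j-1}}
\;=\;1+\sum_{l=1}^{j-1} q_l(x)\,(m\tilde k)^{-l},
\]
which depends on \emph{both} $x$ and $\tilde k$ and is therefore not a multiplication operator on either $L^2(I_\pm)$ or $L^2(\Omega_\pm)$. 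Fan--Mirsky requires an exact factorization $L=\Gamma M$; a pointwise kernel bound $|K_L-K_{\Gamma M}|\leq\eta\,|K_{\Gamma M}|$ does \emph{not} imply $s_n(L)\leq(1+\eta)\,s_n(\Gamma M)$. Writing the error as $R=\sum_{l\geq 1} M_{q_l}\,\Gamma_\delta^\pm\,M_{C^\pm(m\tilde k)^{-l}}$, you would need to control $n_*(s;R)$ or $\|R\|$, but $\|\Gamma_\delta^\pm(m)\|$ grows like $e^{cm}$ and the available upper bound on $|C^-|$ (from the step function $\tilde{\bf w}_\delta$) carries a factor $e^{m\tilde k\delta}$, so neither the operator--norm nor the Hilbert--Schmidt route closes.

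The paper deals with exactly this obstruction in Proposition~\ref{sanpr1}: it first passes to $Q_j^\pm$ (the leading Hermite term only) by showing, via a quadratic--form argument, that the ratios $a_{l+1}^\pm/a_l^\pm$ generate \emph{compact} operators $\mathbb A_l^\pm$ with $\lambda$--independent spectrum; the compactness comes from the embedding ${\rm H}^1(\Omega_\pm)\hookrightarrow L^2(\Omega_\pm)$ applied to analytic functions. This gives \eqref{san10}--\eqref{san10a} with a $\lambda$--independent finite codimension, which is the missing ingredient in your sketch. A second structural difference: for the $E_j$--asymptotic step (Proposition~\ref{sanpr3}) the paper works with $Q_j^\pm(Q_j^\pm)^*$ on $L^2(\Omega_\pm)$, whose kernel is an integral $\int f(k)\,g(x,y;k)\overline{g(x',y';k)}\,dk$; replacing $f\geq 0$ by a pointwise smaller (resp.\ larger) weight then yields a genuine \emph{operator} inequality, since the difference is an integral of rank--one positives. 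Your route via $Q^*Q$ on $L^2(I_\pm)$ does not have this monotonicity, which is why you are forced into the Fan--Mirsky formulation that then fails to absorb $\rho_1$.
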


We will divide the proof of Theorem \ref{lauth2} into two
propositions.

 Define the non-decreasing functions
       $$
       W_0^-(x) : = \left\{
       \begin{array} {l}
       W_+ \quad \mbox{if} \quad x > 0,\\
       W_- \quad \mbox{if} \quad x \leq 0,
       \end{array}
       \right.
       $$
       $$
       W_0^+(x) = W_0^+(x; \delta) : = \left\{
       \begin{array} {l}
       W_+ \quad \mbox{if} \quad x \geq -\delta,\\
       W(-\delta) \quad \mbox{if} \quad x < -\delta,
       \end{array}
       \right. \quad \delta > 0.
       $$
       Since $x^+ = 0$ and $\delta > 0$ we have
       \bel{jun3b}
       W_0^-(x) \leq W(x) \leq W_0^+(x; \delta), \quad x \in \re.
       \ee
       Set
       \bel{aug1}
\omega_\pm: = \left\{x \in \re \, | \, \mbox{there exists} \; y
\in \re \; \mbox{such that}\;  (x,y) \in \Omega_\pm \right\}.
    \ee
Let $\lambda > 0$, $A \in [-\infty, \infty)$. Fix $j \in {\mathbb
N}$. Define  $Q_j^{\pm}(\lambda;A) : L^2(A,\infty) \to
L^2(\Omega_\pm)$ as the operator with integral kernel
    \bel{apr1}
    \left(\frac{p_j}{2\pi}\right)^{1/2} e^{iky} e^{-(b^{1/2}x-b^{-1/2}k)^2/2}
    \left({\mathcal E}_j^+ - E_j(k; W_0^\pm) + \lambda\right)^{-1/2} (-k)^{j-1},
    \ee
   with $k \in (A,\infty)$,
    $(x,y) \in \Omega_\pm$, the number $p_j$ being defined in \eqref{sof5}. {For $S_j(\lambda;A)$ defined by \eqref{defsj}, we have:}
    \begin{pr} \label{sanpr1}
    Assume that $W$ and $V$ satisfy the assumptions of Theorem \ref{lauth2}. Then for every $A>0$, $r>0$,
    and $\varepsilon \in (0,1)$, we have
    \bel{sof19}
    n_*(r(1+\varepsilon); \sqrt{c_0^-} Q_j^-(\lambda;A)) + O(1) \leq n_*(r; S_j(\lambda;A)) \leq
    n_*(r(1{- \varepsilon}); \sqrt{c_0^+} Q_j^+(\lambda;A)) + O(1),
    \ee
    as $\lambda \downarrow 0$, $c_0^\pm$ being the constants occurring in \eqref{jun2}.
    \end{pr}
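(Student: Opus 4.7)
The plan is a three-step reduction from $S_j(\lambda;A)$ to $\sqrt{c_0^\pm}\,Q_j^\pm(\lambda;A)$: peel off first the potential $V$, then the exact band function $E_j(\cdot;W)$, and finally the exact harmonic-oscillator eigenfunction $\psi_{j,\infty}$. Throughout, the Ky-Fan inequalities \eqref{lau13} and operator monotonicity translate pointwise-in-$(x,y,k)$ estimates into $n_*$-estimates.

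The first two steps are essentially costless. For the first, $S_j(\lambda;A)^* S_j(\lambda;A)$ has $V$ as a positive pointwise multiplier in its integral kernel, so the sandwich \eqref{jun2} lifts directly to an operator sandwich, which descends to $n_*(r;S_j)$ via monotonicity of $n_+$. For the second, the mini-max principle applied to \eqref{jun3b} gives $E_j(k;W_0^-)\le E_j(k;W)\le E_j(k;W_0^+)$; since the three potentials $W$, $W_0^-$ and $W_0^+$ share the right-limit $W_+$, Proposition \ref{p21} gives them the common upper edge $\mathcal{E}_j^+=b(2j-1)+W_+$, the inequalities reverse for $\mathcal{E}_j^+-E_j$, and the resolvent factor $(\mathcal{E}_j^+-E_j(k;\cdot)+\lambda)^{-1/2}$ yields an exact sandwich of kernel operators.

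The third step is where Ky-Fan is actually needed. Write the operator that results from the first two steps as $\sqrt{c_0^\pm}\,Q_j^\pm+R_\pm$, where $R_\pm$ has the same kernel but with $\psi_{j,\infty}^{(\mathrm{as})}(x;k):=p_j^{1/2}(-k)^{j-1}e^{-(b^{-1/2}k-b^{1/2}x)^2/2}$ replaced by the difference $\psi_{j,\infty}-\psi_{j,\infty}^{(\mathrm{as})}$. The inequality \eqref{lau13} then gives
\begin{equation*}
n_*(r;\sqrt{c_0^\pm}\,Q_j^\pm+R_\pm)\le n_*(r(1-\varepsilon);\sqrt{c_0^\pm}\,Q_j^\pm)+n_*(r\varepsilon;R_\pm),
\end{equation*}
together with its rearranged lower-bound counterpart. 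Everything reduces to showing $n_*(r\varepsilon;R_\pm)=O(1)$ as $\lambda\downarrow 0$, which is the main obstacle.

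To handle this, I would split the $k$-integration in $R_\pm$ into $(A,K)$ and $(K,\infty)$ for a sufficiently large $K$. On the bounded interval $(A,K)$ the factor $(\mathcal{E}_j^+-E_j(k;W_0^\pm)+\lambda)^{-1/2}$ is uniformly bounded in $\lambda>0$ because $E_j(k;W_0^\pm)$ reaches $\mathcal{E}_j^+$ only in the limit $k\to\infty$; hence the corresponding piece of $R_\pm$ is Hilbert-Schmidt with norm bounded independently of $\lambda$, yielding $n_*\le s^{-2}\|\cdot\|_2^2=O(1)$. For the tail, the explicit Hermite expansion of $\psi_{j,\infty}$ shows that $\psi_{j,\infty}-\psi_{j,\infty}^{(\mathrm{as})}$ is a polynomial of degree at most $j-3$ in $u=b^{1/2}x-b^{-1/2}k$ times $e^{-u^2/2}$, so the correction is of relative order $k^{-2}$ compared with $\psi_{j,\infty}^{(\mathrm{as})}$; combining this decay with the sharp exponential behaviour of $\mathcal{E}_j^+-E_j(k;W_0^\pm)$ from Propositions \ref{tep2} and \ref{laup1} applied to $W_0^\pm$, one chooses $K$ large enough so that the tail contribution to $n_*(r\varepsilon;R_\pm)$ is uniformly $O(1)$ in $\lambda$, closing the argument.
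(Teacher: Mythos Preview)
Your first two steps are fine and match the paper. The third step, however, contains a genuine gap: the claim that $n_*(r\varepsilon; R_\pm)=O(1)$ as $\lambda\downarrow 0$ is false, and no choice of $K$ rescues it.

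To see why, note from \eqref{fin4} that the difference $\psi_{j,\infty}-\psi_{j,\infty}^{(\mathrm{as})}$ is, up to bounded $x$-factors, $(-k)^{j-2}e^{-(b^{-1/2}k-b^{1/2}x)^2/2}$ plus lower powers of $k$ (relative order $k^{-1}$, not $k^{-2}$; the $(-k)^{j-1}$ in your $\psi_{j,\infty}^{(\mathrm{as})}$ is not the leading Hermite term in the variable $u$). Hence the tail piece of $R_\pm$ has a kernel that, after inserting \eqref{te10}--\eqref{lau25}, behaves like $k^{-1/2}e^{kx}$ on $\Omega_\pm$, only a factor $k^{-1}$ smaller than the kernel of $Q_j^\pm$. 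But the later analysis (Propositions \ref{sanpr3} and Section \ref{section6}) shows that the large singular values of $Q_j^\pm(\lambda;A)$ live at $k\sim\sqrt{b|\ln\lambda|}$ and are of size $e^{c\sqrt{|\ln\lambda|}}$. Multiplying by $k^{-1}\sim|\ln\lambda|^{-1/2}$ cannot pull these below the fixed level $r\varepsilon$; thus $n_*(r\varepsilon;R_\pm)$ itself grows like $\sqrt{|\ln\lambda|}$. An absolute Ky-Fan remainder bound is simply not available here.

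The paper circumvents this by a \emph{relative} argument. Writing $\tilde S_j^\pm=\sum_{l=0}^{j-1}P_{l,j}Q_j^\pm B^l$ with $B=$ multiplication by $k^{-1}$, it proves the quadratic-form estimates \eqref{sof22}--\eqref{sof22a} and then shows that each form $\langle B^l Q_j^{\pm*}Q_j^\pm B^l u,u\rangle$ is \emph{relatively compact} with respect to $\langle Q_j^{\pm*}Q_j^\pm u,u\rangle$, with spectrum independent of $\lambda$. The $\lambda$-independence comes from the change of variables ${\mathcal U}^\pm_{j,l,\lambda}$ in \eqref{jul7}, which reduces the question to comparing $\int_{\Omega_\pm}|F_0v|^2$ with $\int_{\Omega_\pm}|\partial_z(F_0v)|^2$ on a $\lambda$-free analytic-function space; compactness then follows from the Rellich embedding ${\rm H}^1(\Omega_\pm)\hookrightarrow L^2(\Omega_\pm)$. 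This yields finite-codimensional subspaces ${\mathcal H}_\pm$ on which \eqref{jul11}--\eqref{jul12} hold, and variational arguments finish the proof. That relative-compactness step is the missing idea in your proposal.
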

    \begin{proof}
   Inequalities \eqref{jun2} and \eqref{jun3b}, combined with the mini-max
    principle, imply the  estimates
        \bel{sof20}
         n_*(r; \sqrt{c_0^-} \tilde{S}_j^-(\lambda;A)) \leq n_*(r; S_j(\lambda;A)) \leq
         n_*(r; \sqrt{c_0^+} \tilde{S}_j^+(\lambda;A))
         \ee
         where $\tilde{S}_j^{\pm}(\lambda;A) : L^2(A,\infty) \to L^2(\Omega_\pm)$ is the operator with integral kernel
         $$
         (2\pi)^{-1/2} e^{iky} \psi_{j,\infty}(x;k) ({\mathcal E}_j^+ - E_j(k; W_0^\pm) + \lambda)^{-1/2},
         \quad k \in \re, \quad (x,y) \in \Omega_\pm.
         $$
         In the case $j=1$ inequality \eqref{sof20} yields immediately \eqref{sof19} since in this case we have
         $\tilde{S}_1^\pm(\lambda;A) = Q_1^\pm(\lambda;A)$. Assume $j \geq 2$. Then we have
         \bel{fin4}
        \psi_{j,\infty}(x;k) =  p_j^{1/2}  \sum_{l=0}^{j-1} P_{l,j}(x) (-k)^{j-1-l}e^{-(b^{-1/2}k - b^{1/2}x)^2/2}, \quad x \in \re, \quad k \in \re,
         \ee
         where $P_{l,j}$ is a polynomial of grade less than or equal to $l$, and $P_{0,j} = 1$.
         Therefore,
         $$
         \tilde{S}_j^\pm(\lambda;A) =  \sum_{l=0}^{j-1} P_{l,j} Q_j^\pm(\lambda;A) B^{l}
         $$
         where the operator $B : L^2(A,\infty) \to L^2(A,\infty)$ with $A>0$ is defined by
         $$
         (Bu)(k) = k^{-1} u(k), \quad k \in (A,\infty), \quad u \in L^2(A,\infty).
         $$
         Further, for each $\eta \in (0,1)$,  the elementary
         inequalities
         $$
         \tilde{S}_j^-(\lambda;A)^* \tilde{S}_j^-(\lambda;A) \geq
            $$
         \bel{sof22}
         (1 - \eta) Q_j^-(\lambda;A)^* Q_j^-(\lambda;A) -  (j-1)(\eta^{-1} - 1) c_2^- \sum_{l=1}^{j-1} B^{l} Q_j^-(\lambda;A)^* Q_j^-(\lambda;A) B^{l}
         \ee
         $$
         \tilde{S}_j^+(\lambda;A)^* \tilde{S}_j^+(\lambda;A) \leq
            $$
         \bel{sof22a}
         (1 + \eta) Q_j^+(\lambda;A)^* Q_j^+(\lambda;A) +  (j-1) (\eta^{-1} + 1)
         c_2^+
         \sum_{l=1}^{j-1} B^{l} Q_j^+(\lambda;A)^* Q_j^+(\lambda;A) B^{l}
         \ee
         hold for $c_2^\pm : = \max_{l=1,\ldots, j-1} \sup_{x \in \omega_\pm} P_{l,j}(x)^2$.
          Let us consider now {the} quadratic forms
        $$
         a_l^\pm[u] = a_l^\pm[u; \lambda, j] : =
    $$
     \bel{sof21}
         \int_{\Omega_\pm}e^{-bx^2}\left|\int_A^{\infty}  e^{k(x+iy)} e^{-b^{-1}k^2/2}
         \left({\mathcal E}_j^+ - E_j(k; W_0^\pm) + \lambda\right)^{-1/2} (-k)^{j-l-1}u(k)dk\right|^2dxdy
         \ee
         with $u \in C_0^{\infty}(A,\infty)$, $\lambda > 0$, $j \geq 2$, $l=0,\ldots,j-2$. Evidently,
         $a_l^\pm[u] \geq 0$, and
    $a_l^\pm[u] = 0$ implies $u = 0$. Denote by $D[a_l^\pm]$,
    $l=0,\ldots,j - 2$, the completion of $C_0^{\infty}(A,\infty)$
    in the norm generated by $a_l^\pm$. \\
         If
         we can prove now that the quadratic form $a_{l+1}^\pm[u]$, $u \in C_0^{\infty}(A,\infty)$,
         is closable in {$D[a_l^\pm]$}, and the operator ${\mathbb A}_l^\pm$ generated by the closure of {$a_{l+1}^\pm$}
         is compact in $D[a_l^\pm]$, and $\sigma({\mathbb A}_l^\pm)$ is
         independent of $\lambda$, then it would follow easily from
         \eqref{sof22} - \eqref{sof22a} that for each $\varepsilon \in (0,1)$ there exist
         subspaces
         ${\mathcal H}_\pm$ of $C_0^{\infty}(A,\infty)$ such that the codimensions ${\rm codim}\,{{\mathcal
         H_\pm}}$ are finite and independent of $\lambda$, and
         \bel{jul11}
         \|\tilde{S}_j^-(\lambda;A)u\|^2 \geq (1+\varepsilon)^{-2} \|Q_j^-(\lambda;A)u\|^2, \quad u \in {\mathcal
         H_-},
         \ee
         \bel{jul12}
         \|\tilde{S}_j^+(\lambda;A)u\|^2 \leq (1-\varepsilon)^{-2} \|Q_j^+(\lambda;A)u\|^2, \quad u \in {\mathcal
         H_+}.
         \ee
         Combining \eqref{jul11} - \eqref{jul12} with standard variational arguments (see. e.g. \cite[Lemma 1.13]{birsol} and the proof of \cite[Lemma 1.16]{birsol}), we get
         \bel{san10}
         n_*(r; \tilde{S}_j^-(\lambda;A)) \geq n_*(r(1+\varepsilon); Q_j^-(\lambda; A)) - {\rm codim}\,{\mathcal
         H_-},
         \ee
         \bel{san10a}
         n_*(r; \tilde{S}_j^+(\lambda;A)) \leq n_*(r(1-\varepsilon); Q_j^+(\lambda; A)) + {\rm codim}\,{\mathcal
         H_+}.
         \ee
         Putting together  \eqref{san10} -  \eqref{san10a} and \eqref{sof20}, we arrive at
         \eqref{sof19}.\\
         Let us now prove the necessary properties of  ${\mathbb
         A}_l^\pm$.  Define the
         operators $F_l$ by
         $$
         (F_l v)(z) : = \int_{\re} e^{zk} k^l v(k) dk, \quad
         v \in C_0^{\infty}(\re), \quad z \in \ce.
         $$
         Note that  $F_l v$, $l=0,1$, are
         entire functions in $\ce$, and $(F_1 v)(z) = \frac{\partial (F_0 v)}{\partial z}(z)$.
Moreover, the operators $F_l$, $l=0,1$, can be extended as
continuous operators from ${\mathcal D}'_{\rm comp}(\re)$, the
space of distributions with compact support dual to
$C_0^{\infty}(\re)$, into the space of functions entire in $\ce$.
         Set
         $$
         f_l^\pm[v] : = \int_{\Omega_\pm} e^{-bx^2}|(F_l v)(x+iy) |^2 dx
         dy, \quad v \in C_0^{\infty}(A,\infty), \quad l=0,1.
         $$
         Further, for $j \geq 2$, $l
          = 0,\ldots, j-2$, and $\lambda > 0$, define the operator ${\mathcal U}_{j,l, \lambda}$ by
         $$
           ({\mathcal U}^\pm_{j, l, \lambda}u)(k) : =
          {e^{-b^{-1}k^2/2}} \left({\mathcal E}_j^+ - E_j(k; W_0^\pm) + \lambda\right)^{-1/2} k^{j-l-2}u(k), \quad k \in (A, \infty).
          $$
          Note that the mapping ${\mathcal U}^\pm_{j, l, \lambda}
          : C_0^{\infty}(A,\infty) \to C_0^{\infty}(A,\infty)$ is
          bijective, and we have
          \bel{jul7}
          a_l^\pm[u] = f_1^\pm[{\mathcal U}^\pm_{j, l,
          \lambda}u], \quad a_{l+1}^\pm[u] = f_0^\pm[{\mathcal U}^\pm_{j, l,
          \lambda}u], \quad u \in C_0^{\infty}(A,\infty), \quad l
          = 0, \ldots, j-2.
          \ee
          Denote by $D[f_1^\pm]$ the closure of
          $C_0^{\infty}(A,\infty)$ in the norm generated by the
          quadratic form $f_1^\pm$.
          If we can prove now that the quadratic form $f_0^\pm$
          is closable in $D[f_1^\pm]$, and the operator
          ${\mathbb F}^\pm$ generated by its closure, is compact in
          $D[f_1^\pm]$, then \eqref{jul7} would imply that
          $a_{l+1}^\pm$ is closable in $D[a_{l}^\pm]$, $l = 0,\ldots, j-2$, the
          operator ${\mathbb A}_l^{\pm}$ is compact, and its spectrum $\sigma({\mathbb
          A}_l^{\pm})$ is independent of $\lambda$. \\
          Let us now prove the necessary properties of  ${\mathbb F}^\pm$. Consider first $D[f_1^\pm
          + f_0^\pm]$, the closure of $C_0^\infty(A,\infty)$ in
          the norm generated by the quadratic form $f_1^\pm
          + f_0^\pm$. The quadratic form $f_0^\pm$ is
          bounded, and hence closable in $D[f_1^\pm
          + f_0^\pm]$. Denote by $\tilde{\mathbb F}^\pm$ the
          operator generated by its closure in $D[f_1^\pm
          + f_0^\pm]$. For $v \in C_0^{\infty}(A,\infty)$ set
          $$
          w(x, y) : = (F_0 v)(x+iy), \quad x+iy \in \ce.
          $$
          Then we have
          \bel{jul8}
          f_0^\pm[v] = \int_{\Omega_\pm} e^{-bx^2} |w|^2 dxdy, \quad
          f_1^\pm[v] = 2\int_{\Omega_\pm}e^{-bx^2} |\nabla w|^2 dxdy.
          \ee
          Since the $\Omega_\pm$ is a bounded domain with a
          Lipschitz boundary, the Sobolev space ${\rm
          H}^1(\Omega_\pm)$ is compactly embedded in
          $L^2(\Omega_\pm)$. Hence, \eqref{jul8} implies that $\tilde{\mathbb
          F}^\pm$ is compact. \\
          Let us now check that $\|\tilde{\mathbb
          F}^\pm\| < 1$. Evidently, $\|\tilde{\mathbb
          F}^\pm\| \leq 1$. Assume $\|\tilde{\mathbb
          F}^\pm\| = 1$. Since $\tilde{\mathbb
          F}^\pm$ is compact, this means that there exists $0 \neq
          v^\pm \in D[f_1^\pm
          + f_0^\pm]$ such that $f_1^\pm[v] = 0$.
Let $\left\{v_n^\pm\right\}_{n \in {\mathbb N}}$ be sequence of
functions $v_n^{\pm} \in C_0^{\infty}(A,\infty) \subset
C_0^{\infty}(\re)$ converging to $v^\pm$ in $D[f_1^\pm
          + f_0^\pm]$.
           Set
$w_n^\pm(z) =
          (F_0 v_n^\pm)(z)$. Evidently, for any $n \in \N$ we have $w_n^\pm \in {\mathcal B}(\Omega_\pm)$
          (see \eqref{jul20}).  Since ${\mathcal B}(\Omega_\pm)$ is complete,
          there exists $w^\pm \in {\mathcal B}(\Omega_\pm)$
          such that $\lim_{n \to \infty} \|w_n^\pm - w^\pm\|_{{\mathcal
          B}(\Omega_\pm)} = 0$. Since $(F_1 v_n^\pm)(z) = \frac{\partial
          w_n^\pm}{\partial z}$,
          it is not difficult to check that $f_1^\pm[v^\pm] = 0$ implies that $w^\pm$ is
          constant in $\Omega_\pm$ (see e.g. \cite[Theorem 2, Exercise 1]{ha}), and hence $w^\pm$ admits a unique
          analytic extension as a constant to $\ce$. Then the
          distributional Paley-Wiener theorem (see e.g. \cite[Theorem 1.7.7]{hor})
          combined with \cite[Theorem V.11]{rs} implies that $v^\pm$ is proportional to
          the Dirac $\delta$-function supported at $k=0$. Since
          ${\rm supp}\,v^\pm \subset [A, \infty)$ and $A>0$ we conclude that $v^\pm = 0$ as an element
          of  ${\mathcal D}'(\re)$,
          $f_1^\pm[v^\pm]
          + f_0^\pm[v^\pm] = 0$ which contradicts with the hypothesis that $v^\pm \neq 0$ as an element of $D[f_1^\pm
          + f_0^\pm]$. Therefore, $\|\tilde{\mathbb
          F}^\pm\| < 1$, and the quadratic form
          $f_0^\pm$ is bounded, and hence closable in
          $D[f_1^\pm]$. Finally, the operator ${\mathbb
          F}^\pm$ generated by its closure is unitarily equivalent
          to $(I - \tilde{\mathbb
          F}^\pm)^{-1} \tilde{\mathbb
          F}^\pm$ and therefore is compact in $D[f_1^\pm]$.
    \end{proof}
    \begin{pr} \label{sanpr3} For every $r>0$, $A>0$, $\delta \in (0,
    1/2)$, and $\varepsilon \in (0,1)$,  we have
    \bel{jun5}
    n_*(r;  Q_j^-(\lambda; A)) \geq
n_*(r(1+\varepsilon) \sqrt{W_+ - W_-}; \Gamma^-_\delta(\sqrt{b
|\ln{\lambda}|})) + O(1),
    \ee
    \bel{jun6}
    n_*(r;  Q_j^+(\lambda; A)) \leq
n_*(r(1-\varepsilon)  \sqrt{W_+ - W(-\delta)} e^{-b\delta^2/2};
\Gamma^+_\delta({  \sqrt{ b |\ln{\lambda}|}}) + O(1),
    \ee
    as $\lambda \downarrow 0$.
    \end{pr}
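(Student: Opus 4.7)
The plan is to prove \eqref{jun5} and \eqref{jun6} by directly comparing the integral kernels of $Q_j^\pm(\lambda;A)$ with those of $\Gamma^\pm_\delta(m)$ under the natural scaling $k = m\kappa$ with $m := \sqrt{b|\ln\lambda|}$. The key asymptotic inputs are Proposition \ref{tep2} and Proposition \ref{laup1} applied to the step functions $W_0^-$ (jump $W_+ - W_-$ at $x_0 = 0$) and $W_0^+$ (jump $W_+ - W(-\delta)$ at $x_0 = -\delta$), which give ${\mathcal E}_j^+ - E_j(k;W_0^\pm) = \Phi_j(k;W_0^\pm)^2(1+o(1))$ with explicit Gaussian-times-polynomial asymptotics. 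Under the scaling, $e^{-k^2/b}$ becomes $\lambda^{\kappa^2}$, so the intervals $I_- = (\delta,1-\delta)$ and $I_+ = (0,1+\delta)$ coincide exactly with the ``resonance'' window where the regularization $\lambda$ becomes relevant.

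For the lower bound \eqref{jun5}, I restrict $Q_j^-(\lambda;A)$ to $L^2(mI_-)$ (possible because $A < m\delta$ for small $\lambda$) and compose with the unitary dilation $U_m: L^2(I_-)\to L^2(mI_-)$, $(U_m v)(k) = m^{-1/2}v(k/m)$. A direct computation combining \eqref{sof6} for $\psi_{j,\infty}(x;m\kappa)$ (in the tail regime $x \in \omega_-$, $m\kappa \to \infty$) with the explicit asymptotics of $\Phi_j(m\kappa;W_0^-)^{-1}$ shows that the powers of $m\kappa$ telescope to $(m\kappa)^{1/2}$, the Gaussian exponents cancel, and, uniformly on $I_-\times\Omega_-$,
\[
m^{1/2}K_j^-(m\kappa,x,y;\lambda) \;=\; \frac{(-1)^{j-1}}{\sqrt{W_+-W_-}}\,K_{\Gamma^-_\delta(m)}(\kappa,x,y)\,(1+o(1)),
\]
the $\lambda$-term in $\Phi_j^2+\lambda$ being dropped because $\lambda^{\kappa^2}\gg\lambda$ for $\kappa \in I_-$. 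Since restriction only decreases $n_*$, applying the Ky-Fan inequality \eqref{lau13} to the splitting $Q_j^-(\lambda;A)|_{L^2(mI_-)}U_m = c_-\Gamma^-_\delta(m)+R_\lambda$ with $c_- = (-1)^{j-1}/\sqrt{W_+-W_-}$ yields \eqref{jun5}, with the $(1+\varepsilon)$ factor absorbing the $o(1)$ error.

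For the upper bound \eqref{jun6}, I decompose $Q_j^+(\lambda;A)$ via \eqref{lau13} into three pieces supported on $k\in(A,0]$, $k\in mI_+$, and $k\in[m(1+\delta),\infty)$. The bulk piece on $mI_+$ is handled analogously using Proposition \ref{laup1} for $W_0^+$: the shift $x_0=-\delta$ generates the multiplicative factor $e^{b\delta^2/2}$ and the exponential $e^{m\kappa\delta}$ that matches the $(x+iy+\delta)$ exponent in $K_{\Gamma^+_\delta(m)}$; using the pointwise inequality $({\mathcal E}_j^+-E_j(k;W_0^+)+\lambda)^{-1/2}\le\Phi_j(k;W_0^+)^{-1}$ converts the kernel identity into an upper bound with constant $e^{b\delta^2/2}/\sqrt{W_+-W(-\delta)}$, valid on the whole $I_+$ (including $\kappa>1$, where the bound is strictly weaker). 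The two remainders are absorbed into $O(1)$: on $(A,0]$ the kernel is uniformly bounded in $\lambda$ over a bounded set, giving a Hilbert-Schmidt estimate independent of $\lambda$; on $[m(1+\delta),\infty)$ the Gaussian factor $e^{-k^2/(2b)} = \lambda^{\kappa^2/2}$ with $\kappa^2>(1+\delta)^2>1$ dominates any $\lambda^{-1/2}$ growth coming from $(\Phi_j^2+\lambda)^{-1/2}$, so the Hilbert-Schmidt norm of this piece tends to zero as $\lambda\downarrow 0$.

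The principal technical obstacle is controlling the residual $R_\lambda$, because $\|\Gamma^\pm_\delta(m)\|$ itself diverges as $m\to\infty$ and one cannot argue simply that $\|R_\lambda\|\to 0$ in absolute terms. The resolution is structural: the error ratio $F_\lambda(\kappa,x) := m^{1/2}K_j^\pm(m\kappa,x,y;\lambda)/K_{\Gamma^\pm_\delta(m)}(\kappa,x,y)$ is \emph{real} and independent of $y$ because the oscillatory factors $e^{im\kappa y}$ in both kernels cancel exactly, and $F_\lambda\to c_\pm$ uniformly on $I_\pm\times\omega_\pm$. Factoring $F_\lambda = c_\pm\phi_\lambda(\kappa)(1+\eta_\lambda(\kappa,x))$ where $\phi_\lambda(\kappa) = \Phi_j(m\kappa;W_0^\pm)/(\Phi_j(m\kappa;W_0^\pm)^2+\lambda)^{1/2}\in[0,1]$ depends only on $\kappa$ and $\|\eta_\lambda\|_\infty=o(1)$, the first factor identifies the splitting with $c_\pm\Gamma^\pm_\delta(m)M_{\phi_\lambda}$, whose singular values are dominated by those of $\Gamma^\pm_\delta(m)$, while the $\eta_\lambda$-induced error is absorbed into the multiplicative threshold $(1\pm\varepsilon)$ via a pointwise form comparison and the Ky-Fan inequality \eqref{lau13}.
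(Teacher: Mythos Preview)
Your approach is essentially correct and reaches the same conclusion as the paper, but via a genuinely different route. The paper passes to the operators $M_{j,1}^\pm = Q_j^\pm Q_j^{\pm *}$ acting on $L^2(\Omega_\pm)$, uses the change of variable $k = m(1+u)^{1/2}$ (with $m = \sqrt{b}\,\Lambda$, $\Lambda = |\ln\lambda|^{1/2}$), and removes the $\lambda$-dependence by showing $\|M_{j,3}^\pm - M_{j,4}^\pm\|_2 \to 0$ via dominated convergence; only afterwards does it undo the substitution $(1+u)^{1/2}=k'$ to recognize $\Gamma_\delta^\pm\Gamma_\delta^{\pm *}$. Your linear scaling $k=m\kappa$ and direct kernel comparison $Q_j^\pm|_{mI_\pm}U_m = \Gamma_\delta^\pm M_{F_\lambda}$ is more transparent: it exploits the fact that the ratio $F_\lambda$ is a \emph{multiplier in the $\kappa$-variable}, so that ordinary norm bounds $\|M_{F_\lambda}\|,\|M_{F_\lambda}^{-1}\|$ immediately transfer singular-value inequalities from $\Gamma_\delta^\pm$ to $Q_j^\pm$. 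The paper's Hilbert--Schmidt argument is more robust (it would survive perturbations that do not factor as a one-sided multiplier), while yours is shorter once the structure is recognized.

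Three points need tightening. First, the kernel \eqref{apr1} already contains the explicit Gaussian $(p_j/2\pi)^{1/2}e^{-(b^{1/2}x-b^{-1/2}k)^2/2}(-k)^{j-1}$, not $\psi_{j,\infty}$, so invoking \eqref{sof6} is unnecessary; a direct expansion of the exponent shows the $x$-dependence of $K_j^\pm$ and $K_{\Gamma_\delta^\pm}$ is \emph{identical}, whence $F_\lambda = F_\lambda(\kappa)$ depends on $\kappa$ alone. This matters, because your fallback ``pointwise form comparison'' for an $x$-dependent $\eta_\lambda$ would \emph{not} work: pointwise domination of integral kernels does not imply operator domination when the kernel is complex-valued. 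Since $F_\lambda$ is purely a $\kappa$-multiplier, the issue evaporates and your argument goes through via $s_n(\Gamma M_{F_\lambda})\le \|F_\lambda\|_\infty\, s_n(\Gamma)$ and $s_n(\Gamma M_{F_\lambda})\ge \|F_\lambda^{-1}\|_\infty^{-1}\, s_n(\Gamma)$. Second, the interval $(A,0]$ in your upper-bound decomposition is empty since $A>0$; the piece you actually need to split off is $(A,A_0^+]$ with $A_0^+$ from \eqref{jun9}, on which the kernel is uniformly bounded in $\lambda$ and the Hilbert--Schmidt norm is $O(1)$. Third, the ``Ky--Fan on the additive splitting $c_-\Gamma+R_\lambda$'' you mention first would require $n_*(s;R_\lambda)=O(1)$, which you correctly note is not available in absolute norm; drop that formulation and go straight to the multiplicative bound, which is what your final paragraph really uses.
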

\begin{proof}
     Let $\lambda > 0$, $A \in [-\infty, \infty)$.
    Define the operators $M_{j,1}^\pm(\lambda; A) : L^2(\Omega_\pm) \to L^2(\Omega_\pm)$
    as the operators with integral kernels
    $$
    \frac{p_j}{2\pi} e^{-b(x^2 + x'^2)/2} \int_A^{\infty}
    ({\mathcal E}_j^+ - E_j(k;W_0^\pm) + \lambda)^{-1} k^{2(j-1)} e^{-b^{-1}k^2} e^{k(x+x' +
    i(y-y'))}dk
    $$
with $(x,y), \, (x',y') \in \Omega_\pm$. Evidently,
$Q_j^\pm(\lambda; A) Q_j^\pm(\lambda; A)^* = M_{j,1}^\pm(\lambda;
A)$. Therefore,
    \bel{san4}
    n_+(r; Q_j^\pm(\lambda; A)^* Q_j^\pm(\lambda; A)) =
     n_+(r;  M^\pm_{j,1}(\lambda; A)), \quad r>0.
    \ee
    Further, we concentrate at the proof of \eqref{jun5}. Fix $\varepsilon > 0$.
    Then by \eqref{te10} and \eqref{lau25}, there exists $A^-_0 = A^-_0(\varepsilon)$ such that $k \geq A^-_0$ implies
    \bel{jun8}
    {\mathcal E}_j^+ - E_j(k; W_0^-) \leq (1+\varepsilon) \frac{W_+ - W_-}{2} p_j k^{2j-3}e^{-b^{-1}k^2}.
    \ee
    For $p>0$ and $A>0$ define $M_{j,2}^-(\lambda,A,p) : L^2(\Omega_-) \to L^2(\Omega_-)$
    as the operator with integral kernel
    \bel{san5}
     \frac{p_j}{2\pi} e^{-b(x^2 + x'^2)/2} \int_A^{\infty} (p + \lambda k^{3-2j}  e^{b^{-1}k^2})^{-1}
     k e^{k(x+x' + i(y-y'))}dk
    \ee
 with $(x,y), \, (x',y') \in \Omega_-$.
    Then \eqref{jun8} implies that for $A_1 = \max\{A, A_0^-\}$ we have
    \bel{k1}
    n_+(r; M_{j,1}^-(\lambda; A)) \geq n_+\left(r; M_{j,2}^-(\lambda, A_1, p_j(1+\varepsilon)
    (W_+ - W_-)/2)\right).
    \ee
    Fix $\delta \in (0,1/2)$. Set $\Lambda : = |\ln{\lambda}|^{1/2}$, and assume that $\lambda > 0$ is small that
    $A_1 <  \delta \sqrt{b}  \Lambda$. Then, by the mini-max
    principle,
    \bel{k2}
    n_+(r; M_{j,2}^-(\lambda, A_1, p)) \geq n_+(r; M_{j,2}^-(\lambda,  \delta \sqrt{b}  \Lambda, p)),
    \quad p > 0, \quad r > 0.
    \ee
    In the integral defining the kernel of the operator
    $M_{j,2}^-(\lambda,  \delta \sqrt{b}  \Lambda, p)$ (see \eqref{san5}), change the variable
    $k = \sqrt{b} \Lambda (1 + u)^{1/2}$ with $u \in (-1+\delta^2, \infty)$. Then we see that the
    integral kernel of $M^-_{j,2}(\lambda,  \delta \sqrt{b}  \Lambda, p)$ is equal to
    $$
     \frac{p_j b  \Lambda^2}{4\pi}  e^{-b(x^2 + x'^2)/2}\int_{-1+\delta^2}^{\infty} {  (p +  (\sqrt{b}
     \Lambda (1 + u)^{1/2})^{3-2j} e^{\Lambda^2 u})^{-1} }e^{(x+x' + i(y-y'))\sqrt{b}
     \Lambda (1 + u)^{1/2}}du.
    $$
    Define $M_{j,3}^-(\lambda,\delta,p) : L^2(\Omega_-) \to
L^2(\Omega_-)$ as the operator with integral kernel
    $$
     \frac{p_j b  \Lambda^2}{4\pi} e^{-b(x^2 + x'^2)/2}  \int_{-1+\delta^2}^{-1 + (1-\delta)^2}
   {   (p +  (\sqrt{b} \Lambda (1 + u)^{1/2})^{3-2j} e^{\Lambda^2 u})^{-1}}
     e^{(x+x' + i(y-y'))\sqrt{b} \Lambda (1 + u)^{1/2}}du
    $$
    with $(x,y), \, (x',y') \in \Omega_-$. Evidently, the mini-max
    principle  implies
    \bel{k3}
    n_+(r; M_{j,2}^-(\lambda,  \delta \sqrt{b} \Lambda, p)) \geq n_+(r; M_{j,3}^-(\lambda,\delta, p)),
    \quad p > 0, \quad r > 0, \quad \delta \in (0,1/2).
    \ee
    Further, define $M_{j,4}^-(\lambda,\delta,p) : L^2(\Omega_-) \to L^2(\Omega_-)$ as the operator
    with integral kernel
    \bel{san6}
     \frac{p_j b  \Lambda^2}{4\pi p} e^{-b(x^2 + x'^2)/2} \int_{-1+\delta^2}^{-1 + (1-\delta)^2}
      e^{(x+x' + i(y-y'))\sqrt{b} \Lambda (1 + u)^{1/2}}du
     \ee
    with $(x,y), \, (x',y') \in \Omega_-$. By the dominated convergence theorem,
    $$
    \lim_{\lambda \downarrow 0} \|M_{j,3}^-(\lambda,\delta,p) - M_{j,4}^-(\lambda,\delta,p)\|_2^2 = 0
    $$
    where $\|\cdot \|_2$ denotes the Hilbert-Schmidt norm. Fix $\varepsilon > 0$. Applying the Weyl inequalities
    and the elementary Chebyshev-type estimate
    $$
    n_*(s; M_{j,3}^-(\lambda,\delta,p) - M_{j,4}^-(\lambda,\delta,p))
    \leq s^{-2} \|M_{j,3}^-(\lambda,\delta,p) - M_{j,4}^-(\lambda,\delta,p)\|_2^2, \quad s>0,
    $$
    we get
    \bel{k4}
    n_+(r; M_{j,3}^-(\lambda,\delta,p)) \geq n_+(r(1+\varepsilon); M_{j,4}^-(\lambda,\delta,p)) + O(1),
    \quad \lambda \downarrow 0.
    \ee
    In the integral defining the kernel of the operator
$M_{j,4}^-(\lambda,  \delta, 1)$ (see \eqref{san6}),
    change the variable $(1 + u)^{1/2} = k$ with $k \in (\delta, 1-\delta)$.
    Then we see that the integral kernel of $M_{j,4}^-(\lambda,  \delta, 1)$ equals
    $$
     \frac{p_j b  \Lambda^2}{2\pi p} e^{-b(x^2+x'^2)/2} \int_{\delta}^{1-\delta}
     e^{(x+x' + i(y-y'))\sqrt{b} \Lambda k} k dk, \quad (x,y), \, (x',y') \in \Omega_-.
         $$
    Therefore
    \bel{k5}
M_{j,4}^-(\lambda,\delta,p) = \frac{p_j}{2 p}
\Gamma_\delta^-(\sqrt{b |\ln{\lambda}|}) \Gamma_\delta^-(\sqrt{b
|\ln{\lambda}|})^*.
 \ee
 Combining now \eqref{san4}, \eqref{k1}, \eqref{k2}, \eqref{k3},
 \eqref{k4}, and \eqref{k5}, we obtain \eqref{jun5}. \\
 Let us now prove \eqref{jun6}. The proof is quite similar to that
 of  {\eqref{jun5}}, so that we omit certain details. Set $\nu_1 = 0$ and $\nu_j = 1$ if $j \in \N$, $j
 \geq 2$. Pick $\varepsilon \in (0,1)$. Then there exists $A^+_0 = A^+_0(\varepsilon)$ such that $k \geq A^+_0$ implies
    \bel{jun9}
    {\mathcal E}_j^+ - E_j(k; W_0^+) \geq
    (1-\varepsilon) \frac{W_+ - W(-\delta)}{2} p_j (k+ \nu_j)^{2j-3}e^{-(b^{-1/2}k + b^{1/2}\delta)^2}.
    \ee
    For $p>0$ and $A>0$ define $M_{j,2}^+(\lambda,A,p) : L^2(\Omega_+) \to L^2(\Omega_+)$
    as the operator with integral kernel
    \bel{jun10}
     \frac{p_j}{2\pi} e^{-b(x^2 + x'^2)/2} \int_A^{\infty} (p + \lambda (k+\nu_j)^{3-2j}  e^{b^{-1}k^2+2\delta k})^{-1}
     k e^{k(x+x' + i(y-y') + 2\delta)}{dk},
    \ee
 with $(x,y), \, (x',y') \in \Omega_+$.
    Therefore, similarly to \eqref{k1}, we have
    \bel{jun11}
    n_+(r; M_{j,1}^+(\lambda; A)) {\leq} n_+\left(r; M_{j,2}^+(\lambda,
    A_1,
    (1-\varepsilon) p_j {e^{-b\delta^2}}
    (W_+ - W(-\delta))/2)\right)
    \ee
    for $A_1 = \max\{A, A_0^+\}$. Moreover, it is easy to check
    that
     \bel{jun12}
 n_+(r; M_{j,2}^+(\lambda, A, p)) = n_+(r; M_{j,2}^+(\lambda, 0,
 p)) + O(1), \quad \lambda \downarrow 0,
    \ee
    for any $A \geq 0$, $p>0$.
    In the integral defining the kernel of the operator
    $M_{j,2}^+(\lambda,  0, p)$ (see \eqref{jun10}), change the variable
    $k = \sqrt{b} \Lambda (1 + u)^{1/2}$ with $u \in (-1, \infty)$. Then we see that the
    integral kernel of { $M^+_{j,2}(\lambda, 0, p)$} is equal to
    $$
     \frac{p_j b  \Lambda^2}{4\pi} e^{-b(x^2 + x'^2)/2}
    $$
    $$
      \int_{-1}^{\infty} { (p + (\sqrt{b}
     \Lambda (1 + u)^{1/2} + \nu_j)^{3-2j} e^{\Lambda^2 u + 2 \delta \sqrt{b} \Lambda (1 + u)^{1/2} })^{-1}}
     e^{(x+x' + i(y-y') + 2\delta)\sqrt{b}
     \Lambda (1 + u)^{1/2}}du.
    $$
Define now $M_{j,3}^+(\lambda,\delta,p) : L^2(\Omega_+) \to
L^2(\Omega_+)$,  as the operator
    with integral kernel
    $$
     \frac{p_j b  \Lambda^2}{4\pi} e^{-b(x^2 + x'^2)/2}
    $$
    $$
     \int_{-1}^{-1 + (1+\delta)^2}
     {(p +  (\sqrt{b}
     \Lambda (1 + u)^{1/2} + \nu_j)^{3-2j} e^{\Lambda^2 u + 2 \delta \sqrt{b} \Lambda (1 + u)^{1/2} })^{-1}}
     e^{(x+x' + i(y-y') + 2\delta)\sqrt{b}
     \Lambda (1 + u)^{1/2}}du \,
    $$
    with $(x,y), \, (x',y') \in \Omega_+$. By the dominated convergence theorem,
    $$
    \lim_{\lambda \downarrow 0} \|M_{j,2}^+(\lambda,\delta,p) - M_{j,3}^+(\lambda,\delta,p)\|_2^2 =
    0.
    $$
    Therefore, similarly to \eqref{k4}, {we obtain}
    \bel{jun13}
    n_+(r; M_{j,2}^+(\lambda,\delta,p)) \leq n_+(r(1-\varepsilon); M_{j,3}^+(\lambda,\delta,p)) + O(1),
    \quad \lambda \downarrow 0,
    \ee
    for any $r>0$, $\varepsilon \in (0,1)$, $\delta > 0$, $p>0$. Next, define
    $M_{j,4}^+(\lambda,\delta,p) : L^2(\Omega_+) \to L^2(\Omega_+)$, $\delta>0$, as the operator
    with integral kernel
    $$
     \frac{b p_j \Lambda^2}{4\pi p} e^{-b(x^2 + x'^2)/2}
     \int_{-1}^{-1 + (1+\delta)^2}
e^{(x+x' + i(y-y') + 2\delta)\sqrt{b}
     \Lambda (1 + u)^{1/2}}du, \quad (x,y), \, (x',y') \in \Omega_+.
    $$
     Evidently, the mini-max
    principle implies
    \bel{jun14}
    n_+(r; M_{j,3}^+(\lambda,\delta,p)) \leq n_+(r; M_{j,4}^+(\lambda,\delta,p)),
    \quad r>0.
    \ee
    Finally, by analogy with \eqref{k5}, we get
    \bel{jun15}
M_{j,4}^+(\lambda,\delta,p) = \frac{p_j}{2 p}
\Gamma_\delta^+(\sqrt{b |\ln{\lambda}|}) \Gamma_\delta^+(\sqrt{b
|\ln{\lambda}|})^*.
 \ee
 Putting together \eqref{san4} and \eqref{jun11} -- \eqref{jun15},
 we arrive at \eqref{jun5}.

 \end{proof}

 Now, the combination of \eqref{sof19} and \eqref{jun5} -
 \eqref{jun6} yields \eqref{jun3a} - \eqref{jun3}.

\section{Asymptotic bounds of ${\mathcal N}_j^+(\lambda)$ as $\lambda \downarrow 0$}
\label{section6} \setcounter{equation}{0}
    In what follows we identify when appropriate $\rd$ with $\ce$
    writing $z = x + iy \in \ce$ for $(x,y) \in \rd$. Moreover, we
    denote by $d\mu(z) = dxdy$ the Lebesgue measure on $\rd$.
    Further,  we assume as before that $x^+ =
    0$, that $V$ satisfies \eqref{jun2} with some constants
    $c_0^{\pm} > 0$ and some bounded domains $\Omega_\pm \subset
    \rd$ with Lipschitz boundaries, and that
    \bel{jun20}
    \Omega_- \cap \{z \in \ce \, | \, {\rm Re}\,z > 0\} \neq
    \emptyset.
    \ee
    We will show that {under} these assumptions the functions ${\mathcal
    N}_j^+(\lambda)$ satisfy the asymptotic estimates \eqref{lau2}
    and \eqref{lau3} with some explicit constants ${\mathcal
    C}_\pm > 0$. In order to define these constants we need the
    following notations. Let $\Omega \subset \rd$ be a bounded
    domain. Set
    $$
    K_-(\Omega) : = \left\{(p,q) \in \rd \, | \, p<q, \exists \, x \in \re \quad \mbox{such that} \quad
    (x, p + t(q-p)) \in \Omega,
    \forall t \in (0,1)\right\},
    $$
    $$
    {\bf c}_-(\Omega) : = \sup_{(p,q) \in {K_-(\Omega)}} (q-p).
    $$
    In other words, ${\bf c}_-(\Omega)$ is just the maximal length
    of the vertical segments contained in $\overline{\Omega}$.
    Next, for $s \in [0,\infty)$ put
    $$
    \varkappa(s) : = |\{t> 0 \, | t \ln{t} <s\}|
    $$
    where $|\cdot|$ denotes  the Lebesgue measure of a Borel set in
    $\re$. Let $B_R(\zeta) \subset \rd$ be the open disk of radius $R>0$
    centered at $\zeta \in \ce$. Set
    $$
    K_+(\Omega) : = \left\{(\xi,R) \in \re \times (0,\infty) \, | \, \exists \, \eta \in \re \quad \mbox{such that} \quad
    \Omega \subset B_R(\xi + i\eta)\right\},
    $$
    $$
    {\bf c}_+(\Omega) : = \inf_{(\xi,R) \in K_+(\Omega)} R \varkappa\left(\frac{\xi_+}{eR}\right)
    $$
 where $\xi_+ : = \max\{\xi,0\}$. Evidently,
    \bel{jul50}
 {\bf c}_+(\Omega)
 \geq \frac{1}{2} {\rm diam}\,(\Omega) \geq \frac{1}{2} {\bf
 c}_-(\Omega).
 \ee
 Finally, put
 $$
    \tilde{\Omega}_\pm : = \{z \in \Omega_\pm \, | \, {\rm Re}\,z > 0\}.
    $$
    Note that \eqref{jun20} implies $\tilde{\Omega}_\pm \neq
    \emptyset$. Occasionally, we will also use the notation
    $$
    \tilde{\Omega}_+(\delta) : = \{z \in {\Omega_+} \, | \, {\rm Re}\,z > -2\delta\}
    $$
    for $\delta \geq 0$ so that $\tilde{\Omega}_+(0) =
    \tilde{\Omega}_+$.
    \begin{theorem} \label{junth}
    Suppose that $W$ is a bounded non-decreasing function with
    $W_- < W_+$, and $x^+ = 0$. Assume that {$V$} satisfies
    \eqref{jun2}, and \eqref{jun20} holds true. Then asymptotic
    relation \eqref{lau2} is satisfied with ${\mathcal C}_- : =
     (2\pi)^{-1} \sqrt{b} {\bf c}_-(\tilde{\Omega}_-)$ while asymptotic
    relation \eqref{lau3} holds true with ${\mathcal C}_+ : =
    e \sqrt{b} {\bf c}_+(\tilde{\Omega}_+)$. In particular,
    $$
    \lim_{\lambda \downarrow 0} \frac{\ln{{\mathcal N}_j^+(\lambda)}}{\ln\,|{\ln{\lambda}}|} =
    \frac{1}{2}, \quad j \in \N.
    $$
\end{theorem}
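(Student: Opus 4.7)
Set $m := \sqrt{b\,|\ln\lambda|}$. Combining Theorem \ref{lauth1} (together with the unitary equivalence \eqref{lau60}) with Theorem \ref{lauth2} sandwiches $\mathcal{N}_j^+(\lambda)$, up to an $O(1)$ error as $\lambda\downarrow 0$, between quantities of the form $n_*(r_\pm(\varepsilon,\delta);\Gamma^\pm_\delta(m))$, where the constants $r_\pm(\varepsilon,\delta) > 0$ depend only on $\varepsilon\in(0,1)$, $\delta\in(0,1/2)$, $c_0^\pm$, and on the values $W_\pm,W(-\delta)$. The proof therefore reduces to establishing
\begin{align*}
n_*(r;\Gamma^-_\delta(m)) &\geq \tfrac{m}{2\pi}\,{\bf c}_-(\tilde{\Omega}_-)\,(1+o(1)),\\
n_*(r;\Gamma^+_\delta(m)) &\leq e\,m\,{\bf c}_+(\tilde{\Omega}_+)\,(1+o(1)),
\end{align*}
as $m\to\infty$, with leading constants independent of the fixed $r>0$ and of $\delta$; the target constants $\mathcal{C}_\pm$ are then obtained by letting $\varepsilon,\delta\downarrow 0$ in the inequalities of Theorem \ref{lauth2}.

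For the lower bound I would exploit directly the definition of ${\bf c}_-(\tilde{\Omega}_-)$. Fix $\ell < {\bf c}_-(\tilde{\Omega}_-)$ and a vertical chord $\{a+it:t\in(p,p+\ell)\}\subset\overline{\tilde{\Omega}_-}$ with $a>0$; since $\Omega_-$ is open, a thin horizontal rectangle around this chord is contained in $\Omega_-$. Choose $\phi\in C_c^\infty(I_-)$ and define, for $n=0,\ldots,N-1$ with $N=\lfloor m\ell/(2\pi)\rfloor$, the quasi-modes $v_n(k):=\phi(k)\,e^{-im\eta_n k}$ with $\eta_n:=p+2\pi n/m$. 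Computing $\Gamma^-_\delta(m)v_n$ along the rectangle shows that, up to a global normalisation that grows exponentially in $m$, the image is an approximate Fourier kernel in the $y$-variable concentrated at $y=\eta_n$ on scale $1/m$; the Gram matrix of $(\Gamma^-_\delta(m)v_n)_{n=0}^{N-1}$ is therefore diagonally dominant for large $m$, and a Ky-Fan inequality \eqref{lau13} argument yields $n_*(r;\Gamma^-_\delta(m))\geq N(1+o(1))$. Letting $\ell\uparrow{\bf c}_-(\tilde{\Omega}_-)$ completes this half.

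For the upper bound the essential feature is that, since $k\in I_+\subset(0,\infty)$, the kernel $e^{m(z+\delta)k}k^{1/2}$ of $\Gamma^+_\delta(m)$ is holomorphic in $z$, so modulo the Gaussian factor $e^{-bx^2/2}$ the operator $\Gamma^+_\delta(m)$ maps $L^2(I_+)$ into the Bergman space $\mathcal{B}(\Omega_+)$ of \eqref{jul20}. Given $(\xi,R)\in K_+(\tilde{\Omega}_+)$, choose $\eta\in\re$ with $\tilde{\Omega}_+\subset B_R(\xi+i\eta)$; the part of $\Omega_+$ with ${\rm Re}\,z\le-2\delta$ contributes only $O(1)$ to the count, thanks to the Gaussian damping. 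On the enclosing disk I would expand the kernel in the orthonormal Bergman basis $\{\sqrt{(n+1)/\pi}\,R^{-(n+1)}(z-\xi-i\eta)^n\}_{n\geq 0}$, realising $\Gamma^+_\delta(m)^*\Gamma^+_\delta(m)$ modulo a compact correction as an operator whose $n$-th singular value scales like $(mR)^n e^{m(\xi+\delta)}/n!$. A Chebyshev-type count of those $n$ with $(mR)^n e^{m(\xi+\delta)}/n!>\sqrt r$, combined with Stirling's formula, gives the critical threshold $n\leq emR\,\varkappa(\xi_+/(eR))(1+o(1))$; taking the infimum over $(\xi,R)\in K_+(\tilde{\Omega}_+)$ then produces $\mathcal{C}_+$.

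The principal technical obstacle is the upper bound: the Bergman-space reduction on a general bounded Lipschitz domain must be made quantitative, and the off-diagonal entries of the resulting infinite matrix must be controlled by the diagonal scale $(mR)^n e^{m(\xi+\delta)}/n!$, with the $\delta\downarrow 0$ limit compatible with the replacement of $W(-\delta)$ by $W_-$ in the accompanying constants. Once both asymptotic bounds are in place, inserting $\mathcal{C}_-|\ln\lambda|^{1/2}(1+o(1))\leq\mathcal{N}_j^+(\lambda)\leq\mathcal{C}_+|\ln\lambda|^{1/2}(1+o(1))$ into $\ln\mathcal{N}_j^+(\lambda)/\ln|\ln\lambda|$ and dividing by $\ln|\ln\lambda|$ yields immediately
\[
\lim_{\lambda\downarrow 0}\frac{\ln\mathcal{N}_j^+(\lambda)}{\ln|\ln\lambda|}=\frac12,
\]
completing the proof.
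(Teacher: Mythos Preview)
Your reduction strategy is correct and matches the paper: combine Theorem~\ref{lauth1} with \eqref{lau60} and Theorem~\ref{lauth2} to reduce everything to the asymptotics of $n_*(r;\Gamma_\delta^\pm(m))$ as $m=\sqrt{b|\ln\lambda|}\to\infty$, and finish by sending $\varepsilon,\delta\downarrow 0$. The final deduction of the $\tfrac12$ limit is also fine.

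\medskip

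\textbf{Upper bound.} Your approach is essentially the paper's. The paper bounds $\Gamma_\delta^+(m)^*\Gamma_\delta^+(m)$ by the operator ${\mathcal G}_{m,\delta}(\Omega_+,I_+)$ of \eqref{jun22}, shows that the contribution of $\Omega_+\setminus\tilde\Omega_+(\delta)$ has vanishing trace, encloses $\tilde\Omega_+(\delta)$ in a disk $B_R(\zeta)$, and then computes $\int_{B_R(0)}e^{m(zk+\bar z k')}d\mu(z)=\pi R^2\sum_q\frac{(m^2R^2kk')^q}{(q!)^2(q+1)}$ explicitly. Your ``off-diagonal'' worry is handled by the paper in a clean way: the moment map $u\mapsto(\tilde u_q)_q$ is factored through the Gram--Schmidt basis via a fixed Hilbert--Schmidt operator $\Theta$ (independent of $m$), so one only has to count the diagonal entries \eqref{jun40}, which is exactly your Stirling/$\varkappa$ computation. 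So this half is fine, and in fact you have already located the right mechanism.

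\medskip

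\textbf{Lower bound.} Here your route differs from the paper, and the sketch has a gap. The paper restricts $\Gamma_\delta^-(m)^*\Gamma_\delta^-(m)$ to a rectangle ${\mathcal R}=(\alpha,\beta)\times(-L,L)\subset\tilde\Omega_-$, writes ${\mathcal G}_{m,0}({\mathcal R},I_-)=G^-_{\beta,\delta}(mL)-G^-_{\alpha,\delta}(mL)$, and then uses a variational inequality to reduce to the operator $g_{\mathcal I}(m)$ with kernel $\frac{\sin(m(k-k'))}{\pi(k-k')}\cdot\frac{2\sqrt{kk'}}{k+k'}$. The eigenvalue count for $g_{\mathcal I}(m)$ is obtained from trace asymptotics of powers (Proposition~\ref{sanpr2}) together with the Kac--Murdock--Szeg\H{o} theorem, which gives $n_+(s;g_{\mathcal I}(m))\sim m|{\mathcal I}|/\pi$ for $s\in(0,1)$; optimizing over $L$ yields ${\mathcal C}_-$.

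Your quasi-mode construction $v_n(k)=\phi(k)e^{-im\eta_n k}$ targets the same constant, but the claim that the Gram matrix of $(\Gamma_\delta^- v_n)_n$ is diagonally dominant is not justified as stated. The weight $e^{mxk}$ in the kernel concentrates $k^{1/2}\phi(k)e^{mxk}$ near the upper endpoint of $\operatorname{supp}\phi$ on a $k$-scale of order $1/m$, so the induced localization in $y$ is on scale $O(1)$, not $O(1/m)$; the naive Fourier argument therefore does not separate the $\eta_n$'s, which are only $2\pi/m$ apart. One can rescue the idea, but it requires either pushing the exponential weight into the analysis (essentially rediscovering the sinc reduction the paper uses) or a more careful stationary-phase/Laplace estimate on the Gram entries that tracks the dependence of all constants on $m$. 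As written, this step is the genuine gap.
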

{\em Remark.} Under the hypotheses of Theorem \ref{junth} we have
${\mathcal C}_- < {\mathcal C}_+$ due to \eqref{jul50},
$\tilde{\Omega}_- \subset \tilde{\Omega}_+$, and $1/\pi < e$. \\

 The proof of \eqref{lau2} is contained in Subsection
\ref{ss61}, and the proof of \eqref{lau3} can be found in
Subsection \ref{ss62}.

\subsection{Lower bound of ${\mathcal N}_j^+(\lambda)$}
\label{ss61}
    In this subsection we prove \eqref{lau2}. Taking into account
    {Theorem \ref{lauth1}, \eqref{lau60} and Theorem \ref{lauth2}}, we find that it
    suffices to show that for any $r>0$ independent of $\lambda > 0$, we have
    \bel{jul51}
    \lim_{\delta \downarrow 0} \liminf_{\lambda \downarrow 0}
    |\ln{\lambda}|^{-1/2} n_+(r;
    \Gamma_{\delta}^-(\sqrt{b|\ln{\lambda}|})^* \Gamma_{\delta}^-(\sqrt{b|\ln{\lambda}|})) \geq {\mathcal C}_-.
    \ee
    Let $\Omega \subset \rd$ be a bounded domain, and ${\mathcal
    I} \subset (0,\infty)$ be a bounded open non-empty interval.
    For $m>0$ and $\delta \geq 0$ define the operator ${\mathcal G}_{m, \delta}(\Omega,
    {\mathcal I}) : L^2({\mathcal I}) \to L^2({\mathcal I})$ as
    the operator with integral kernel
    \bel{jun22}
    \pi^{-1} m^2 \sqrt{k k'} \int_\Omega e^{m((z+\delta)k + (\bar{z} + \delta) k')}
    d\mu(z), \quad k, k' \in {\mathcal I}.
    \ee
    Set
    \bel{jul52}
    \epsilon_- : = \inf_{x \in \omega_-} e^{-bx^2}, \quad
     \epsilon_+ : = \sup_{x \in \omega_+} e^{-bx^2},
     \ee
     the numbers $\omega_{\pm}$ being defined in \eqref{aug1}.
     Then we have
     \bel{jul53}
     \Gamma_{\delta}^-(m)^*
     \Gamma_{\delta}^-(m) \geq \epsilon_- {\mathcal G}_{m,0}(\Omega_-,
    I_-(\delta)), \quad m>0.
    \ee
    Further, let ${\mathcal R} \subset \tilde{\Omega}_- \subset
    \Omega_-$be an open non-empty rectangle whose sides are
    parallel to the coordinate axes. Since a translation $z
    \mapsto z + i\eta$, $\eta \in \re$, in the integral in
    \eqref{jun22} generates a unitary transformation of the operator ${\mathcal G}_{m,0}(\Omega_-,
    I_-(\delta))$ into an operator unitarily equivalent to it, we
    assume without any loss of generality that ${\mathcal R} =
    (\alpha, \beta) \times (-L,L)$ with $0<\alpha<\beta<\infty$
    and $L \in (0,\infty)$. Evidently,
    \bel{jul54}
    {\mathcal G}_{m,0}(\Omega_-,
    I_-(\delta)) \geq {\mathcal G}_{m,0}({\mathcal R},
    I_-(\delta)), \quad m>0.
    \ee
    For $\eta \in \re$ and $\delta \in (0,1/2)$ define the operator
    $G^-_{\eta, \delta}(m) : L^2(I_-(\delta)) \to L^2(I_-(\delta))$ as the integral operator
    with kernel
    $$
    e^{\eta m(k + k')} \frac{\sin{(m(k-k'))}}{\pi (k-k')} \frac{2\sqrt{k k'}}{k+k'}, \quad k, k' \in I_-(\delta).
    $$
  Then
    \bel{jul55}
    {\mathcal G}_{m,0}({\mathcal R},
    I_-(\delta)) =
     G^-_{\beta, \delta}(mL) - G^-_{\alpha,
    \delta}(mL).
    \ee
Define the
    operator $g_{\mathcal I}(m): L^2(\mathcal I) \to L^2(\mathcal
    I)$, $m>0$,
    as the operator with integral kernel
    $$
\frac{\sin{(m(k-k'))}}{\pi(k-k')} \frac{2\sqrt{k k'}}{k+k'}, \quad
k,k' \in {\mathcal I}.
    $$
   Evidently, $g_{\mathcal I}(m) = g_{\mathcal I}(m)^* \geq 0$ is a trace-class
   operator.
   Simple variational arguments yield
    $$
   n_+(r; {G^-_{\beta, \delta}}({ m}) - {G^-_{\alpha, \delta}}(m)) \geq
   n_+(r (1-e^{2(\alpha - \beta)\delta m})^{-1}; {G^-_{\beta, \delta}}(m)) \geq
$$
\bel{repon4}
     n_+(r e^{-2\beta \delta m }(1-e^{2(\alpha - \beta)\delta m })^{-1}; {g_{I_-(\delta)}}(m)),
    \quad r>0, \quad \delta \in (0,1/2).
    \ee
Combining \eqref{jul53} -- \eqref{repon4},
    we find that under the hypotheses of Theorem \ref{junth} for each $\delta \in (0,1/2)$ we have
    \bel{san7}
    n_+(r;
    \Gamma_{\delta}^-(m)^* \Gamma_{\delta}^-(m)) \geq
    n_+(re^{-2\beta \delta m}(\epsilon_-(1-e^{2(\alpha - \beta)\delta m} ))^{-1};
    g_{I_-(\delta)}(mL)).
    \ee
In order to complete the proof of \eqref{jul51}, we need the
following
\begin{pr} \label{sanpr2} For all $l \in {\mathbb N}$ we have
    \bel{renov1}
    \lim_{m \to \infty} m^{-1} {\rm Tr}\,g_{\mathcal I}(m)^l =
    \frac{|\mathcal I|}{\pi}.
    \ee
    \end{pr}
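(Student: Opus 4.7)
The plan is to express ${\rm Tr}\,g_{\mathcal I}(m)^l$ as an $l$-fold integral over ${\mathcal I}^l$, rescale near the diagonal so that the leading $m$-dependence is made explicit, and then compute the limiting integral via a convolution identity for the sinc function $f(v) := \sin v/(\pi v)$. Setting $\phi(k,k') := 2\sqrt{kk'}/(k+k')$, the kernel of $g_{\mathcal I}(m)$ equals $m\,f(m(k-k'))\,\phi(k,k')$. For $l \geq 2$, I change variables by $k = k_1$, $v_j = m(k_j - k_{j+1})$ for $j = 1,\ldots,l-1$ (with $k_{l+1} = k_1$); the Jacobian is $m^{-(l-1)}$, and $k_l - k_1 = -(v_1+\cdots+v_{l-1})/m$. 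Using that $f$ is even, the $l$ sinc factors combine to $m^l \prod_{j=1}^{l-1} f(v_j)\cdot f(v_1+\cdots+v_{l-1})$, so
$$
m^{-1}\,{\rm Tr}\,g_{\mathcal I}(m)^l = \int_{{\mathcal I}} dk \int_{R(k,m)} \Bigl(\prod_{j=1}^{l-1} f(v_j)\Bigr) f\!\Bigl(\sum_{j=1}^{l-1} v_j\Bigr) \prod_{i=1}^{l} \phi(k_i, k_{i+1})\,dv,
$$
where $R(k,m) \subset \re^{l-1}$ is the set of $v$'s for which all $k_i$ lie in ${\mathcal I}$. As $m \to \infty$ each $\phi(k_i,k_{i+1}) = \phi(k, k + O(1/m))$ tends to $\phi(k,k) = 1$, and for $k$ interior to ${\mathcal I}$ the characteristic function of $R(k,m)$ converges to $1$ pointwise on $\re^{l-1}$.

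The next step is to use the key identity that the Fourier transform of $f$ equals $\chi_{[-1,1]}$, which is idempotent for pointwise multiplication; equivalently, $f * f = f$, i.e.\ $\int_{\re} f(v)\,f(v + a)\,dv = f(a)$ for every $a \in \re$. Iterating this by integrating the limit integrand $\prod_{j=1}^{l-1} f(v_j)\cdot f(\sum_{j} v_j)$ first in $v_1$ (with $a = v_2 + \cdots + v_{l-1}$), then in $v_2$, and so on, inductively reduces the integral to one of the same form with $l$ replaced by $l-1$; one thereby obtains
$$
F_l := \int_{\re^{l-1}} \Bigl(\prod_{j=1}^{l-1} f(v_j)\Bigr) f\!\Bigl(\sum_{j=1}^{l-1} v_j\Bigr) dv = \|f\|_{L^2(\re)}^2 = 1/\pi.
$$
Combined with the pointwise convergence of the $\phi$-factors, this yields $\lim_{m \to \infty} m^{-1}\,{\rm Tr}\,g_{\mathcal I}(m)^l = |{\mathcal I}|\,F_l = |{\mathcal I}|/\pi$; the case $l = 1$ follows at once from ${\rm Tr}\,g_{\mathcal I}(m) = \int_{\mathcal I} f(0)\,dk = m|{\mathcal I}|/\pi$.

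The main obstacle is the rigorous interchange of the limit $m \to \infty$ with the multiple integral when $l \geq 3$: since $f \in L^2(\re) \setminus L^1(\re)$, the product $\prod_{j=1}^{l-1} f(v_j)\cdot f(\sum_{j=1}^{l-1} v_j)$ is only conditionally integrable on $\re^{l-1}$, and naive dominated convergence does not apply. To overcome this, the plan is to perform the iterative convolution reduction directly inside the truncated region $R(k,m)$, controlling the truncation tails using the decay $|f(v)| \leq 1/(\pi|v|)$; an alternative route is to regularize the integrand by multiplying it by $\prod_{j} e^{-\varepsilon v_j^2}$, compute the resulting integrals in closed form via the Fourier representation, and then send $\varepsilon \downarrow 0$ after $m \to \infty$.
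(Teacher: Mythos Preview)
Your argument is essentially the paper's: the same rescaling change of variables near the diagonal, the same pointwise convergence of the $\phi$-factors and of the characteristic functions, and the same identification of the limiting integral with the $l$-fold sinc convolution at the origin. The paper computes the latter by the Fourier representation $f = \frac{1}{2\pi}\widehat{\chi_{(-1,1)}}$ directly, while you use the equivalent identity $f*f=f$; these are the same thing.

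Your declared ``main obstacle'' is not actually present. The function
\[
(v_1,\dots,v_{l-1})\;\longmapsto\;\prod_{j=1}^{l-1} f(v_j)\;f\!\Bigl(\sum_{j=1}^{l-1} v_j\Bigr)
\]
\emph{is} absolutely integrable on $\re^{l-1}$ for every $l\ge 2$, so dominated convergence applies directly and the proposed workarounds (truncation-tail control, Gaussian regularization) are unnecessary. To see this, change variables to $u_1=v_1$, $u_i=v_1+\dots+v_i$ and note that with $g:=|f|$ one has
\[
\int_{\re^{l-1}} \prod_{j=1}^{l-1}|f(v_j)|\,\Bigl|f\!\Bigl(\sum_j v_j\Bigr)\Bigr|\,dv \;=\; g^{*l}(0),
\]
the $l$-fold convolution of $g$ evaluated at $0$. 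Since $g\in L^2(\re)$, the Cauchy--Schwarz inequality gives $g*g\in L^\infty(\re)$ with $\|g*g\|_\infty\le \|g\|_2^2$, and from $g(t)\le \min(1/\pi,1/(\pi|t|))$ one gets $(g*g)(t)=O(\ln|t|/|t|)$ at infinity. Iterating, $g^{*l}$ is bounded and decays like $(\ln|t|)^{l-1}/|t|$; in particular $g^{*l}(0)<\infty$. Thus the dominating function is simply the absolute value of the limiting integrand times the uniform bound on the $\phi$-factors, and the paper's one-line appeal to dominated convergence is justified.
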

    \begin{proof}
    Let $l=1$. Then,  ${\rm Tr}\,g_{\mathcal I}(m) =
    \frac{m|\mathcal I|}{\pi}$. Let now $l \geq 2$. Set
    $$
\phi_m(k) : = \frac{\sin{m k}}{\pi k }  \quad {k} \in {\mathcal I}.
    $$
    Denote by $\chi_{\mathcal I}$ the characteristic function of the
    interval ${\mathcal I}$. Then we have
    $$
    {\rm Tr}\,g_{\mathcal I}(m)^l =
    $$
    $$
    \int_\re \ldots \int_\re \phi_m(k_1 - k_2) \phi_m(k_2 -
    k_3) \ldots \phi_m(k_{l-1} - k_l) \phi_m(k_l -
    k_1) \times
    $$
    $$
    \frac{2^l \, k_1 \ldots k_l}{(k_1 + k_2)(k_2 + k_3) \ldots (k_{l-1} + k_l)(k_l +
    k_1)} \chi_{\mathcal I}(k_1) \ldots \chi_{\mathcal I}(k_l) dk_1 \ldots dk_l.
    $$
Changing the variables $k_1 = t_1$, ${k_j} = t_1 + m^{-1} t_j$,
$j=2,\ldots,l$, we get
$$
    {\rm Tr}\,g_{\mathcal I}(m)^l =
    $$
    $$
    m \int_\re \ldots \int_\re \phi_1( - t_2) \phi_1(t_2 -
    t_3) \ldots \phi_1(t_{l-1} - t_l) \phi_1(t_l) \, \times
    $$
    $$
    \frac{2^l \, t_1 (t_1 + m^{-1} t_2) \ldots (t_1 + m^{-1} t_l)}
    {(2t_1 + m^{-1}t_2)(2t_1 + m^{-1}(t_2 + t_3)) \ldots (2t_1 + m^{-1}(t_{l-1} + t_l))(2t_1 +
    m^{-1}t_l)} \, \times
    $$
    $$
    \chi_{\mathcal I}(t_1) \chi_{\mathcal I}(t_1 + m^{-1}t_2) \ldots \chi_{\mathcal I}(t_1 + m^{-1}t_l) dt_1 \ldots dt_l.
    $$
    Applying the dominated convergence theorem, we get
    \bel{renov2}
\lim_{m \to \infty} m^{-1} {\rm Tr}\,g_{\mathcal I}(m)^l =
|{\mathcal I}| \int_\re \ldots \int_\re \phi_1( - t_2) \phi_1(t_2
-
    t_3) \ldots \phi_1(t_{l-1} - t_l) \phi_1(t_l) dt_2 \ldots
    dt_l.
    \ee
    Further, we have
    $$
    \phi_1(t) = \frac{1}{2\pi} \int_\re  e^{it\xi}
    \chi_{(-1,1)}(\xi) d\xi, \quad t \in \re.
    $$
    Therefore,
   \bel{renov3}
\int_\re \ldots \int_\re \phi_1( - t_2) \phi_1(t_2 -
    t_3) \ldots \phi_1(t_{l-1} - t_l) \phi_1(t_l) dt_2 \ldots
    dt_l =
    \frac{1}{2\pi} \int_\re
    \chi_{(-1,1)}(\xi)^l d\xi = \frac{1}{\pi}.
    \ee
Putting together \eqref{renov2} and \eqref{renov3}, we obtain
\eqref{renov1}.
    \end{proof}
    The Kac-Murdock-Szeg\H{o} theorem (see e.g. \cite{kacmsz}, \cite{grsz} or
    \cite{r4}) now implies the following
    \begin{follow} \label{refnov1}
    We have
\bel{renov5}
    \lim_{m \to \infty} m^{-1} n_+(s; g_{\mathcal I}(m))  =
    \left\{
    \begin{array} {l}
    \frac{|{\mathcal I}|}{\pi} \quad {\rm if} \quad s \in (0,1), \\
    0 \quad {\rm if} \quad s > 1.
    \end{array}
    \right.
    \ee
    \end{follow}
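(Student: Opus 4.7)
The plan is to derive Corollary \ref{refnov1} from Proposition \ref{sanpr2} by the standard Kac-Murdock-Szeg\H{o} moment argument: the fact that \emph{every} normalized moment $m^{-1}{\rm Tr}\,g_{\mathcal I}(m)^l$ tends to the same constant $|{\mathcal I}|/\pi$ is only compatible with the rescaled spectral measure of $g_{\mathcal I}(m)$ collapsing onto a single atom at $1$ of total mass $|{\mathcal I}|/\pi$; once this is known, the counting asymptotics \eqref{renov5} follow by evaluating the limit measure on $(s,\infty)$.

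First I would record two structural properties of $g_{\mathcal I}(m)$: (i) it is self-adjoint and non-negative, which is transparent from its appearance in \eqref{k5} as a constant multiple of $\Gamma^-_\delta(\cdot)\Gamma^-_\delta(\cdot)^*$ after the identification of $M_{j,4}^-$ with $g_{\mathcal I}$; and (ii) $\limsup_{m\to\infty}\|g_{\mathcal I}(m)\|\le 1$. The norm bound is a painless consequence of Proposition \ref{sanpr2} by the elementary estimate $\|g_{\mathcal I}(m)\|^{l}\le {\rm Tr}\,g_{\mathcal I}(m)^l \le m(|{\mathcal I}|/\pi+o(1))$; extracting $l$-th roots and then taking $l\to\infty$ after $m$ is large enough for a given $\varepsilon$ forces $\|g_{\mathcal I}(m)\|\le 1+\varepsilon$ eventually. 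In particular the positive eigenvalues $\{\lambda_j(m)\}$ of $g_{\mathcal I}(m)$ lie in $[0,1+o(1)]$.

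Next I would pass to the normalized eigenvalue counting measure $\nu_m := m^{-1}\sum_{j} \delta_{\lambda_j(m)}$ on $(0,\infty)$, where the positive eigenvalues are counted with multiplicity. Proposition \ref{sanpr2} rephrases as $\int t^l\,d\nu_m(t)\to |{\mathcal I}|/\pi$ for every $l\in {\mathbb N}$. Chebyshev gives $\nu_m((\varepsilon,\infty))\le\varepsilon^{-1}\int t\,d\nu_m\le C\varepsilon^{-1}$ uniformly in $m$, and together with the norm bound this shows that the restrictions of $\nu_m$ to $[\varepsilon,1+\varepsilon]$ form a tight family of uniformly bounded Borel measures. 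Any weak subsequential limit $\nu_\infty$ on $(0,1]$ satisfies $\int t^l\,d\nu_\infty = |{\mathcal I}|/\pi$ for every $l\ge 1$. Since $t^l\to \chi_{\{1\}}$ pointwise on $[0,1]$ dominatedly, passage to the limit $l\to\infty$ yields $\nu_\infty(\{1\})=|{\mathcal I}|/\pi$; subtracting this atom, the remaining measure $\tilde\nu$ on $(0,1)$ satisfies $\int t^l\,d\tilde\nu=0$ for every $l\ge 1$, and Chebyshev once more gives $\tilde\nu([\varepsilon,1))\le\varepsilon^{-l}\int t^l\,d\tilde\nu=0$ for every $\varepsilon>0$, so $\tilde\nu\equiv 0$. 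Thus every subsequential limit coincides with $\frac{|{\mathcal I}|}{\pi}\delta_1$ and the whole family converges weakly to this measure on $(0,\infty)$.

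Finally, since $\frac{|{\mathcal I}|}{\pi}\delta_1$ has a unique atom at $1$, every $s>0$ with $s\ne 1$ is a continuity point of $t\mapsto \nu_\infty((t,\infty))$, and the portmanteau theorem yields
$$
m^{-1}n_+(s;g_{\mathcal I}(m)) \;=\; \nu_m((s,\infty)) \;\longrightarrow\; \frac{|{\mathcal I}|}{\pi}\,\delta_1((s,\infty)),
$$
which equals $|{\mathcal I}|/\pi$ for $s\in(0,1)$ and $0$ for $s>1$, establishing \eqref{renov5}. The main technical point is the uniform norm bound, because it guarantees compact support of the limit spectral measure and hence uniqueness in the moment problem; the root-of-moments estimate sketched above is self-contained and sidesteps any delicate analysis of the sinc-times-Schur kernel defining $g_{\mathcal I}(m)$.
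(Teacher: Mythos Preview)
Your approach is precisely what the paper's one-line citation of the Kac--Murdock--Szeg\H{o} theorem unpacks to: use the moment asymptotics of Proposition~\ref{sanpr2} to identify the limiting spectral distribution as $\frac{|\mathcal I|}{\pi}\delta_1$, then read off the counting function via portmanteau. The skeleton is sound, but your justification of the norm bound $\limsup_{m\to\infty}\|g_{\mathcal I}(m)\|\le 1$ contains a genuine gap.

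The root-of-moments estimate you sketch does \emph{not} yield a uniform norm bound. From $\|g_{\mathcal I}(m)\|^l\le{\rm Tr}\,g_{\mathcal I}(m)^l\le m\bigl(|\mathcal I|/\pi+o(1)\bigr)$ you obtain $\|g_{\mathcal I}(m)\|\le (Cm)^{1/l}$, and for this to be at most $1+\varepsilon$ you would need $l\gtrsim \ln m$; but Proposition~\ref{sanpr2} controls the $l$-th moment only for \emph{fixed} $l$ as $m\to\infty$, with no uniformity in $l$. A concrete counterexample shows the step fails in general: a family $T_m\ge 0$ with a single eigenvalue at $2$ and $\lfloor Cm\rfloor$ eigenvalues at $1$ satisfies $m^{-1}{\rm Tr}\,T_m^l = m^{-1}2^l + m^{-1}\lfloor Cm\rfloor\to C$ for every $l\in\N$, yet $\|T_m\|=2$ for all $m$.

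Two clean repairs are available. First, the bound $\|g_{\mathcal I}(m)\|\le 1$ does hold for every $m$, but for a structural reason unrelated to moments: the kernel of $g_{\mathcal I}(m)$ is the Schur (entrywise) product of the positive-definite kernel $\frac{\sin(m(k-k'))}{\pi(k-k')}$, which is the kernel of a compressed Fourier projection and hence defines an operator of norm at most $1$, with the positive-definite kernel $\frac{2\sqrt{kk'}}{k+k'}$, whose diagonal is identically $1$; the standard Schur-multiplier estimate then gives $\|g_{\mathcal I}(m)\|\le 1$, and your argument proceeds verbatim. Second, you can dispense with the norm bound altogether: the limit moment sequence $(C,C,C,\ldots)$ with $C=|\mathcal I|/\pi$ satisfies Carleman's condition, so the Hamburger moment problem is determinate with unique solution $C\delta_1$; tightness of $t\,d\nu_m$ at infinity follows already from boundedness of any single higher moment, and the remainder of your argument (identification of the atom, portmanteau at continuity points $s\ne 1$) goes through on $[0,\infty)$ without assuming compact support a priori.
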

    Now we are in position to prove \eqref{jul51}. Fix arbitrary $s \in (0,1)$.
     Assume that $m$ is so large that
    $re^{-2\beta \delta m}(\epsilon_-(1-e^{2(\alpha - \beta)\delta m}))^{-1} <s$.
Then \eqref{san7} implies
    \bel{renov6}
     n_+(r;
    \Gamma_{\delta}^-(m)^* \Gamma_{\delta}^-(m)) \geq
    n_+(s;
    g_{I_-(\delta)}(mL){ )}.
    \ee
Putting together \eqref{renov5}  and \eqref{renov6}, we find that
the
    asymptotic estimate
    $$
\liminf_{\lambda \downarrow 0} |\ln{\lambda}|^{-1/2} n_+(r;
    \Gamma_{\delta}^-(\sqrt{b |\ln{\lambda}|})^* \Gamma_{\delta}^-(\sqrt{b |\ln{\lambda}|})) \geq
    \frac{\sqrt{b}L}{\pi}(1-2\delta)
    $$
    holds for every $\delta \in (0,1/2)$. Letting $\delta \downarrow 0$, and optimizing with respect to $L$ we
    obtain \eqref{jul51}.

\subsection{Upper bound of ${\mathcal N}_j^+(\lambda)$}
\label{ss62}
    In this subsection we prove \eqref{lau3}. By analogy with \eqref{jul51},  it
    suffices to show that for any $r>0$ independent of $\lambda > 0$, we have
    \bel{jul60}
    \lim_{\delta \downarrow 0} \limsup_{\lambda \downarrow 0}
    |\ln{\lambda}|^{-1/2} n_+(r;
    \Gamma_{\delta}^+(\sqrt{b|\ln{\lambda}|})^* \Gamma_{\delta}^+(\sqrt{b|\ln{\lambda}|})) \leq {\mathcal C}_+.
    \ee
    Evidently,
    \bel{jul61}
     \Gamma_{\delta}^+(m)^* \Gamma_{\delta}^+(m) \leq \epsilon_+
     {\mathcal G}_{m,\delta}(\Omega_+; I_+(\delta)), \quad m>0,
     \ee
     the integral kernel of the operator ${\mathcal G}_{m,\delta}(\Omega_+; {\mathcal
     I})$ being defined in \eqref{jun22}, and the number
     $\epsilon_+$ being defined in \eqref{jul52}. Since we have
     ${\mathcal G}_{m,\delta}(\Omega_+\setminus\tilde{\Omega}_+(\delta);
     I_+(\delta))\geq 0$ and
     $$
     \lim_{m \to \infty} {\rm Tr}\,{\mathcal G}_{m,\delta}(\Omega_+\setminus\tilde{\Omega}_+(\delta);
     I_+(\delta)) =
     \pi^{-1} \lim_{m \to \infty}  m^2 \int_0^{1+\delta} {\int_{\Omega_+\setminus\tilde{\Omega}_+(\delta)}}
e^{2m({\rm Re}\,z+\delta)k}
    d\mu(z) k dk = 0,
    $$
    we easily find that the Weyl inequalities entail
    \bel{jul62}
    n_+(r;{\mathcal G}_{m,\delta}(\Omega_+;
     I_+(\delta))) \leq n_+(r(1-\varepsilon);{\mathcal G}_{m,\delta}(\tilde{\Omega}_+(\delta);
     I_+(\delta)) {)}+ O(1), \quad m \to \infty,
     \ee
     for each $r>0$ and $\varepsilon \in (0,1)$. Further, pick an
     open
     disk $B_R(\zeta) \subset \rd$ such that
     $\tilde{\Omega}_+(\delta)\subset B_R(\zeta)$. Evidently,
    \bel{jul63}
    n_+(r;{\mathcal G}_{m,\delta}(\tilde{\Omega}_+(\delta);
     I_+(\delta)) { )} \leq n_+(r;{\mathcal G}_{m,\delta}(B_R(\zeta);
     I_+(\delta))), \quad r>0.
     \ee
     Next, put $I_* = I_*(\delta) : = (0, (1+\delta)^{-1})$, and define
     $G_\delta^+(m) : L^2(I_*) \to L^2(I_*)$ as the operator with
     integral kernel
     $$
\pi^{-1} m^2 e^{2m(\xi + \delta)_+} \int_{B_R(0)} e^{m(zk +
\bar{z} k')}
    d\mu(z), \quad k, k' \in {{ I}_*(\delta)}.
    $$
    Changing the variable $z \mapsto z + \zeta$ in the integral
    defining the kernel of ${\mathcal G}_{m,\delta}(B_R(\zeta);
     I_+(\delta))$ (see \eqref{jun22}), and after that changing
     the variable $k \mapsto (1+\delta)^2 k$ in $I_*(\delta)$, we
     find that the mini-max principle implies
     \bel{jul64}
     n_+(r;{\mathcal G}_{m,\delta}(B_R(\zeta);
     I_+(\delta))) \leq n_+(r; G_\delta^+((1+\delta)^2m){ )}, \quad r>0,
     \ee
     {with $\xi=  {\rm Re} \, \zeta$.}
     Further, a simple explicit calculation yields
     $$
\int_{B_R(0)} e^{m(zk + \bar{z} k')} d\mu(z) = \pi R^2
\sum_{q=0}^{\infty} \frac{(m^2 R^2 k k')^q}{(q!)^2 (q+1)}.
$$
Therefore, the quadratic form of the operator $G_\delta^+(m)$ can
be written as
 \bel{jun38}
\langle G_\delta^+(m) u, u\rangle_{L^2(I_*)} = e^{2m(\xi +
\delta)_+} \sum_{q=0}^{\infty} \frac{(m R)^{2q+2}}{(q!)^2 (q+1)}
|\tilde{u}_q|^2
 \ee
where
$$
\tilde{u}_q = \int_{I_*(\delta)} k^q u(k) dk, \quad u \in
L^2(I_*(\delta)), \quad q \in {\mathbb Z}_+.
$$
Let $\{p_q(k)\}_{q \in {\mathbb Z}_+}$ be the system of polynomials orthonormal in $L^2(I_*(\delta))$, obtained by the Gram-Schmidt procedure from
$\{k^q\}_{q \in {\mathbb Z}_+}$, $k \in I_*(\delta)$. Then,
$$
k^q = \sum_{l=0}^q \theta_{q,l} p_l(k), \quad k \in I_*(\delta), \quad q \in {\mathbb Z}_+,
$$
with appropriate $\theta_{q,l}$; in what follows we set $\theta_{q,l} = 0$ for $l>q$. Put
$$
u_q = \int_{I_*(\delta)} p_q(k) u(k) dk, \quad u \in
L^2(I_*(\delta)), \quad q \in {\mathbb Z}_+.
$$
Then we have
    \bel{jun38a}
    \tilde{u}_q =  \sum_{l=0}^{\infty} \theta_{q,l} u_l, \quad q \in {\mathbb Z}_+,
    \ee
    and
     \bel{jun39}
     \|u\|^2_{L^2(I_*(\delta))} = \sum_{q=0}^{\infty} |u_q|^2.
     \ee Further, it is easy to check that
     $$
     \sum_{q=0}^{\infty} \sum_{l=0}^{\infty} \theta_{q,l}^2 = \sum_{l=0}^{\infty}\int_{I_*(\delta)} k^{2l} dk =  \sum_{l=0}^{\infty} \frac{(1+\delta)^{-2l-1}}{2l+1} < \infty.
     $$
     Therefore, the operator $\Theta: l^2({\mathbb Z}_+) \to l^2({\mathbb Z}_+)$ defined by
     $$
     (\Theta {\bf u})_q = \sum_{l=0}^{\infty} \theta_{q,l} u_l, \quad q \in {\mathbb Z}_+, \quad {\bf u} = \{u_l\}_{l \in {\mathbb Z}_+},
     $$
     is a Hilbert-Schmidt, and hence bounded operator. Let $\gamma(m): l^2({\mathbb Z}_+) \to l^2({\mathbb Z}_+)$ be the diagonal operator with diagonal entries
     \bel{jun40}
     e^{2m(\xi + \delta)_+} \frac{(mR)^{2q+2}}{(q!)^2 (q+1)}, \quad q \in {\mathbb Z}_+.
     \ee
     Now \eqref{jun38} -- \eqref{jun40} imply
     \bel{jul65}
     n_+(s; G_\delta^+(m)) = n_+(s; \Theta^* \gamma(m) \Theta), \quad s>0.
     \ee
      Evidently,
       \bel{jul66}
      n_+(s; \Theta^* \gamma(m) \Theta) \leq n_+(s; \|\Theta\|^2 \gamma(m)), \quad s> 0.
      \ee
      On the other hand, for any $s>0$ we have
      \bel{jul67}
      n_+(s; \gamma(m)) = \# \left\{q \in {\mathbb Z}_+ \, \left| \, \frac{e^{m(\xi + \delta)_+} (mR)^{q+1}}{q! \sqrt{q+1}} > \sqrt{s}\right.\right\}, \quad s>0.
      \ee
      Applying the Stirling formula
      $$
      q! = (2\pi)^{1/2} (q+1)^{q+1} (q+1)^{-1/2} e^{-q-1} (1 + o(1)), \quad q \to \infty,
      $$
      we find that for each $\varepsilon \in (0,1)$ there exists $q_0 \in {\mathbb Z}_+$ such that
      $$
      \# \left\{q \in {\mathbb Z}_+ \, \left| \, \frac{e^{m(\xi + \delta)_+} (mR)^{q+1}}{q! \sqrt{q+1}} > \sqrt{s}\right.\right\} \leq
      $$
      \bel{jul68}
      \# \left\{q \in {\mathbb Z}_+ \, \left| \, \frac{(\xi + \delta)_+} {e R}  >  \frac{q+1}{eRm} \ln{\left(\frac{q+1}{eRm}\right)} + \frac{\ln{\left(\sqrt{2\pi s} (1-\varepsilon)\right)}}{eRm}\right.\right\} + q_0.
      \ee
      Passing from Darboux sums to Riemann integrals, we find that for each constant $c \in \re$ we have
      $$
      \lim_{m \to \infty} m^{-1}  \# \left\{q \in {\mathbb Z}_+ \, \left| \, \frac{(\xi + \delta)_+} {e R}  >  \frac{q+1}{eRm} \ln{\left(\frac{q+1}{eRm}\right)} + \frac{c}{m}\right.\right\} =
      $$
      \bel{jul69}
      eR \varkappa\left(\frac{(\xi + \delta)_+} {e R}\right).
      \ee
      Putting together \eqref{jul61} -- \eqref{jul64} and \eqref{jul65} -- \eqref{jul69}, we get
      $$
      \limsup_{\lambda \downarrow 0}
    |\ln{\lambda}|^{-1/2} n_+(r;
    \Gamma_{\delta}^+(\sqrt{b|\ln{\lambda}|})^* \Gamma_{\delta}^+(\sqrt{b|\ln{\lambda}|})) \leq
    (1+\delta)^2 \sqrt{b} eR \varkappa\left(\frac{(\xi + \delta)_+} {e R}\right)
    $$
    for any $\delta > 0$. Letting $\delta \downarrow 0$ and optimizing with respect to $\xi$ and  $R$, we obtain \eqref{jul60}. \\

{\large \bf Acknowledgements}. The authors were partially
supported by the Chilean Science Foundation {\em Fondecyt} under
Grant 1090467, and by the Bernoulli Center, EPFL, Lausanne, where
a part of this work was done within the framework of the Program
``{\em Spectral and Dynamical Properties of Quantum Hamiltonians},
January - June 2010.\\ {V. Bruneau was also partially supported by
the {\em Agence Nationale de la Recherche} under Grant NONAa
(ANR-08-BLAN-0228).}\\ P. Miranda and G. Raikov were also
partially supported by {\em N\'ucleo Cient\'ifico ICM} P07-027-F
``{\em Mathematical Theory of Quantum and Classical Magnetic
Systems"}. \\

{\sc Vincent Bruneau }\\
Universit\'e Bordeaux I, Institut de Math\'ematiques de
Bordeaux,\\
UMR CNRS 5251,  351, Cours de la Lib\'eration, 33405 Talence,
France\\
E-Mail: vbruneau@math.u-bordeaux1.fr\\

{\sc Pablo Miranda}\\
Departamento de Matem\'aticas,
Facultad de Ciencias,\\
Universidad de Chile, Las Palmeras 3425, Santiago de Chile\\
E-Mail: pmirandar@ug.uchile.cl\\

{\sc Georgi Raikov}\\
 Departamento de Matem\'aticas, Facultad de
Matem\'aticas,\\ Pontificia Universidad Cat\'olica de Chile,
Vicu\~na Mackenna 4860, Santiago de Chile\\
E-Mail: graikov@mat.puc.cl

\end{document}